\newtheorem{example}{Example}
\newtheorem{defn}{Definition}
\newtheorem{thm}{Theorem}
\newtheorem{lem}[thm]{Lemma}
\newtheorem{cor}[thm]{Corollary}
\newtheorem{note}{Remark}
\newtheorem{conj}{Conjecture}
\newtheorem{const}{Construction}
\newcommand{\bit}{\begin{itemize}}
	\newcommand{\eit}{\end{itemize}}
\newcommand{\bcor}{\begin{cor}}
	\newcommand{\ecor}{\end{cor}}
\newcommand{\beq}{\begin{equation}}
	\newcommand{\eeq}{\end{equation}}
\newcommand{\beqn}{\begin{equation*}}
	\newcommand{\eeqn}{\end{equation*}}
\newcommand{\bea}{\begin{eqnarray}}
	\newcommand{\eea}{\end{eqnarray}}
\newcommand{\bean}{\begin{eqnarray*}}
	\newcommand{\eean}{\end{eqnarray*}}
\newcommand{\ben}{\begin{enumerate}}
	\newcommand{\een}{\end{enumerate}}
\newcommand{\bdefn}{\begin{defn}}
	\renewcommand\footnotemark{}
	\newcommand{\vzero}{\mbox{$V_0$}}
	\newcommand{\vone}{\mbox{$V_1$}}
	\newcommand{\vtwo}{\mbox{$V_2$}}
	\newcommand{\vi}{\mbox{$V_i$}}
	\newcommand{\vs}{\mbox{$V_{s}$}}
	\newcommand{\vsmone}{\mbox{$V_{s-1}$}}
	\newcommand{\vinfty}{\mbox{$V_{\infty}$}}
	\newcommand{\eone}{\mbox{$E_1$}}
	\newcommand{\etwo}{\mbox{$E_2$}}
	\newcommand{\ej}{\mbox{$E_j$}}
	\newcommand{\es}{\mbox{$E_s$}}
	\newcommand{\espone}{\mbox{$E_{s+1}$}}
	\newcommand{\esmone}{\mbox{$E_{s-1}$}}
	\newcommand{\jzero}{\mbox{${\cal J}_0$}}
	\newcommand{\nbase}{\mbox{$N_{\text{base}}$}}
	\newcommand{\naux}{\mbox{$N_{\text{aux}}$}}
\newcommand{\aux}{\mbox{${\cal A}$}}
	\newcommand{\auxi}{\mbox{${\cal A}_i$}}
\newcommand{\hinf}{\mbox{$H_{\infty}$}}
		\newcommand{\hzero}{\mbox{$H_0$}}
	\newcommand{\calg}{\mbox{$\mathcal{G}$}}
		\newcommand{\gone}{\mbox{${\cal G}_1$}}
		\newcommand{\gzero}{\mbox{${\cal G}_0$}}
	\newcommand{\gtwo}{\mbox{${\cal G}_2$}}
	\newcommand{\gi}{\mbox{${\cal G}_i$}}
	\newcommand{\gs}{\mbox{${\cal G}_s$}}
		\newcommand{\gsmone}{\mbox{${\cal G}_{s-1}$}}
	\newcommand{\gsub}{\mbox{${\cal G}_{\text{sub}}$}}
       \newcommand{\ginf}{\mbox{${\cal G}_{\infty}$}}
	\newcommand{\gbase}{\mbox{${\cal G}_{\text{base}}$}}
	\newcommand{\calc}{\ensuremath{\mathcal{C}}}
		\newcommand{\nkrt}{\ensuremath{(n,k,r,t)_{\text{seq}}}}
	\newcommand\scalemath[2]{\scalebox{#1}{\mbox{\ensuremath{\displaystyle #2}}}}
\begin{document}
	\sloppy
	\title{A Tight Rate Bound and Matching Construction for Locally Recoverable Codes with Sequential Recovery From Any Number of Multiple Erasures}
	\author{S. B. Balaji, Ganesh R. Kini and P. Vijay Kumar
 \thanks{S. B. Balaji and P. Vijay Kumar are with the Department of ECE, Indian Institute of Science, Bangalore: 560 012, India (email: \{balaji.profess, pvk1729\}@gmail.com).}
\thanks{Ganesh R. Kini is with Maxlinear Inc, Bangalore (email: kiniganesh94@gmail.com).}
\thanks{P. Vijay Kumar is also a Visiting Professor at the University of Southern California. This research is supported in part by the National Science Foundation under Grant 1421848 and in part by an India-Israel UGC-ISF joint research program grant. The work of S. B. Balaji is supported by a TCS Research Scholarship. } 
\thanks{This paper was presented in part at the 2017 IEEE International Symposium on Information Theory \cite{BalKinKum_ISIT} and in part in 2018 National Conference on Communications \cite{BalKinKum_NCC}.}}

	\maketitle
	\newpage
	

\begin{abstract}
By a locally recoverable code (LRC), we will in this paper, mean a linear code in which a given code symbol can be recovered by taking a linear combination of at most $r$ other code symbols. The parameter $r$ is typically, significantly smaller than the dimension $k$ of the code, hence a code with locality enables recovery by contacting a smaller number of helper nodes.   A natural extension is to the local recovery of a set of $t$ erased symbols. There have been several approaches proposed for the handling of multiple erasures.  The approach considered here, is one of sequential recovery meaning that the $t$ erased symbols are recovered in succession, each time contacting at most $r$ other symbols for assistance in recovery.   Under the constraint that each erased symbol be recoverable by contacting at most $r$ other code symbols, this approach is the most general and hence offers maximum possible code rate. We characterize the maximum possible rate of an LRC with sequential recovery for any $r \geq 3$ and $t$. We do this by first deriving an upper bound on code rate and then going on to construct a {\em binary} code that achieves this optimal rate.   The upper bound derived here proves a conjecture made earlier relating to the structure (but not the exact form) of the rate bound.   Our approach also permits us to deduce the structure of the parity-check matrix of a rate-optimal LRC with sequential recovery. 

The parity-check matrix in turn, leads to a graphical description of the code.  The construction of a binary code having rate achieving the upper bound derived here makes use of this description.  Interestingly, it turns out that a subclass of binary codes that are both rate and block-length optimal, correspond to graphs known as Moore graphs that are regular graphs having the smallest number of vertices for a given girth.   A connection with Tornado codes is also made in the paper. 
\end{abstract}

\begin{IEEEkeywords}
	Distributed storage, locally repairable codes, parallel recovery, sequential recovery.
\end{IEEEkeywords}

	\section{Introduction}\label{sec:introduciton}

	Large-scale data centers such as those operated by Google, Amazon and Microsoft  are examples of distributed storage systems (DSS) that have become commonplace in the current information era and which play an important role in our everyday computational and data-retrieval tasks.    Apart from the need to store data in reliable fashion, a data center also seeks to minimize the storage overhead arising from the use of redundancy to achieve reliability.     The industry is increasingly turning towards the use of erasure codes for reducing this storage overhead while maintaining reliability given the current explosion in the amount of data to be stored and the cost of storing this data reliably.  The employment of erasure coding in Hadoop 3.0 in the form of the Hadoop Distributed File System - Erasure Coding (HDFS-EC) is an indication of this trend.   Maximum Distance Separable (MDS) codes such as Reed-Solomon (RS) codes are commonly employed since MDS codes minimize storage overhead for a given level of reliability. 
	
Yet another challenge faced by a data center is the relatively-frequent occurrence of individual node or storage-unit failure.   The conventional repair of an RS code is inefficient in terms of using resources when it comes to node repair.  Two approaches to coding have been proposed, to enable more efficient node repair in the case of single-node failures.  These are regenerating codes by Dimakis et. al. ~\cite{Dimakis_Regen} and codes with locality (known more commonly as locally repairable codes (LRC) by Gopalan et. al. \cite{GopHuaSimYek}.   Regenerating codes attempt to minimize the {\em amount of data download} needed to carry out node repair while codes with locality (also known as locally repairable codes or LRCs) aim to minimize the {\em number of nodes accessed} during node repair. The present paper deals with LRCs.  While the initial focus on LRC was on the repair under single-node failures, there is interest in multiple-node failure as well. This is because (i) simultaneous node failures can and do take place due to the increasing trend towards replacing expensive servers with low-cost commodity servers, (ii) some nodes in the system can be temporary unavailable either because they are down for maintenance or else are busy serving other demands placed on the data stored in these nodes.  In the present paper, we will focus on LRC for the repair of multiple node failures, i.e., LRC designed for recovery from multiple erasures.   


\subsection{Background on Single-Erasure LRC} \label{sec:LRC_single}

The notion of codes with locality was introduced in \cite{GopHuaSimYek} (See also \cite{HanMon,HuaChenLi,OggDat}) to design codes such that the number of nodes accessed to repair a failed node is much smaller than the dimension $k$ of the code . 
Let $\mathcal{C}$ be an $[n, k, d_{\min}]$ linear code over $\mathbb{F}_q$ having block length $n$, dimension $k$ and minimum distance $d_{\min}$. The $i^{\text{th}}$ code-symbol $c_i$, $1 \leq i \leq n$, of $\mathcal{C}$ is said to have locality $r$ if there exists $c_{i_1},\ldots,c_{i_{\ell}}$, $\ell \leq r$ with $i_1,\ldots,i_{\ell}$ distinct from $i$ such that $c_i = \sum_{j=1}^{\ell} a_j c_{i_j}$, $a_j \in \mathbb{F}_q$. If the set of $k$ message symbols in a systematic code have locality $r$ then $\mathcal{C}$ is said to have  information-symbol locality $r$. A code having information-symbol locality $r$ has minimum distance $d_{\min}$ upper bounded~\cite{GopHuaSimYek} by
\begin{equation} \label{eq:gopalan_bound}
d_{\min} \leq   n- k -\left\lceil \frac{k}{r} \right\rceil + 2.
\end{equation}
The pyramid-code construction in~\cite{HuaChenLi} yields optimal codes with information-symbol locality with field size $O(n)$ for all $\{n,k,r\}$.   
Codes with all-symbol locality (also called LRC) in which all code symbols, not just the message symbols, have locality $r$ is also studied in \cite{GopHuaSimYek}. For the case when $(r+1) \mid n$, construction of codes with all-symbol locality achieving the bound in \eqref{eq:gopalan_bound} can be found in \cite{GopHuaSimYek}. However these codes have field size exponential in $n$. In an earlier publication \cite{Reliab}, the authors provided a construction of codes with all-symbol locality having field size of $O(n)$. This construction is based on splitting the rows of the parity-check matrix of an MDS code and yields codes with all-symbol locality having field size $O(n)$ when $n = \lceil \frac{k}{r} \rceil (r+1)$. Optimality of this construction w.r.t \eqref{eq:gopalan_bound} is shown in \cite{PraKamLalKum}. A general construction of codes with all-symbol locality having field size of $O(n)$  achieving the bound \eqref{eq:gopalan_bound} for the case when $(r+1) | n$ can be found in \cite{TamBar_LRC}. This construction can be viewed as starting with a Reed-Solomon (RS) code and then restricting attention to a subcode that has all-symbol locality. Also contained in \cite{TamBar_LRC} is a construction of codes with all-symbol locality having field size of $O(n)$ whose minimum distance differs by at most $1$ from the bound given in \eqref{eq:gopalan_bound} when $r \nmid k$, $n \neq 1 \pmod{r+1}$. In \cite{ErnWesHol}, the authors show that codes with all-symbol locality whose minimum distance differs by atmost $1$ from the bound given in \eqref{eq:gopalan_bound} can be constructed for any $n,k,r$. However the construction provided in \cite{ErnWesHol} has field size exponential in $n$.

 When $(r+1) \nmid n$, the bound \eqref{eq:gopalan_bound} is not achievable in general. The bound in \eqref{eq:gopalan_bound} has been improved upon and tighter upper bounds can be found in \cite{PraLalKum,WanZhaI,ZhaWanGe,MehArd}. Constructions of LRC with field size exponential in $n$ achieving the tighter upper bound on minimum distance derived in \cite{WanZhaI} for the case of $n_1 > n_2$ where $n_1 = \lceil \frac{n}{r+1} \rceil$, $n_2=n_1(r+1)-n$ can also be found in \cite{WanZhaI}. Similarly when $n \mod (r+1) \geq k+\lceil \frac{k}{r}  \rceil \mod (r+1)$, a tighter upper bound on minimum distance of LRC and LRC achieving the upper bound can be found in \cite{MehArd}. In  \cite{PapDim} and \cite{ForYek}, the authors study codes with locality in the setting of nonlinear codes. In \cite{ForYek}, authors show that same upper bound as in \eqref{eq:gopalan_bound} continues to hold for non-linear codes. The implementation and performance evaluation of codes with locality in the context of Windows Azure storage can be found in \cite{Hua_etal}. The implementation and performance evaluation of codes with locality is carried out in the Hadoop Distributed File System in \cite{SatAst}.

The notion of maximal recoverable codes (MRC) in the context of codes with locality was introduced in \cite{CheMinHuaChenLIJin}. Roughly speaking, an MRC is an LRC which is as MDS as possible. More precisely, given a code with locality, let us define an admissible pattern as a set of code symbols which are missing at least one code symbol from each local code.   An MRC can then be defined as an LRC having the property that the code restricted to the co-ordinates of any admissible pattern, is an MDS code. Some constructions of MRC can be found in \cite{GopalanHJY13,BlaHafHet,HuY16,GokOzan,BalKUmMax}.   The problem of constructing MRCs with $O(n)$ field size remains open. Based on the above discussion, it can be seen that the problem of constructing optimal LRC for recovery from single erasures is largely settled and the focus has shifted to LRC that can handle multiple erasures.

\subsection{Background on LRC for Multiple Erasures} \label{sec:LRC_multiple} 

There are multiple approaches to designing LRC that provide deterministic local recovery from multiple erasures. The principal approaches that  have been studied in the literature are discussed below.  See Fig.~\ref{fig:LRCClassification} for a pictorial representation of the different approaches.

\noindent {\em $(r,\delta)$ Codes}

An early notion was that of codes with $(r, \delta)$ locality, introduced in \cite{PraKamLalKum} as a generalization of LRC for single erasures.    Here each local code is designed to be able to be able to recover from a larger number of erasures.  
An $[n,k]$ code \calc\  is said to have $(r, \delta)$ locality if for every $i \in [n]$, there exists a subset $S_i \subseteq [n]$ containing $i$ satisfying: 
\bea
\label{eq:rdeltaloc}
\dim(\mathcal{C} |_{S_i} ) \le r, \ \ \ \ \  d_{\min}(\mathcal{C} |_{S_i}) \ge \delta,
\eea
where $\mathcal{C} |_{S}$ (referred to as local code) is the code obtained by restricting $\mathcal{C}$ to the co-ordinates in $S$. Thus codes with $(r,\delta)$ locality replaces local codes which were single parity check codes with codes of minimum distance at least $\delta$ and dimension at most $r$. It can be seen that in codes with $(r,\delta)$ locality, any $\delta-1$ erasures in the local code $\mathcal{C}|_{S_i}$ can be recovered by contacting at most $r$ other code symbols in $\mathcal{C}|_{S_i}$. This follows as any $n-(\delta-1)$ columns of the generator matrix of the code $\mathcal{C}|_{S_i}$ has rank equal to $\dim(\mathcal{C} |_{S_i} )$ ($\leq r$). Upper bounds on minimum distance and optimal constructions having field size of $O(n)$ for codes with $(r,\delta)$ locality can be found in  \cite{PraKamLalKum,SonDauYueLi,TamBar_LRC,BinShuJieFan}. Codes with $(r,\delta)$ locality can guarantee local recovery from any $t$ erasures, if one chooses $\delta \geq t+1$. However this is inefficient in terms of rate. But recovery from a set of $t$ erased code symbols $E$ using at most $r$ code symbols (local recovery) for recovering each erased symbol can be achieved when $|E \cap S_i| \leq \delta-1$ for each $i$. 
Hence an alternative view is that codes with $(r,\delta)$ locality can be regarded as offering probabilistic guarantees of local recovery from $\leq t$ erasures. A generalization of codes with $(r,\delta)$ locality, termed as codes with hierarchical locality is introduced in \cite{SasAgaKumH}. In general for recovering from $t$ erasures, the average number of symbols contacted per erasure is smaller in codes with hierarchical locality compared to codes with $(r, \delta)$ locality. 

\noindent {\em LRC with Sequential-Recovery:}  An $(n,k,r,t)$ LRC with sequential-recovery (abbreviated as seq-LRC and denoted by $(n,k,r,t)_{\text{seq}}$) is an $[n,k]$ linear code $\calc$ having the following property: There is a permutation $(c_{\ell_1}, c_{\ell_2}, \cdots, c_{\ell_s})$ of any given set of $s \leq t$ erased symbols such that for every $j \in [s]$, there exists a subset $R_{j} \subseteq [n]$ satisfying (i) $|R_{j}| \le r$ , (ii) $\ R_{j} \cap \{\ell_{j}, \ell_{j+1}, \cdots, \ell_s \} = \phi$,  and 
\begin{eqnarray} \label{eq:locality}
\text{(iii)  \  }  \ c_{\ell_{j}} & = & \sum \limits_{i \in R_{j}} a_{i} c_{i}, \ a_{i} \in \mathbb{F}_q  .
\end{eqnarray} 
The definition guarantees that any set of $s \leq t$ erased code symbols $c_{\ell_1}, c_{\ell_2}, \cdots, c_{\ell_s}$, for $1 \leq s \leq t$ can be recovered by an $(n,k,r,t)_{\text{seq}}$ code by using \eqref{eq:locality} to recover the symbols $c_{\ell_{j}}, \ j=1,2,\cdots,s$, in succession. A little thought will show that LRC with sequential recovery, form the most general class of codes which can recover from a set of $t$ erased symbols by contacting small number of symbols for the recovery of each erased symbol. For this reason, LRC with sequential-recovery have the maximum possible rate and minimum distance among the class of LRC for multiple erasures.

\begin{figure}[ht]
	\centering
	\includegraphics[width=4in]{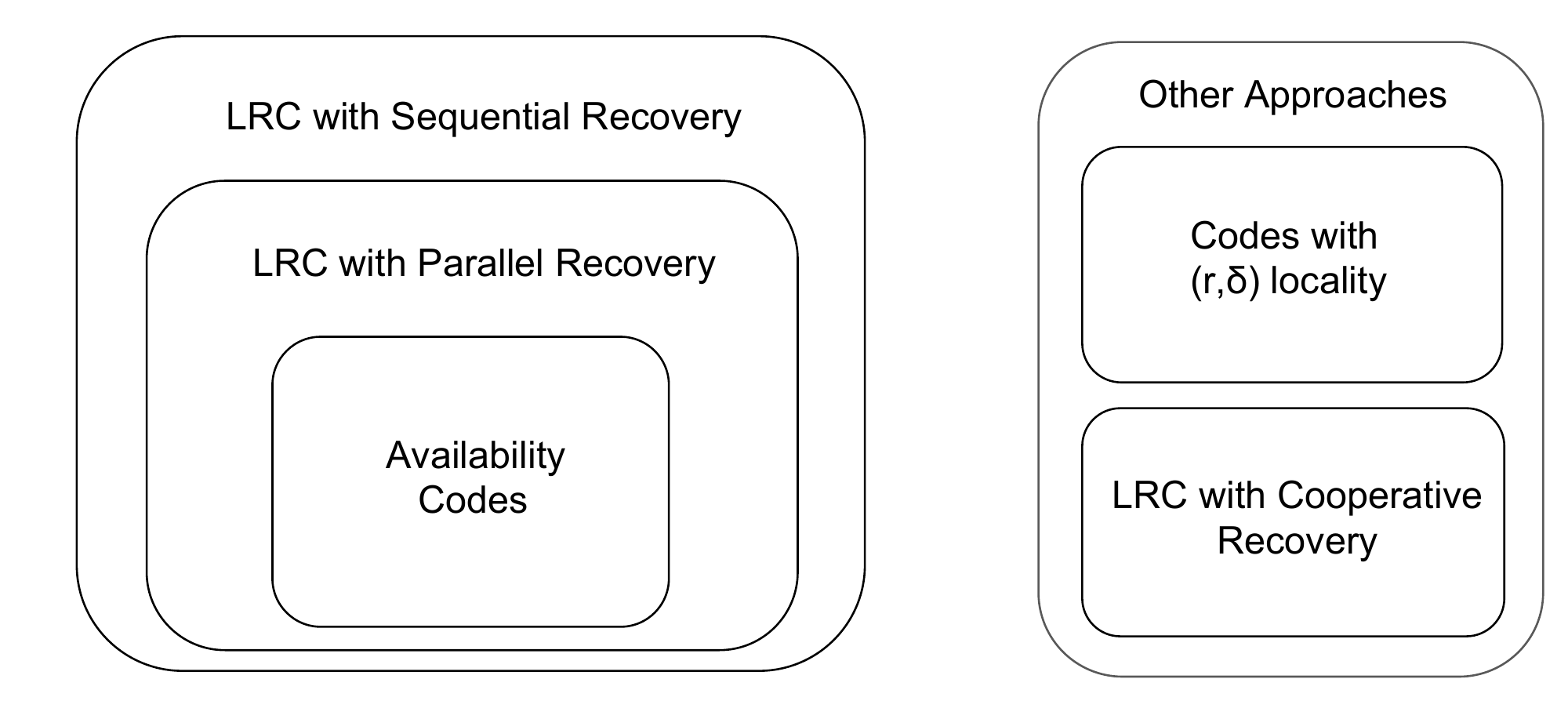}
	\caption{The various code classes corresponding to different approaches to recovery from multiple erasures.}
	\label{fig:LRCClassification}
\end{figure}

\noindent {\em LRC with Parallel-Recovery:} If we replace the condition (ii) in \eqref{eq:locality} by the more constrained requirement $R_{j} \cap \{\ell_1, \ell_2, \cdots, \ell_s \} = \phi$ in the definition of the $(n,k,r,t)_{\text{seq}}$ code, then the LRC will be called as a LRC with \noindent {\em parallel} recovery, abbreviated as par-LRC and denoted as $(n,k,r,t)_{\text{par}}$. Clearly the class of $(n,k,r,t)_{\text{seq}}$ codes contains the class of par-LRC.  From an implementation perspective, par-LRC may be preferred as the erased code symbols can be recovered in parallel. However, the advantage of parallel recovery comes with a rate penalty in comparison with an $(n,k,r,t)_{\text{seq}}$.   This could be a significant downside when the amount of data stored is large and there is a premium placed on minimizing storage overhead.  We note that depending upon the specific code, parallel recovery may need the same code symbol to be used in the recovery of more than one erased code symbol $c_{\ell_{j}}$.  Within industry, there is increased interest in a subclass of par-LRC called availability codes.
\linebreak
\linebreak
\noindent {\em Availability Codes:} An $(n,k,r,t)$ \noindent {\em availability} code (denoted by $(n,k,r,t)_{\text{avl}}$), is an $[n,k]$ linear code such that for each code symbol $c_{\ell}$, $\ell \in [n]$, there exist $t$ repair sets $\{R^{\ell}_j\}_{j=1}^t$ which are pairwise disjoint and of cardinality $|R^{\ell}_j| \leq r$ with $R^{\ell}_j \subseteq [n] \setminus \{\ell\}$ such that for every $j, 1 \leq j \leq t$, $c_{\ell}$ can be written in the form:
\bean
c_{\ell} = \sum\limits_{i \in R^{\ell}_j} a_{i} c_{i}, \ \ a_{i} \in \mathbb{F}_q.
\eean
An availability code can generate $t$ copies of a desired code symbol, i.e., make $t$ copies available, by calling upon $t$ disjoint recovery sets.   The following upper bound on the rate of an $(n,k,r,t)_{\text{avl}}$ code was given in \cite{TamBarFro}. 
	\begin{thm} \cite{TamBarFro} If \calc\ is an $(n,k,r,t)_{\text{avl}}$ code, then: 
		\bea
		\frac{k}{n} \leq \frac{1}{\prod_{j = 1}^{t}(1+\frac{1}{jr})}. \label{TamoBargRate}
		\eea
	\end{thm}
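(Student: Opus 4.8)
The plan is to pass to the generator matrix, isolate an elementary local rank‑deficiency fact, and then bootstrap it to the bound on $k/n$ by an induction on $t$. I would fix a $k\times n$ generator matrix $G$ of \calc\ with columns $g_1,\dots,g_n$ spanning $\mathbb{F}_q^k$, and for $S\subseteq[n]$ write $\rho(S)$ for the rank of $\{g_j:j\in S\}$; thus $\rho$ is monotone and submodular with $\rho(S)\le|S|$ and $\rho([n])=k$, and puncturing \calc\ at a set $P$ produces a code of length $n-|P|$ and dimension $\rho([n]\setminus P)$. The hypothesis that \calc\ is \navl\ says that for every $i$ there are pairwise disjoint repair sets $R_i^1,\dots,R_i^t\subseteq[n]\setminus\{i\}$ with $|R_i^l|\le r$ and $g_i\in\mathrm{span}\{g_j:j\in R_i^l\}$; replacing each $R_i^l$ by the support of one fixed such linear expansion, we may also assume $g_i\notin\mathrm{span}\{g_j:j\in R_i^l\setminus\{x\}\}$ for every $x\in R_i^l$. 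One preliminary remark shapes the whole argument: since $i\notin R_i^1$, each column $g_i$ lies in the span of the others, so puncturing at a \emph{single} coordinate never decreases the dimension --- hence an induction on $t$ cannot delete coordinates one at a time but must remove many at once while paying only one level of availability.

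The local lemma is that $B_i:=\{i\}\cup R_i^1\cup\dots\cup R_i^t$ satisfies $|B_i|\le tr+1$ and $\rho(B_i)\le|B_i|-t$. Indeed each expansion $g_i=\sum_{j\in R_i^l}a^{(l)}_j g_j$ is a codeword $u^{(l)}\in\calcp$ supported exactly on $\{i\}\cup R_i^l$, and for a vanishing combination $\sum_l\lambda_l u^{(l)}=0$, evaluating at a coordinate of $R_i^{l_0}$ --- which lies in no other $R_i^l$ and differs from $i$ --- forces $\lambda_{l_0}=0$; so $u^{(1)},\dots,u^{(t)}$ are linearly independent. Since any dual codeword supported inside $B_i$ restricts to an element of the right null space of the $k\times|B_i|$ submatrix $G_{B_i}$, and this restriction is injective, that null space --- which has dimension $|B_i|-\rho(B_i)$ --- must have dimension at least $t$.

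Now I induct on $t$. For $t=1$ one recovers the classical rate bound $k/n\le r/(r+1)$ directly from the local lemma: build a chain $\emptyset=S_0\subsetneq S_1\subsetneq\dots$ by, while $\rho(S_a)<k$, choosing $i$ with $g_i\notin\mathrm{span}\{g_j:j\in S_a\}$, setting $W:=R_i^1\setminus S_a$, $w:=|W|\in[1,r]$, and $S_{a+1}:=S_a\cup\{i\}\cup W$; then $\Delta|S_a|=w+1$, while $g_i\in\mathrm{span}(S_a\cup W)$ gives $\Delta\rho(S_a)\le w$, so every step has size‑to‑rank ratio $\ge(w+1)/w\ge(r+1)/r$ and hence $n\ge\tfrac{r+1}{r}\,k$. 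For the inductive step one wants to exhibit, inside an \navl\ code, a punctured code that is $(n',k',r,t-1)_{\text{avl}}$ with $k'=k$ and $n'\le\tfrac{tr}{tr+1}\,n$; equivalently, a puncturing set $P$ with $|P|\ge n/(tr+1)$ that meets at most one repair set of every symbol outside $P$, so that each surviving symbol retains $t-1$ disjoint repair sets of size $\le r$. Granting this, the induction hypothesis yields
\[
\frac{k}{n}\ \le\ \frac{tr}{tr+1}\cdot\frac{k}{n'}\ \le\ \frac{tr}{tr+1}\prod_{j=1}^{t-1}\Big(1+\frac{1}{jr}\Big)^{-1}\ =\ \prod_{j=1}^{t}\Big(1+\frac{1}{jr}\Big)^{-1}.
\]

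The hard part is exactly this existence claim: a coordinate set $P$ of density $\ge 1/(tr+1)$ colliding with at most one of the $t$ disjoint repair sets of each symbol not in $P$. Since repair sets of different symbols overlap in uncontrolled ways, and $t$ may be far smaller than $r$ --- so a small $P$ could a priori slice through all $t$ repair sets of some symbol --- this requires a careful averaging (or a weighted greedy construction of $P$) that simultaneously balances, over all symbols, how $P$ is distributed among their repair sets, and it is this balancing that produces the harmonic factor $1+\tfrac1{tr}$ rather than a coarser constant. The details are carried out in \cite{TamBarFro}.
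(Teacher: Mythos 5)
This statement is quoted in the paper from \cite{TamBarFro} without proof, so the only question is whether your argument stands on its own --- and it does not. The inductive step rests entirely on the existence claim you yourself flag as ``the hard part'': a puncturing set $P$ with $|P|\geq n/(tr+1)$ that meets at most one of the $t$ disjoint repair sets of every coordinate outside $P$. You give no construction or averaging argument for it, and deferring it to \cite{TamBarFro} is circular, since that existence statement (if true) essentially \emph{is} the theorem: the simple counting you hint at (each coordinate lies in at most $rt$ repair sets on average, so a greedy choice of $P$ ``blocks'' few others) only controls the total number of incidences between $P$ and repair sets, not the requirement that no single surviving symbol has two or more of its repair sets hit, and it is exactly this per-symbol control that a proof must supply. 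As written, the induction is a reduction of the theorem to an unproved combinatorial statement of comparable strength, and it is not even clear that a $P$ with precisely these properties exists for every availability code; the known proofs of \eqref{TamoBargRate} proceed by a more delicate recursive rank/counting argument rather than by exhibiting such a clean one-shot puncturing.

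Two further points. First, your claim $k'=k$ for the punctured code needs its own justification: puncturing at $P$ preserves dimension iff no nonzero codeword is supported inside $P$, which your stated property of $P$ (a condition on symbols \emph{outside} $P$) does not guarantee; one would need, e.g., that every coordinate of $P$ also retains a repair set disjoint from $P$. Second, the ``local lemma'' you prove (that $B_i=\{i\}\cup R_i^1\cup\dots\cup R_i^t$ satisfies $\rho(B_i)\leq |B_i|-t$) is correct but is never used in the inductive step, and for $t=1$ you in fact use a separate chain argument; so the lemma does no work toward closing the gap. The base case and the arithmetic of the induction are fine, but without a proof of the puncturing lemma the argument is incomplete.
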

Fig~\ref{fig:availability} compares the upper bound on rate of an $(n,k,r, t=10)_{\text{avl}}$ code given in \eqref{TamoBargRate} with an achievable upper bound on rate of an $(n,k,r, t=10)_{\text{seq}}$ code given in \eqref{Thm1} in this paper.  As can be seen LRC with sequential recovery can be designed with significantly higher rate in comparison with availability codes.  This is not surprising since $(n,k,r,t)$ availability codes are a subclass of par-LRC.   This last statement follows because in an availability code, for any given set of $t$ erased symbols, there will be at least one repair set for every erased code symbol with all the symbols in the repair set unerased.   	
	\begin{figure}[h!]
			\begin{center}
				\includegraphics[width=5in]{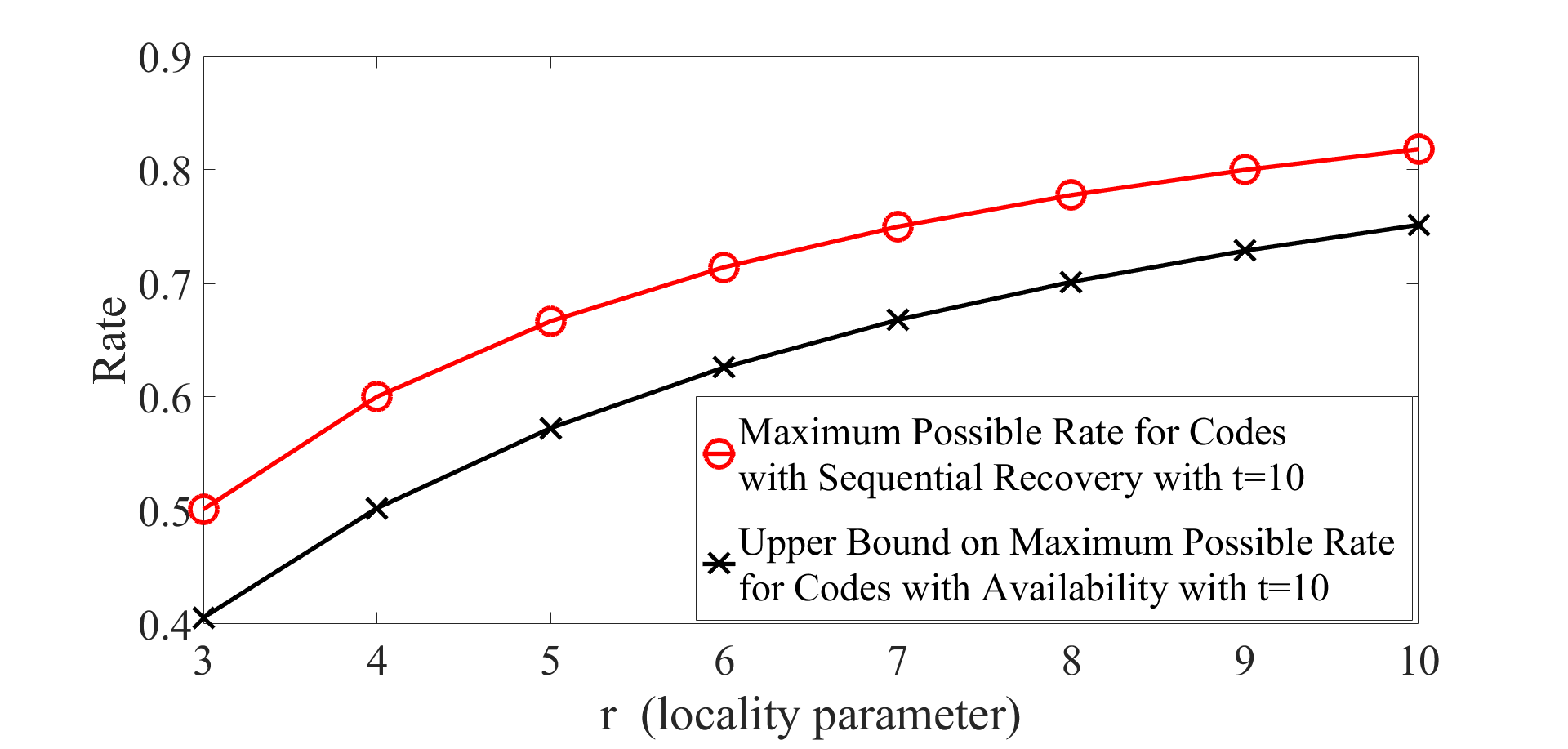} 
				\caption{Comparison of maximum possible rate of LRC with sequential recovery given in \eqref{Thm1} in this paper and an upper bound on maximum possible rate of availability codes given in \eqref{TamoBargRate} for $t=10$.}
				\label{fig:availability}
			\end{center}
		\end{figure}
\linebreak
\linebreak
\noindent {\em LRC with Cooperative Recovery:} An $(n,k,\rho,t)$ LRC with \noindent {\em cooperative} recovery is an $[n,k]$ linear code such that for each set $\{c_{j_1}, c_{j_2}, \cdots, c_{j_s}\}$, $1 \leq s \le t$ of erased code symbols, there exists a set $\{c_{i_1}, c_{i_2}, \cdots, c_{i_{\rho}}\}$ of $\rho$ unerased code symbols 
(i.e., $j_a \ne i_b$ for any $a, b$) such that for each $\ell \in [s]$:
\bean
c_{j_{\ell}} = \sum\limits_{b=1}^{\rho} \gamma_{\ell,b} \ c_{i_b}, \ \ \gamma_{\ell,b} \in \mathbb{F}_q.  
\eean
Thus LRC with cooperative recovery codes may be viewed as codes that aim to minimize the total number of symbols $\rho$ accessed for the recovery of a collection of $t$ erased code symbols, as opposed to minimizing the number of symbols accessed for the recovery of each individual erased code symbol.

	\subsection{Overview of Results}
	
  In this paper (Section \ref{sec:tight_bound}), we derive an upper bound on the maximum possible rate of a seq-LRC \calc\ for any $r \geq 3$, $t$.   We then make the observation that the parity-check matrix of a seq-LRC that achieves this upper bound on rate must necessarily possess a certain sparse, staircase-like form.   The form of the parity-check matrix is not sufficient however, to guarantee that the resultant code will be able to recover sequentially from $t$ erasures.  The structure of the p-c (parity check) matrix leads to a graphical description of the code \calc\ (Section \ref{sec:graphical_rep}).   The structure of parity-check matrix does not however,   fully specify the graph \calg.  A scale factor $a_0$ that determines the total number of vertices in the graph remains unspecified, as are certain edge connections.   Thus the task in code construction is to identify a suitable $a_0$ and nail down the edge connections, in such a way that the resultant code can recover sequentially from $t$ erasures.  It turns out that if the parameter $a_0$ and the edge connections are chosen so as to ensure that the graph \calg\ has girth $\geq (t+1)$, then the code is guaranteed to always be able to recover sequentially from $t$ erasures  (Section \ref{sec:girth_req}).  Moreover, under this girth condition, the associated rate-optimal code can be chosen to be a binary code.    We show how to construct graphs \calg\ having the desired form and of girth $\geq t+1$ (Section \ref{sec:complete_code}).  This shows the rate bound derived here to be tight and moreover achievable using binary codes.   These results prove a conjecture appearing in \cite{SongCaiYue_L} relating to an upper bound on the rate of a seq-LRC. It turns out that there are certain numerical values of $r,t$ for which the associated graph \calg\ can be chosen to be a Moore graph (Section  \ref{sec:Moore_ch4}).  Moore graphs are regular graphs having the smallest possible number of vertices for a given girth.  Whenever the associated graph is a Moore graph, it turns out that the resultant binary seq-LRC is optimal not only in terms of rate, it also has the smallest possible block length. In Section \ref{sec:high_dimension}, we give some hand-crafted examples of seq-LRC having maximum possible dimension for certain block lengths for $t=4$, $r=3$. We also make a connection with Tornado codes by noting certain structural similarities between the graphs associated with the two classes of codes (Section \ref{sec:Tornado}). 

  
\section{Background on Seq-LRC} \label{sec:background} 
The sequential approach to recovery from multiple erasures was introduced by Prakash et al. \cite{PraLalKum} and has been further investigated in  \cite{RawMazVis,SonYue,SonYue_Prdr,BalPraKum,BalPraKum4Era,SongCaiYue_L,BalKinKum_ISIT,BalKinKum_NCC} as discussed below.

\paragraph{Two Erasures} Seq-LRC with $t=2$ are considered in \cite{PraLalKum} (see also \cite{SonYue}) where a tight upper bound on the rate and a matching construction achieving the upper bound on rate for $t=2$ is provided.  A lower bound on block length and a construction achieving the lower bound on block length for $t=2$ is provided in \cite{SonYue}. 

\paragraph{Three Erasures} Seq-LRC with $t=3$ can be found discussed in \cite{SonYue,SongCaiYue_L,BalPraKum}.  A lower bound on block length as well as a construction achieving the lower bound on block length for $t=3$ appears in \cite{SonYue}. For some values of $k,r$, the lower bound given in \cite{SonYue} is not achievable. For these ranges of $k,r$, a tighter lower bound was derived in \cite{BalPraKum} and a construction which comes close to this lower bound for an infinite set of values of $k,r$ can also be found in \cite{BalPraKum}. A tight upper bound on rate of a seq-LRC with $t=3$, can be found in \cite{SongCaiYue_L}.

\paragraph{More Than $3$ Erasures}  The following conjecture on the maximum achievable rate of an $(n,k,r,t)_{\text{seq}}$ code appeared in \cite{SongCaiYue_L}. 
  	\begin{conj} \cite{SongCaiYue_L} [Conjecture] \label{conj:Song}
  		Let $\cal{C}$ denote an $(n,k,r,t)_{\text{seq}}$ code over a finite field $\mathbb{F}_q$. Let $ m=\lceil \log_r(k) \rceil$. Then an achievable upper bound on $\frac{k}{n}$ is given by:
  		\bean
  		\frac{k}{n} \leq \frac{1}{1+\sum_{i=1}^{m}\frac{a_i}{r^i}}, \text{ where } a_i \geq 0, \ \ a_i \in \mathbb{Z}, \ \ \sum_{i=1}^{m} a_i = t.
  		\eean
  	\end{conj}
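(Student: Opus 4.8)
The plan is to reason about a full-rank parity-check matrix $H\in\mathbb{F}_q^{(n-k)\times n}$ of $\calc$ and to distill, from the sequential-recovery axiom, a sparse ``staircase'' normal form for $H$. Taking $s=1$ in the definition, every code symbol is recoverable from at most $r$ others, so for each coordinate $i$ there is a dual word of Hamming weight at most $r+1$ whose support contains $i$; the first step is to assemble such low-weight dual words (together with extra rows if the rank falls short) into a redundant set of parity checks. From these I would build, greedily, a nested chain of column-index sets $[n]=V_0\supseteq V_1\supseteq V_2\supseteq\cdots$ together with a partition of a selected linearly independent subset of the chosen checks into tiers $D_1,D_2,\dots$. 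Tier $D_1$ consists of weight-$\le(r+1)$ checks whose supports ``peel off'' a first batch of columns $V_0\setminus V_1$; one then observes that a column surviving this peel must still be recoverable when the peeled batch is erased (this is exactly where the cases $s=2,3,\dots$ enter), which produces a second tier $D_2$ of low-weight checks peeling off $V_1\setminus V_2$, supported inside $V_1$, and so on. Establishing that this recursion terminates and that the containments $\text{supp}(D_i)\subseteq V_{i-1}$ together with controlled tier-to-tier overlaps actually hold is the technical core.

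With the staircase in hand the bound becomes a counting exercise. Between consecutive tiers one gets a contraction: each weight-$\le(r+1)$ check in a tier ``explains'' about $r$ coordinates of the preceding tier while contributing essentially one genuinely new coordinate, so the number of columns reaching tier $i+1$ is at most roughly a $1/r$ fraction of those reaching tier $i$. Since the recursion is seeded by the $k$ message degrees of freedom — one tracks ranks, using that the chosen $D_i$ are linearly independent across tiers, rather than raw column counts — tier $i$ must contain at least about $k/r^i$ parity-check rows. Attaching to tier $i$ a nonnegative integer multiplicity $a_i$ that records how many ``parallel copies'' of the $i$-th stage the code is forced to carry, and summing, one obtains $n-k\ \ge\ \sum_{i\ge1}a_i\,k/r^i$, i.e. $\frac{k}{n}\le\big(1+\sum_i a_i/r^i\big)^{-1}$. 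Once the surviving count drops below $1$, which happens after $m=\lceil\log_r k\rceil$ contractions, no further tier can be nonempty, so the sum truncates at $i=m$ as claimed. Finally, a separate argument on how erasures propagate through the staircase — each unit of ``recovery depth'' along a chain of peeled columns consumes one of the $t$ guaranteed sequential recoveries — forces $\sum_i a_i=t$. The conjecture asserts only this structural form; pinning down the optimal vector $(a_i)$ in terms of $r$ and $t$ is the content of the paper's main rate theorem and is not needed here.

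The step I expect to dominate the effort is the extraction of the staircase normal form. The sequential-recovery property is quantified over all erasure patterns of size $\le t$ with an existentially chosen recovery order, and turning this into a single global decomposition of $H$ with clean tier containments requires an inductive construction that, at each stage, selects an adversarial sub-pattern forcing the next tier of low-weight checks to exist while simultaneously bounding how those checks can overlap earlier tiers — otherwise the contraction factor $1/r$ degrades and the telescoping collapses to a trivial bound like $1-1/r$. A second, subtler point is the bookkeeping that makes the telescoping factor be $k$ rather than $n$: this is handled by carrying ranks through the tiers and by separating message columns from parity columns, and it is also where the truncation index $m=\lceil\log_r k\rceil$ naturally appears. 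Once the normal form and these rank bounds are secured, the remaining counting and the determination that the multiplicities sum to $t$ are comparatively routine.
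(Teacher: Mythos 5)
Your proposal leaves two genuine gaps, one of scope and one of substance. First, the conjecture claims an \emph{achievable} upper bound, and you explicitly set aside the identification of the coefficients $a_i$ and any matching construction as ``not needed here.'' Without exhibiting a specific admissible vector $(a_i)$ and a code attaining the corresponding rate, the achievability half of the statement is simply not addressed; in the paper this half is the whole of Sections \ref{sec:graphical_rep}--\ref{sec:complete_code} (girth-$(t+1)$ graphs built from a colored base graph expanded by perfect matchings of an auxiliary graph), on top of the bound of Theorem \ref{thm:rate_both}, and it is what pins the coefficients to $a_i=2$ for $i\le s$ with $a_{s+1}=2$ or $1$ according to the parity of $t$, where $s=\lfloor\frac{t-1}{2}\rfloor$.

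Second, the upper-bound half of your plan is not executable as written, and its quantitative heart is wrong in a way that matters. The step you yourself flag as ``the technical core''---turning the erasure-pattern quantifier into a global staircase decomposition with clean tier containments---is precisely the content of the paper's Lemma \ref{claim1}, proved by an induction that uses $d_{\min}\ge t+1$ to exclude short linear dependencies among columns; you do not carry it out, and your counting is only stated ``about $r$''/``roughly $1/r$,'' which cannot yield the exact inequality $n-k\ge\sum_i a_i k/r^i$ with integer $a_i$ summing to exactly $t$. More importantly, your accounting ``each unit of recovery depth consumes one of the $t$ guaranteed sequential recoveries'' allocates one unit of the budget per tier, which would produce roughly $t$ tiers with $a_i=1$; that gives a bound of the conjectured algebraic shape but strictly weaker than the true maximum rate (e.g.\ for $t=2$ it gives $\frac{r^2}{r^2+r+1}$ versus the tight $\frac{r}{r+2}$), hence not achievable. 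The tight bound arises because the dependence argument consumes the budget two columns at a time, forcing about $t/2$ stages of weight $2$ (with an odd/even boundary correction), and extracting this requires the exact row/column weight identities \eqref{Ineq1}--\eqref{Ineq_star1} (or the linear program of Appendix \ref{app:linprog}), none of which your sketch produces. The claim that tier $i$ must contain at least $k/r^i$ rows likewise needs a rank argument you have not supplied, while the truncation at $m=\lceil\log_r k\rceil$ is a non-issue since the conjecture allows $a_i=0$.
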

  	As will be seen, the upper bound on rate of an $(n,k,r,t)_{\text{seq}}$ code derived in the present paper not only proves the above conjecture, it also (a) identifies the precise value of the coefficients $a_i$ appearing in the conjecture and (b) provides matching constructions of {\em binary} codes that achieve the upper bound on code rate for any $r,t$ with $r \geq 3$.

  Constructions of seq-LRC for any $r,t$ appears in \cite{SonYue_Prdr,SongCaiYue_L}. Although these constructions are interesting, based on the tight upper bound on code rate presented here, it can be seen that the constructions provided in \cite{SonYue_Prdr,SongCaiYue_L} do not achieve the maximum possible rate of a seq-LRC. In \cite{RawMazVis}, the authors provide a construction of seq-LRC for any $r,t$ with rate $\geq \frac{r-1}{r+1}$.  
We show in this paper (Section  \ref{sec:Moore_ch4}), that the rate of the construction given in \cite{RawMazVis} is actually $\frac{r-1}{r+1}+\frac{1}{n}$ and further, that the construction in \cite{RawMazVis} can be generalized by replacing the regular bipartite graph appearing in the construction by a regular graph.  It turns out that there are regular graphs for which the attained code rate of $\frac{r-1}{r+1}+\frac{1}{n}$ is optimal for certain parameters $(r,t)$.  These turn out to precisely the parameter sets for which a class of regular graphs, known as Moore graphs exist.  In the context of the present paper, Moore graphs are regular graphs of degree $(r+1)$ having girth $\geq t+1$ and the smallest possible number of vertices. Unfortunately, Moore graphs exist (Theorem \ref{thm:MooreExis}) only for a very sparse set of $(r,t)$ parameters.

Throughout the paper, by weight, we will mean Hamming weight. The notation $V(G)$ for a graph $G$ refers to the set of all vertices in the graph $G$. We use the abbreviation p-c for parity-check. We use the term girth of a graph $G$ to refer to the length (number of edges) of the cycle in the graph $G$ having shortest length. 

\section{Illustrative Examples of Rate-Optimal seq-LRC} \label{sec:examples}

A rate-optimal seq-LRC refers to a seq-LRC having maximum possible rate for a given $r,t$ where the maximization is over all possible field sizes and over all possible block lengths $n$.
This section provides two examples that illustrate the main result of this paper corresponding to binary, and which correspond to rate-optimal seq-LRC having parameters $t=4,r=6$ and $t=3,r=3$ respectively.

\subsection{Illustrative Example for $t$ Even}
Let \calc\ be a binary, rate-optimal seq-LRC with parameters $t=4$ and $r=6$.
For this choice of parameters, it will be shown in Corollary \ref{cor:equality_conditions} that the p-c matrix of the code takes on the form: 
\bea
	H = \left[
	\begin{array}{c|c|c}
		D_0 & A_1 & 0  \\
		\cline{1-3}
		0 & D_1 & C 
	\end{array} 
	\right],  
	\eea
	where
	\ben
	\item $D_0$ is an $(a_0 \times a_0)$ diagonal matrix,
	\item $D_1$ is an $(a_0r \times a_0r)$ diagonal matrix,
	\item $A_1$ is an $(a_0 \times a_0r)$ matrix with each row of weight $r$ and each column of weight $1$,
	\item $C$ is an $(a_0r \times \frac{a_0 r^2}{2})$ matrix with each row of weight $r$ and each column of weight $2$.
	\een
The integer parameter $a_0$ determines the block length $n=a_0(1+r+\frac{r^2}{2})$ of the code.   It turns out that that the rank of the above p-c matrix is equal to the number of rows.  Thus this seq-LRC has code parameters $(n,k,r,t)$ given by $(a_0(1+r+\frac{r^2}{2}),a_0 \frac{r^2}{2},r,t)$.  Clearly, one would be interested in having the length $n$ and hence $a_0$ as small as possible. We address this aspect of minimizing $a_0$ in Section \ref{sec:Moore_ch4}.

The matrices $D_0,D_1,A_1,C$ corresponding to our example code are all binary $\{0,1\}$ matrices and thus lead to a binary code.  This is the case with all of the graph-based code constructions that we provide in the paper.  Thus all of the seq-LRC codes constructed here are rate-optimal and binary. This is not however, necessary.   The matrices matrices $D_0,D_1,A_1,C$ could in general, be nonbinary and could potentially lead to a rate-optimal nonbinary seq-LRC having shorter block length.  

Next, we modify $H$ slightly as this will lead us to a convenient graphical interpretation of the code.  Let us form the matrix \hinf\ obtained by adding at the very top, a row whose entries are the sums of the entries in the remaining rows.   Thus \hinf\ takes on the form:  
\bea
	\hinf\ = \left[
	\begin{array}{c|c|c}
		\underline{1} & 0 & 0 \\
		\cline{1-3}
		D_0 & A_1 & 0  \\
		\cline{1-3}
		0 & D_1 & C 
	\end{array} 
	\right],  
	\eea
	\ben
	\item $\underline{1}$ is an $(1 \times a_0)$ vector with each coordinate equal to $1$.
	\item the matrices $D_0,D_1,A_1,C$ remain as before.
	\een	
Clearly, \hinf\ is also a valid p-c matrix for the code ${\cal C}$.  Each column in the p-c matrix \hinf\ above has Hamming weight $2$. This column-weight property of \hinf\ facilitates a graphical representation of the code.  The corresponding graph \ginf\ with node-edge incidence matrix \hinf\ is shown in Fig~\ref{fig:Moore}, corresponding to the value $a_0=7$ for a certain choice of the matrices $D_0,D_1,A_1,C$ such that girth of \ginf\ is $\geq t+1=5$.  As will be seen in Theorem \ref{LowerBoundOna0}, it turns out that the parameter $a_0$ in the case of the current example, cannot be any smaller. 

Each edge in \ginf\ represents a distinct code symbol while each vertex represents a parity check on the code symbols represented by edges incident on the vertex.  Thus each vertex is associated to a row in the p-c matrix \hinf\ and each edge to a column of the p-c matrix.  Each column of the p-c matrix \hinf\ has Hamming weight $2$ and the location of the two $1$s within the column indicates the vertices to which the edge is connected. In Fig~\ref{fig:Moore}, the edges at the very top, which are colored in blue, correspond to the first $a_0$ columns of \hinf. The edges which are colored in red and green, correspond respectively, to the columns of \hinf\ corresponding to the sub-matrices  
\bean
\left[ \begin{array}{c} 0 \\ A_1 \\ D_1 \end{array} \right] & \text{  and  } & \left[ \begin{array}{c} 0 \\ 0 \\ C \end{array} \right].
\eean
 In the example, we have $a_0=r+1=7$ and hence \ginf\ is a regular graph. In general, we can only assert that $a_0 \geq (r+1)$ (Theorem \ref{LowerBoundOna0}) and hence \ginf\ will not in general, be regular.  The sequential recovery property of this binary code derives from the girth of \ginf.  The girth of \ginf\ in our example, can be observed to be $5$. Hence if there are any $\leq 4$ erased symbols and if in \ginf\ only edges corresponding to erased symbols are retained, there will be at least one vertex or parity check with degree $1$ and hence the erased symbols can be recovered one by one.  A decoder that proceeds to decode in this fashion, is called a ``peeling decoder".
	\begin{figure}[ht]
		\centering
		\includegraphics[scale = 0.2]{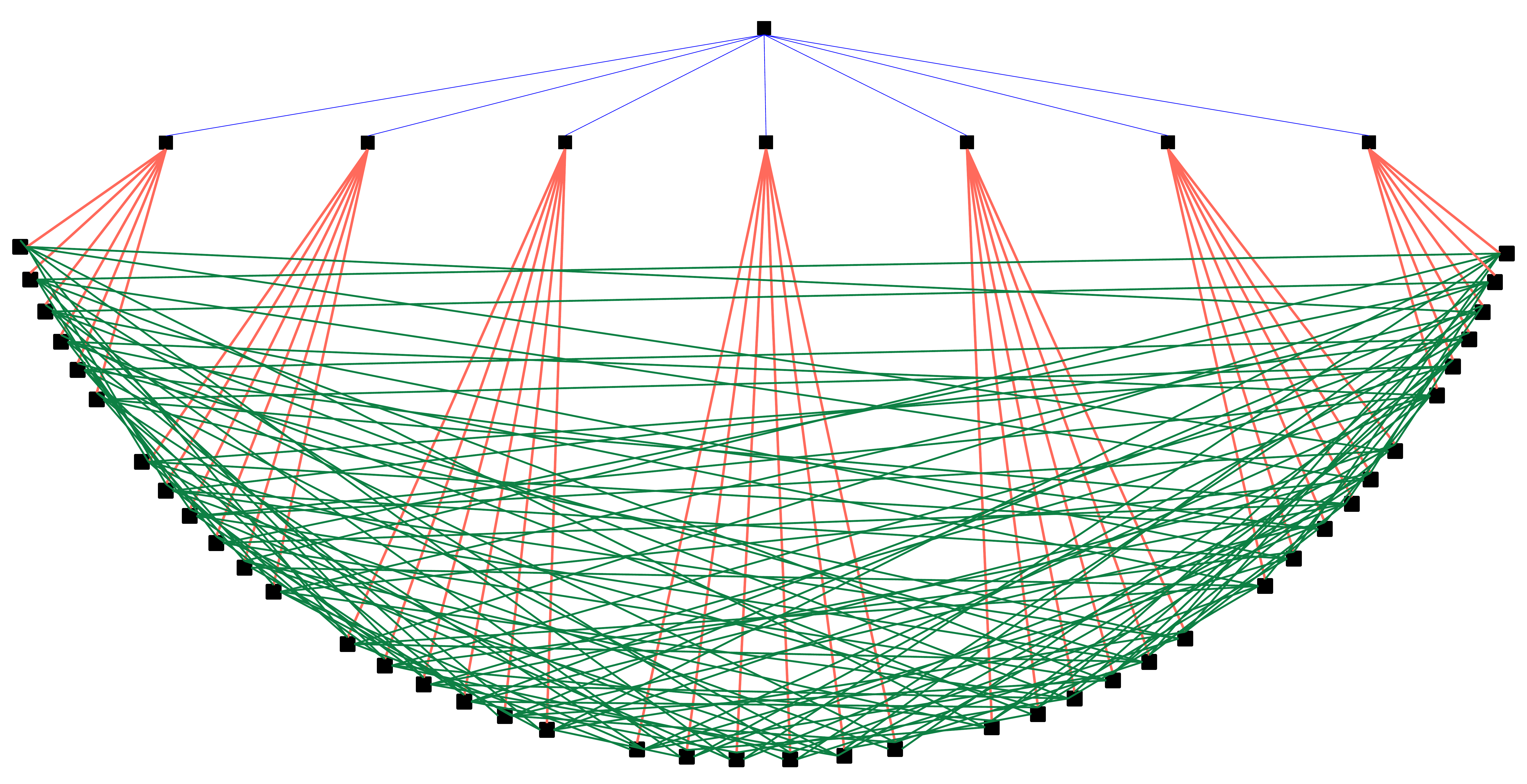}
		\caption{The figure shows a graphical interpretation of a binary, rate-optimal seq-LRC $\mathcal{C}$ having parameter set $(n,k,r,t)=(175,126,6,4)$.   Each of the $175$ edges of the graph represents a distinct code symbol and each of the $50$ vertices represents a parity check of the code symbols represented by edges incident on it. This is a regular graph with a total of $50$ vertices, each of degree $r+1=7$ and is an example of a Moore graph called the Hoffman-Singleton graph.   This graph has girth $5$, which is a necessity for the associated binary code to be able to recover from $t=4$ erasures.    The code has  redundancy $49$ and not $50$ since it turns out that the overall p-c at the very top is redundant. }
		\label{fig:Moore}
	\end{figure}
	
\begin{note}
We remark that even in the general case, when the constituent matrices  $D_0,D_1,A_1,C$ are not binary, a graphical interpretation of a rate-optimal $(n,k,r,4)_{\text{seq}}$ code is still possible, by introducing a fictitious p-c which plays the role of the vertex appearing at the very top of the graph in Fig.~\ref{fig:Moore}. 
\end{note}
	
	\subsection{Illustrative Example for $t=5,r=3$ ($t$ Odd Case)}
	
	Let \calc\ be a binary, rate-optimal seq-LRC with parameters $t=5,r=3$.
	For this choice of parameters, it will be shown in Corollary \ref{cor:equality_conditions} that the p-c matrix of the code takes on the form: 
	\bea
	H = \left[
	\begin{array}{c|c|c}
		D_0 & A_1 & 0 \\
		\cline{1-3}
		0 & D_1 & A_2 \\
		\cline{1-3}
		0 & 0 & D_2
		 
	\end{array} 
	\right],  
	\eea
	where
	\ben
	\item $D_0$ is an $(a_0 \times a_0)$ diagonal matrix,
	\item $A_1$ is an $(a_0 \times a_0r)$ matrix with each row of weight $r$ and each column of weight $1$,
	\item $D_1$ is an $(a_0r \times a_0r)$ diagonal matrix,
	\item $A_2$ is an $(a_0r \times a_0r^2)$ matrix with each row of weight $r$ and each column of weight $1$,
	\item $D_2$ is an $(\frac{a_0 r^2}{r+1} \times a_0r^2)$ matrix with each row of weight $r+1$ and each column of weight $1$,
	\een
	The integer parameter $a_0$ determines the block length $n=a_0(1+r+r^2)$ of the code.   It turns out that that the rank of the above p-c matrix is equal to the number of rows.  Thus this seq-LRC has code parameters $(n,k,r,t)$ given by $(a_0(1+r+r^2),a_0r^2-\frac{a_0r^2}{r+1},r,t)$.  Clearly, one would be interested in having the length $n$ and hence $a_0$ as small as possible.   This aspect of minimizing $a_0$ is discussed in Section \ref{sec:Moore_ch4}.
	
	As in the earlier case of $t$ even, $t=4$, the matrices $D_0,A_1,D_1,A_2,D_2$ corresponding to our example code are all binary $\{0,1\}$ matrices and thus lead to a binary code.  
	
	As in the case $t$ even, we modify $H$ slightly and form the matrix \hinf\ obtained by adding at the very top, a row whose entries are the sums of the entries in the remaining rows.   Thus \hinf\ takes on the form:  
	\bea
	\hinf\ = \left[
	\begin{array}{c|c|c}
		\underline{1} & 0 & 0 \\
		\cline{1-3}
		D_0 & A_1 & 0 \\
		\cline{1-3}
		0 & D_1 & A_2 \\
		\cline{1-3}
		0 & 0 & D_2
		
	\end{array} 
	\right],  
	\eea
	\ben
	\item $\underline{1}$ is an $(1 \times a_0)$ vector with each coordinate equal to $1$.
	\item the matrices $D_0,A_1,D_1,A_2,D_2$ remain as before.
	\een	
	Clearly, \hinf\ is also a valid p-c matrix for the code ${\cal C}$.  Each column in the parity-check matrix \hinf\ above has Hamming weight $2$. This column-weight property of \hinf\ facilitates a graphical representation of the code.  The corresponding graph \ginf\ with \hinf\ as node-edge incidence matrix is shown in Fig~\ref{fig:Moore2}, corresponding to the value $a_0=4$ for a certain choice of matrices $D_0,A_1,D_1,A_2,D_2$ such that girth of \ginf\ is $\geq t+1=6$.  As will be seen in Theorem \ref{LowerBoundOna0}, it turns out that the parameter $a_0$ in the case of the current example, cannot be any smaller.
	
	As in the example case above of $t$ even, $t=4$, each edge in \ginf\ represents a distinct code symbol while each vertex represents a parity check on the code symbols represented by edges incident on the vertex.  Thus each vertex is associated to a row in the p-c matrix \hinf\ and each edge to a column of the p-c matrix.  Each column of the p-c matrix \hinf\ has Hamming weight $2$ and the location of the two $1$s within the column indicates the vertices to which the edge is connected. In Fig~\ref{fig:Moore2}, the edges at the very top, which are colored in blue, correspond to the first $a_0$ columns of \hinf. The edges which are colored in red and green, correspond respectively, to the columns of \hinf\ corresponding to the sub-matrices  
	\bean
	\left[ \begin{array}{c} 0 \\ A_1 \\ D_1 \\ 0 \end{array} \right] & \text{  and  } & \left[ \begin{array}{c} 0 \\ 0 \\ A_2 \\ D_2 \end{array} \right].
	\eean
	In the example, we have $a_0=r+1=4$ and hence \ginf\ is a regular graph. In general, we can only assert that $a_0 \geq (r+1)$ (Theorem \ref{LowerBoundOna0}) and hence \ginf\ will not in general, be regular.  The sequential recovery property of this binary code derives from the girth of \ginf. The girth of \ginf\ in our example, can be observed to be $6$. Hence if there are any $\leq 5$ erased symbols and if in \ginf\ only edges corresponding to erased symbols are retained, there will be at least one vertex or parity check with degree $1$ and hence the erased symbols can be recovered one by one.  A decoder that proceeds to decode in this fashion, is called a ``peeling decoder".
	\begin{figure}[ht]
		\centering
		\includegraphics[scale = 0.5]{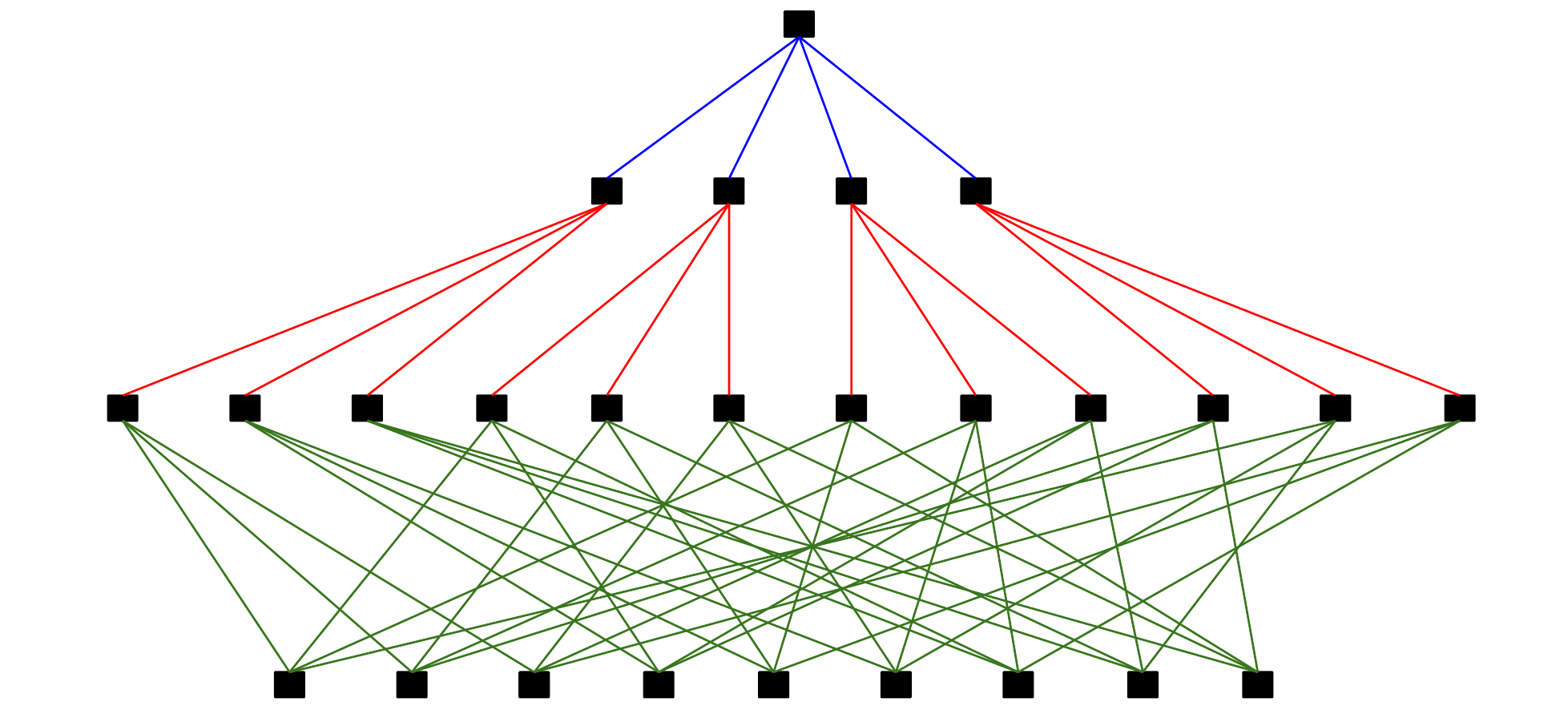}
	\caption{The figure shows a graphical interpretation of a binary, rate-optimal seq-LRC $\mathcal{C}$ having parameter set $(n,k,r,t)=(52,27,3,5)$.  Each of the $52$ edges of the graph represents a distinct code symbol and each of the $26$ vertices represents a parity check of the code symbols represented by edges incident on it.   This is a regular graph with a total of $26$ vertices, each of degree $r+1=4$ and is an example Moore graph for $t=5,r=3$ corresponding to projective plane of order $r=3$.  This graph has girth $6$, which is a necessity for the associated binary code to be able to recover from $t=5$ erasures. The code has  redundancy $25$ and not $26$ since it turns out that the overall p-c at the very top is redundant. }
		\label{fig:Moore2}
	\end{figure}
	
	\begin{note}
		As in the case of $t$ even, in the general case of $t$ odd, where the constituent matrices  $D_0,A_1,D_1,A_2,D_2$ are not binary, a graphical interpretation of a rate-optimal $(n,k,r,5)_{\text{seq}}$ code is still possible by introducing a fictitious p-c which plays the role of the vertex appearing at the very top of the graph in Fig.~\ref{fig:Moore2}. 
	\end{note}


	 \section{A Parity-Check-Matrix-Based Tight Upper Bound on the Rate of a seq-LRC} \label{sec:tight_bound}
	 
	 In this section, an upper bound on the rate of an $(n,k,r,t)_{\text{seq}}$ code for any $r \geq 3$ and any $t$ is derived. The cases of even $t$ and odd $t$ are considered separately.  The proof proceeds by deducing the structure of parity-check matrix of a seq-LRC. Constructions of binary codes achieving this upper bound for any $(r,t)$ with $r \geq 3$ are provided in Section \ref{sec:complete_code}.   These matching constructions establish that the upper bound on rate derived here is tight for all $(r,t)$ with $r \geq 3$.  The upper bound also proves Conjecture \ref{conj:Song} due to Song et al. 
	 	 \begin{thm} \label{thm:rate_both} \textbf{Rate Bound}: 
	 	Let $\mathcal{C}$ denote an $(n,k,r,t)_{\text{seq}}$ code over a finite field $\mathbb{F}_q$. Let $r \geq 3$. Then 
	 	\begin{align}
	 	\frac{k}{n} & \leq  \frac{r^{s+1}}{r^{s+1} + 2 \sum_{i=0}^{s} r^i}\hspace{0.5cm}&\text{for $t$ even,} \label{Thm1}\\
	 	\frac{k}{n} & \leq  \frac{r^{s+1}}{r^{s+1} + 2 \sum_{i=1}^{s} r^i + 1}\hspace{0.5cm}&\text{for $t$ odd,}\label{Thm2}
	 	\end{align}
	 	where $s = \lfloor \frac{t-1}{2} \rfloor$.
	 \end{thm}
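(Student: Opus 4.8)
The plan is to bound the redundancy $n-k$ from below in terms of $k$ by extracting, from an arbitrary full-row-rank parity-check matrix $H$ of $\mathcal{C}$, a sparse ``staircase'' substructure that the sequential-recovery property forces on it, and then running a charging argument from the message end of the staircase back toward the top. The even and odd cases of $t$ are handled in parallel and split only at the terminal layer; write $s=\lfloor(t-1)/2\rfloor$ as in the statement.

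The set-up uses only the weakest part of the hypothesis — recovery from a \emph{single} erasure — so that every coordinate $i$ lies in a repair set of size $\le r$, equivalently $\mathcal{C}^{\perp}$ contains a codeword of weight $\le r+1$ whose support contains $i$. I would build the layers greedily. ``Tier $0$'': pick a maximal family of such weight-$\le r+1$ dual codewords with pairwise-disjoint supports; after scaling, arrange them as a top block-row $[\,D_0\mid A_1\mid 0\mid\cdots\,]$ in which $D_0$ is an $a_0\times a_0$ identity selecting one ``pivot'' coordinate per codeword and $A_1$ holds the $\le a_0 r$ further coordinates touched, and observe that maximality forces every other weight-$\le r+1$ dual codeword to meet this coordinate set. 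Iterating — the coordinates named in $A_1$ must themselves be repairable, and (after row reduction) their new supports are essentially disjoint — produces tier $1$ of the form $[\,0\mid D_1\mid A_2\mid 0\mid\cdots\,]$, and so on. This organization of $H$ is the routine part; it also yields the structural description of the parity-check matrix that the paper advertises.

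The crux is to pin down how the tiers grow. I would prove: (i) the staircase can be continued only for $s$ steps before it must ``close up''; (ii) the final, ``message'' block at the far end has every column of weight $2$ when $t$ is even (a block of type $C$ in the examples) and is a single block of row weight $r+1$ when $t$ is odd (a block of type $D_s$), and in either case the number of terminal rows is $\ge 2k/r$ for $t$ even, and grows like $k/r$ together with an extra half-tier for $t$ odd; and (iii) each earlier tier contributes at least a factor $1/r$ fewer rows than the tier below it — and no fewer. These facts compound into redundancy $\ge \frac{2\sum_{i=0}^{s}r^i}{r^{s+1}}\,k$ for $t$ even and $\ge \frac{2\sum_{i=1}^{s}r^i+1}{r^{s+1}}\,k$ for $t$ odd, which are exactly \eqref{Thm1} and \eqref{Thm2} after rearranging $\frac{k}{n}=1-\frac{n-k}{n}$. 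Every one of (i)--(iii) is forced by recovery from $t$ erasures: any violation — supports that collapse onto one another, a staircase that closes too soon, or a terminal block that is too thin — can be turned into a set of at most $t$ coordinates admitting no valid recovery order, i.e. (in the graph language of Section~\ref{sec:graphical_rep}) a cycle of length $\le t$. This is also the only place $r\ge 3$ enters, to rule out the $r=2$ degeneracies.

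I expect item (iii), together with the even/odd analysis of the terminal layer in (ii), to be the real obstacle: one must control how recovery orders interleave across two consecutive tiers and argue that the best the code can do is exactly the staircase, with only the hypothesis $r\ge 3$ available. The greedy extraction of the layers and the concluding summation of the geometric series are, by comparison, bookkeeping; and tightness of the bound is a separate matter, established in Section~\ref{sec:complete_code} by exhibiting explicit binary codes of the staircase form.
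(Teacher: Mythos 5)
Your route is recognizably the paper's own: pass to a parity-check matrix whose rows are dual codewords of weight at most $r+1$, argue that sequential recovery from $t$ erasures (equivalently $d_{\min}\geq t+1$) forces a staircase structure on it, and then count row and column weights tier by tier and sum a geometric series. The difficulty is that the two items you yourself flag as ``the real obstacle'' are precisely the content of the proof, and the one-line justification you offer for them does not go through as stated. The structural claim -- that each $A_i$ has columns of weight $1$ and each $D_i$ is diagonal-like -- cannot be obtained by saying ``a violation gives at most $t$ coordinates with no valid recovery order, i.e.\ a short cycle'': the graph/cycle language of Section~\ref{sec:graphical_rep} is only available \emph{after} the column weights have been shown to be at most $2$, and for a general $H$ one must instead exhibit a linearly dependent set of at most $t$ \emph{columns}, the whole difficulty being to control its size. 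The paper does this through the inductive Property $P_i$ of Lemmas~\ref{claim1} and \ref{claim2}: every vector of weight at most $2$ supported on the tier-$i$ rows is a combination of at most $2(i+1)$ columns from the first $i+1$ tiers, so a violation at tier $i+1$ produces a dependent set of size at most $2(i+2)\leq t$. Your greedy ``maximal disjoint-support'' extraction, with supports ``essentially disjoint after row reduction,'' does not substitute for this induction, and nothing in the proposal supplies it.

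Separately, your items (i)--(ii) are equality conditions rather than facts forced on every code: an arbitrary seq-LRC may have columns of weight $\geq 3$ (the block $D$ in \eqref{Hform}), rows belonging to no tier (the parameter $p$), empty tiers, or no weight-$1$ columns at all (e.g.\ $H$ the incidence matrix of a regular graph, as in Construction~\ref{con:modified}). In that last configuration your charging scheme, as described, yields only $m\geq 2k/r$, i.e.\ $k/n\leq r/(r+2)$, which is strictly weaker than \eqref{Thm1} for $t\geq 4$; the bound is rescued only by the global row-weight budget $m(r+1)\geq a_0+2(\#\text{weight-}2\ \text{columns})+3(\#\text{weight-}{\geq}3\ \text{columns})$ (inequality \eqref{Ineq_star1}) combined with the tier inequalities \eqref{Ineq1}, \eqref{Ineq3}, and your outline omits exactly this ingredient, as well as any argument for the asserted ``terminal rows $\geq 2k/r$.'' Two smaller corrections: $r\geq 3$ is not used to rule out structural degeneracies -- it enters only in the final algebraic (or LP-duality) manipulation, where one needs the coefficient $\delta_{\frac{t}{2}-2}>1$ (equivalently dual feasibility); and the concluding summation is not mere bookkeeping in the general case, since it must be carried out in the presence of $p$ and of the heavy columns.
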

	 \begin{proof}  We introduce some notations used in the proof and provide a sketch of the proof here, further details can be found in Appendix \ref{Appendix_rate_bounds}. An alternative method of proof, using the technique of linear programming, is also provided in Appendix \ref{app:linprog}.
Let $\mathcal{C}^{\perp}$ denote the dual code and $H$ be a p-c matrix of the code $\mathcal{C}$.   
We begin by setting $\mathcal{B}_0 = \text{span}(\{\underline{c} \in \mathcal{C}^{\perp} : w_H(\underline{c}) \leq r+1 \})$ where $w_H(\underline{c})$ denotes the Hamming weight of the vector $\underline{c}$.  Let $m$ be the dimension of $\mathcal{B}_0$. Let ${\underline{c}_1,\underline{c}_2,\ldots,\underline{c}_m}$ be a basis of $\mathcal{B}_0$ chosen such that $w_H(\underline{c}_i) \leq r+1$, $\forall i \in [m]$.
	 	Let $H_0={[{\underline{c}_1} \  {\underline{c}_2} \ldots {\underline{c}_m}]}^T$.
	 	It follows that $H_0$ is a p-c matrix of an $(n,n-m,r,t)_{\text{seq}}$ code as its row space contains every codeword of Hamming weight at most $r+1$ which is present in $\mathcal{C}^\perp$. Since the null space of $H_0$ contains the code $\mathcal{C}$, 
	 	\bean
	 	\frac{k}{n} \leq 1 - \frac{m}{n}.
	 	\eean
	 	As we are interested in characterizing rate-optimal seq-LRC, we can assume w.l.o.g that $H_0$ is the p-c matrix of the code $\mathcal{C}$, i.e., $H=H_0$ and $k=n-m$.  Thus the parameter $m$ has interpretation as the redundancy $m=(n-k)$ of the code $\mathcal{C}$.  The idea behind the next few arguments in the proof is the following.  Seq-LRCs with higher rate will have a larger value of $n$ for a fixed value of redundancy $m$.  On the other hand, the Hamming weight of the matrix $H$ (i.e., the number of non-zero entries in the matrix) is bounded above by $m(r+1)$. It follows that to make $n$ large, the columns of $H$ must be chosen to have as small a weight as possible. It is therefore quite logical to start building $H$ by picking many columns of weight $1$, then columns of weight $2$ and so on.  As one proceeds by following this approach, it turns out that the matrix $H$ is forced to have a certain sparse, block-diagonal, staircase form and an understanding of this structure is used to derive the upper bound on code rate.    The cases of $t$ odd and $t$ even are treated separately.  Further details can be found in Appendix \ref{Appendix_rate_bounds}  	 \end{proof}
	 
	 \begin{cor}{Conditions for equality in \eqref{Thm1},\eqref{Thm2}} \label{cor:equality_conditions} 
	 	
 	 	As shown in Appendix \ref{Appendix_rate_bounds} (at the end of the proof for $t$ even and at the end of the proof for $t$ odd), if \calc\ is an $(n,k,r,t)_{\text{seq}}$ code having rate achieving the upper bound given in Theorem~\ref{thm:rate_both}, then there exists a p-c matrix $H$ for \calc\ such that 
	 	\ben
	 	 \item each row of $H$ has Hamming weight equal to $(r+1)$ and 
	 	\item each column of $H$ has Hamming weight equal to either $1$ or $2$. 
	 	\een
	 	Furthermore, for $t$ even, $t=2s+2$ the p-c matrix can be put in the form:
	 	\bea
	 	H = \left[
	 	\begin{array}{c|c|c|c|c|c|c|c}
	 		D_0 & A_1 & 0 & 0 & \hdots & 0 & 0 & 0  \\
	 		\cline{1-8}
	 		0 & D_1 & A_2 & 0 & \hdots & 0 & 0 & 0 \\
	 		\cline{1-8}
	 		0 & 0 & D_2 & A_3 & \hdots & 0 & 0 & 0  \\
	 		\cline{1-8}
	 		0 & 0 & 0 & D_3 & \hdots & 0 & 0 & 0 \\
	 		\cline{1-8}
	 		\vdots & \vdots & \vdots & \vdots & \hdots & \vdots & \vdots & \vdots \\
	 		\cline{1-8}
	 		0 & 0 & 0 & 0 & \hdots & A_{s-1} & 0 & 0  \\
	 		\cline{1-8}
	 		0 & 0 & 0 & 0 & \hdots & D_{s-1} & A_{s} & 0 \\
	 		\cline{1-8}
	 		0 & 0 & 0 & 0 & \hdots & 0 & D_{s} & C \\
	 	\end{array} 
	 	\right],  \label{eq:Hmatrixteven_ch3}
	 	\eea
	 	where 
	 	\bit
	 	\item $D_0$ is an $(a_0 \times a_0)$ diagonal matrix for some integer $a_0$,
	 	\item $D_i$ is an $(a_0 r^i \times a_0 r^i)$ diagonal matrix, $\forall 1 \leq i \leq s$.
	 	\item $A_i$ is an $(a_0 r^{i-1} \times a_0 r^i)$ matrix with each column of weight $1$ and each row of weight $r$, $\forall 1 \leq i \leq s$,
	 	\item $C$ is an $(a_0 r^{s} \times a_0 \frac{r^{s+1}}{2})$ matrix with each column of weight $2$ and each row of weight $r$, 
	 	\item $n=\sum_{i=0}^{s}a_0 r^i + a_0 \frac{r^{s+1}}{2}$, $k = a_0 \frac{r^{s+1}}{2}$. 
	 	\eit
	 	For $t$ odd, $t=2s+1$ the p-c matrix can be put in the form:
	 	\bea
	 	H & = & \left[
	 	\begin{array}{c|c|c|c|c|c|c}
	 		D_0 & A_1 & 0 & 0 & \hdots & 0 & 0   \\
	 		\cline{1-7}
	 		0 & D_1 & A_2 & 0 & \hdots & 0 & 0  \\
	 		\cline{1-7}
	 		0 & 0 & D_2 & A_3 & \hdots & 0 & 0   \\
	 		\cline{1-7}
	 		0 & 0 & 0 & D_3 & \hdots & 0 & 0   \\
	 		\cline{1-7}
	 		\vdots & \vdots & \vdots & \vdots & \ddots & \vdots & \vdots  \\
	 		\cline{1-7}
	 		0 & 0 & 0 & 0 & \hdots & A_{s-1} & 0   \\
	 		\cline{1-7}
	 		0 & 0 & 0 & 0 & \hdots & D_{s-1} & A_{s}   \\
	 		\cline{1-7}
	 		0 & 0 & 0 & 0 & \hdots & 0 & D_{s} 
	 	\end{array} \right] \label{eq:Hmatrixtodd_ch3},
	 	\eea
	 	where 
	 	\bit
	 	\item $D_0$ is an $(a_0 \times a_0)$ diagonal matrix for an integer $a_0$, 
	 	\item $D_i$ is an $(a_0 r^i \times a_0 r^i)$ diagonal matrix, $\forall 1 \leq i \leq s-1$, 
	 	\item $A_i$ is an $(a_0 r^{i-1} \times a_0 r^i)$ matrix with each column of weight $1$ and each row of weight $r$, $\forall 1 \leq i \leq s$, 
	 	\item $D_{s}$ is an $(a_0 r^{s-1} \frac{r}{r+1} \times a_0 r^{s})$ matrix with each column of weight $1$ and each row of weight $r+1$,  
	   \item $n=\sum_{i=0}^{s}a_0 r^i $, $k =a_0r^s- a_0 r^{s-1} \frac{r}{r+1}$. 
	 	\eit 
	 \end{cor}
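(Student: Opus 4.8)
The plan is to read these structural claims off as the equality case of Theorem \ref{thm:rate_both}: one re-runs the rate-bound derivation of Appendix \ref{Appendix_rate_bounds}, tracks every inequality used, and records precisely what must hold for the final rate inequality to be attained. As in that proof I would start from the p-c matrix $H=H_0$ whose rows form a weight-$\le(r+1)$ basis of $\mathcal{B}_0=\text{span}\{\underline{c}\in\calcp:w_H(\underline{c})\le r+1\}$, and restrict attention to rate-optimal codes, so that the number of rows of $H$ is $m=\text{rank}(H)=n-k$. It is convenient to phrase the sequential-recovery hypothesis in terms of the incidence structure of $H$ (columns $\leftrightarrow$ code symbols, rows $\leftrightarrow$ parity checks): successive recovery of an erased set $E$ using parity checks of weight $\le r+1$ succeeds exactly when $E$ contains no stopping set (a subset of columns such that every row incident to it is incident to it at least twice), so the $(n,k,r,t)_{\text{seq}}$ property is equivalent to the incidence structure of $H$ having no stopping set of size $\le t$, together with the weight-$\le(r+1)$ bound on the parity checks used for recovery.

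First I would pin down items (1) and (2). The rate-bound proof already shows that greedily filling $H$ with columns of nondecreasing weight forces a sparse block-diagonal staircase shape, the only remaining freedom being in the block sizes; carrying that derivation through to the bound on $k/n$, equality is seen to require both that the weight budget $\sum_{\text{rows}}w_H(\text{row})\le m(r+1)$ be saturated --- whence every row of $H$ has weight exactly $r+1$ (item (1)) --- and that no column ever be assigned weight $\ge 3$ --- whence every column has weight $1$ or $2$ (item (2)).

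Next I would recover the staircase itself. Let $V_0$ be the set of weight-$1$ columns and set $a_0:=|V_0|$. The rows meeting $V_0$ (the first block row) each have exactly one incidence in $V_0$ and, being of weight $r+1$, have $r$ further incidences; at equality these land in $a_0r$ new columns $V_1$, each of weight $2$, with no such row touching anything outside $V_0\cup V_1$; this produces the blocks $D_0$ (the incidences in $V_0$) and $A_1$ (the forward incidences). Iterating, $|V_{i+1}|=r|V_i|=a_0r^{i+1}$, block row $i$ is supported on $V_i\cup V_{i+1}$ and splits into a diagonal part $D_i$ and a forward part $A_{i+1}$. The absence of stopping sets of size $\le t$ then dictates how many pure layers are needed and how the recursion terminates: for $t=2s+2$ one needs $V_0,\dots,V_s$, after which the $a_0r^{s+1}$ forward incidences of block row $s$ must be matched two to a column, giving $a_0r^{s+1}/2$ weight-$2$ columns each joining two layer-$s$ checks --- this is the block $C$; for $t=2s+1$ block row $s$ instead consists of $a_0r^{s-1}\frac{r}{r+1}$ rows of weight $r+1$, all of whose incidences lie in $V_s$ --- this is the block $D_s$. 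Permuting rows and columns to group the layers puts $H$ into the forms \eqref{eq:Hmatrixteven_ch3} and \eqref{eq:Hmatrixtodd_ch3}; summing the block widths and heights yields the displayed values of $n$ and $m=n-k$, hence the stated dimensions.

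The hard part will be the rigidity step inside the last paragraph: showing that at equality there really are \emph{no} cross-layer incidences (a row of block row $i$ meeting some $V_j$ with $j\notin\{i,i+1\}$) and no premature weight-$2$ column joining two checks of the same early layer, since each such deviation must be shown either to create a stopping set of size $\le t$ (contradicting recoverability) or to cost an extra unit of redundancy (contradicting rate-optimality). This requires a careful layer-by-layer stopping-set count, and the bookkeeping genuinely differs between $t$ even and $t$ odd because of the asymmetric termination of the recursion (the weight-$2$ closing block $C$ versus the weight-$(r+1)$-row closing block $D_s$); making both boundary cases come out with exactly the asserted block dimensions is where most of the effort lies. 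As a cross-check, the same equality conditions can also be obtained from complementary slackness in the linear program of Appendix \ref{app:linprog}.
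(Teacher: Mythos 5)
Your overall skeleton is the same as the paper's (track every inequality in the rate-bound derivation and read off the equality conditions), but the tool you propose for the step you yourself flag as "the hard part" is the wrong one, and that step is precisely where the content of the corollary lies. You claim the $(n,k,r,t)_{\text{seq}}$ property is \emph{equivalent} to the incidence structure of $H$ having no stopping set of size $\le t$. This is false in the generality needed here: the corollary must hold over an arbitrary field $\mathbb{F}_q$, and the paper explicitly observes (discussion around Theorem \ref{thm:girthreq}) that a nonbinary code whose graph contains a short cycle --- which is exactly a stopping set of size $\le t$ --- may nevertheless recover sequentially from $t$ erasures; moreover, even over $\mathbb{F}_2$, sequential recovery may use weight-$\le(r+1)$ dual codewords that are linear combinations of, rather than rows of, the particular basis matrix $H=H_0$, so absence of stopping sets in $H$ is not a necessary condition. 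Consequently your rigidity argument ("each deviation creates a stopping set of size $\le t$, contradicting recoverability") does not establish necessity of the staircase structure for a general rate-optimal seq-LRC.

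What the paper uses instead is the weaker but field-independent consequence $d_{\min}(\calc)\ge t+1$: any violation of the claimed structure must be converted into a \emph{nontrivial linear dependence} among at most $t$ columns of $H$. This is done by the inductive Lemmas \ref{claim1} and \ref{claim2}, whose engine is the auxiliary property $P_i$ (any nonzero vector of weight $\le 2$ supported on the rows of layer $i$ is a linear combination of at most $2(i+1)$ columns coming from layers $\le i$); it is this property, propagated layer by layer, that forces every column of $A_i$ to have weight $1$ and every row of $D_i$ to have weight $1$, hence $|V_{i+1}|=r|V_i|$ and the exact block sizes, with the asymmetric termination ($C$ for $t$ even, $D_s$ with weight-$(r+1)$ rows for $t$ odd, and $\ell \le t$ versus $\ell\le t-1$ dependencies being the reason $D_s$'s rows are unconstrained when $t$ is odd). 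Only after this lemma are the counting inequalities \eqref{Ineq1}--\eqref{Ineq_star1} (resp.\ \eqref{Ineq1O}--\eqref{Ineq_star2}) available, and equality in the rate bound is then seen to force all of them to be tight with $p=0$ and the weight-$\ge 3$ block $D$ (and, for $t$ odd, also $C$) empty, which is how items (1), (2) and the displayed forms follow. Your proposal leaves this induction entirely to be done and supplies no valid replacement for it, so as written the proof has a genuine gap; repairing it essentially means reproducing the paper's dependence-based induction (or a complementary-slackness argument that still needs the same structural lemma as input).
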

	 
These properties will be made use of in Section~\ref{sec:complete_code} where {\em binary} codes achieving the rate bound are constructed. 

 \begin{note} [Block length] \label{blk_length_rem}
 	Since the dimension $k$ of a code is an integer, and the numerator and denominator of the right hand side in \eqref{Thm2} are relatively prime, it follows that for $t$ odd, in a code achieving the upper bound on code rate in \eqref{Thm2}, one must have that $n$ is an integer multiple of $r^{s+1}+2\sum_{i = 1}^{s}r^i+1$. When $t$ is even, the corresponding requirement from \eqref{Thm1}, is that $2n$ be an integer multiple of $r^{s+1} + 2 \sum_{i=0}^{s} r^i$.
 \end{note}
 
\begin{note} [Proof of the Conjecture \ref{conj:Song}]
	It can be seen that the upper bound on rate given in Theorem \ref{thm:rate_both} is of the form given in Conjecture \ref{conj:Song}.  We prove the conjecture in full here i.e., we will prove in Section~\ref{sec:complete_code} that the upper bound in Theorem \ref{thm:rate_both} is also achievable by constructing binary codes that achieve the upper bound on code rate for any $r \geq 3$ and any $t$. 
The upper bound on rate given in Theorem \ref{thm:rate_both}, for $t=2,3$, coincides with the upper bound given in \cite{PraLalKum} and \cite{SongCaiYue_L} respectively.  For $t \geq 4$, the upper bound on rate given in Theorem \ref{thm:rate_both} is new.  
\end{note}
	
Throughout the remainder of the paper, an \nkrt\ (seq-LRC) code achieving the upper bound in either \eqref{Thm1} or \eqref{Thm2} will be referred to as a {\em rate-optimal} seq-LRC.
	 
	 From the proof of Theorem \ref{thm:rate_both} in Appendix \ref{Appendix_rate_bounds}, it is apparent that the upper bound on the rate of an $(n,k,r,t)_{\text{seq}}$ code given in Theorem \ref{thm:rate_both} can also be viewed as an upper bound on rate of an $[n,k]$ linear code having minimum distance $\geq (t+1)$, and a p-c matrix whose rows have Hamming weight $\leq (r+1)$.    We refer the reader to the papers \cite{BurKriLitMil,BenLit,IceSam,Fro} in which an upper bound is derived on the rate of an $[n,k]$ code having p-c matrix whose rows have Hamming weight $\leq (r+1)$ and the much larger minimum distance $d_{\min}=n \delta >> (t+1)$.  The tightness or otherwise of the bounds derived in \cite{BurKriLitMil,BenLit,IceSam,Fro} is currently unknown.
	 
We note from Remark \ref{blk_length_rem} that it is not possible to construct codes which achieve the upper bound on rate given in Theorem \ref{thm:rate_both} for all values of block length $n$. This motivates the introduction in the Corollary below, of the notion of dimension optimality.
\begin{cor} \label{cor:dimension_bound}
	Let $\mathcal{C}$ denote an $(n,k,r,t)_{\text{seq}}$ code over a finite field $\mathbb{F}_q$. Let $r \geq 3$. Then 
		\bea 
		k \ & \leq  \ \left \lfloor \frac{n r^{s+1}}{r^{s+1} + 2 \sum_{i=0}^{s} r^i} \right \rfloor \hspace{0.5cm}  & \text{for $t$ even,} \label{Thm3} \label{eq:dim_1} \\
		k \ & \leq  \ \left \lfloor \frac{n r^{s+1}}{r^{s+1} + 2 \sum_{i=1}^{s} r^i + 1} \right \rfloor \hspace{0.5cm} &  \text{for $t$ odd,} \label{Thm4} \label{eq:dim_2} 
	\eea
		where $s = \lfloor \frac{t-1}{2} \rfloor$.
\end{cor}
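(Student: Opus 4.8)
The plan is to obtain Corollary~\ref{cor:dimension_bound} as an immediate consequence of the rate bound in Theorem~\ref{thm:rate_both}, the only additional ingredient being the trivial fact that the dimension $k$ of a linear code is a non-negative integer.

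First I would invoke Theorem~\ref{thm:rate_both}: for any $(n,k,r,t)_{\text{seq}}$ code with $r \geq 3$, we have $\frac{k}{n} \leq \frac{r^{s+1}}{r^{s+1} + 2\sum_{i=0}^{s} r^i}$ when $t$ is even and $\frac{k}{n} \leq \frac{r^{s+1}}{r^{s+1} + 2\sum_{i=1}^{s} r^i + 1}$ when $t$ is odd, where $s = \lfloor \frac{t-1}{2}\rfloor$. Multiplying through by $n > 0$ yields $k \leq \frac{n r^{s+1}}{r^{s+1} + 2\sum_{i=0}^{s} r^i}$ in the even case and $k \leq \frac{n r^{s+1}}{r^{s+1} + 2\sum_{i=1}^{s} r^i + 1}$ in the odd case. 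Then, since $k \in \mathbb{Z}_{\geq 0}$, and the largest integer not exceeding a real number $x$ is $\lfloor x \rfloor$, each of these inequalities can be tightened by replacing its right-hand side with its floor, which gives exactly \eqref{Thm3} and \eqref{Thm4}.

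There is no real obstacle here; the substantive work was already carried out in the proof of Theorem~\ref{thm:rate_both} (see Appendix~\ref{Appendix_rate_bounds}), and this corollary simply records the consequence of integrality. The one point worth emphasising is that, in contrast with Remark~\ref{blk_length_rem}, the statement makes no divisibility assumption on $n$: it holds for every block length, and hence furnishes the appropriate notion of optimality --- dimension optimality --- to invoke whenever $n$ is not of the special form for which the rate bound of Theorem~\ref{thm:rate_both} is attained with equality.
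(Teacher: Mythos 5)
Your proposal is correct and follows the same route as the paper, whose proof is simply that the corollary follows directly from Theorem~\ref{thm:rate_both}: multiply the rate bound by $n$ and use the integrality of $k$ to take the floor. Nothing further is needed.
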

\begin{proof}
	Directly follows from Theorem \ref{thm:rate_both}.
\end{proof}
We will refer to codes achieving the bounds in either \eqref{eq:dim_1} or \eqref{eq:dim_2} as {\em dimension-optimal codes}.  A few constructions of dimension-optimal seq-LRC are provided in Section~\ref{sec:high_dimension}.

\subsection{Sufficient Condition for code to be a seq-LRC}

Corollary~\ref{cor:equality_conditions} shows that for a seq-LRC to be rate-optimal, the p-c matrix must necessarily have the form described in the corollary.  The form of the parity-check matrix is not sufficient however, to guarantee that the resultant code will be able to recover sequentially from $t$ erasures.  As was seen in the example codes of Section~\ref{sec:examples} and as will be seen in general in Section \ref{sec:graphical_rep}, the structure of the p-c matrix leads to a graphical description \ginf\ of the code \calc. The general form of the parity-check matrix given in Corollary~\ref{cor:equality_conditions} does not however, completely specify the graph \ginf\ as the matrix $C$ in case of $t$ even and the matrix $D_s$ in case of $t$ odd is not completely specified in Corollary~\ref{cor:equality_conditions}.  As a result, a scale factor $a_0$ that determines the total number of vertices in the graph remains unspecified, as are certain edge connections.    It turns out that if the parameter $a_0$ and the edge connections are chosen so as to ensure that the graph \ginf\ has girth $\geq (t+1)$, then the code is guaranteed to always be able to recover sequentially from $t$ erasures.   Moreover, for the values of $a_0$ chosen in the present paper to ensure girth $\geq (t+1)$, the associated rate-optimal code can be chosen to be a binary code.

 	
	\section{A Graphical Representation for the Rate-Optimal seq-LRC}  \label{sec:graphical_rep} 
	
In the last section, we saw that the p-c matrix of a rate-optimal seq-LRC can be assumed without loss of generality, to have the staircase form appearing in equations \eqref{eq:Hmatrixteven_ch3} and \eqref{eq:Hmatrixtodd_ch3}.   It will be shown in the present section, just as was done in the case of the examples presented in Section~\ref{sec:examples}, that this form of p-c matrix leads to a graphical representation of the code.  The construction of rate-optimal seq-LRC presented in Section \ref{sec:complete_code} is based on this graphical representation and yields rate-optimal, binary seq-LRCs. 
As was the case with the examples presented in Section~\ref{sec:examples}, the graphical representation is slightly different for the cases of $t$ odd and $t$ even.  We will begin with the $t$-even case.  

	\subsection{$t$ Even Case} 
	
	In the case $t$ even, we recall from equation  \eqref{eq:Hmatrixteven_ch3}, that the p-c matrix of a rate-optimal code can be put into the form:
	\bea
	H = \left[
	\begin{array}{c|c|c|c|c|c|c|c}
		D_0 & A_1 & 0 & 0 & \hdots & 0 & 0 & 0  \\
		\cline{1-8}
		0 & D_1 & A_2 & 0 & \hdots & 0 & 0 & 0 \\
		\cline{1-8}
		0 & 0 & D_2 & A_3 & \hdots & 0 & 0 & 0  \\
		\cline{1-8}
		0 & 0 & 0 & D_3 & \hdots & 0 & 0 & 0 \\
		\cline{1-8}
		\vdots & \vdots & \vdots & \vdots & \hdots & \vdots & \vdots & \vdots \\
		\cline{1-8}
		0 & 0 & 0 & 0 & \hdots & A_{s-1} & 0 & 0  \\
		\cline{1-8}
		0 & 0 & 0 & 0 & \hdots & D_{s-1} & A_{s} & 0 \\
		\cline{1-8}
		0 & 0 & 0 & 0 & \hdots & 0 & D_{s} & C \\
	\end{array} 
	\right],  \label{eq:H_Eq_ch4}
	\eea
	where $s= \lfloor \frac{t-1}{2} \rfloor$, or equivalently, $t=2s+2$. 
We note first that, each column in $H$, with the exception of the columns associated to diagonal sub-matrix $D_0$ has Hamming weight $2$. To make this uniform, we add an additional row to $H$ at the top, which has all $1$s in the columns associated to $D_0$ and zeros elsewhere.  This leads to the augmented p-c matrix \hinf, shown in \eqref{eq:stair_H_even_aug_ch4}.  The added row may be regarded in general, as a fictitious parity check, which we will regard as the parity check ``at infinity'' associated to node $V_{\infty}$. 

	\bea
	\hinf\ & = & 
	\begin{array}{c} \ \\  V_{\infty} \\ V_0 \\ V_1 \\ V_2 \\ V_3 \\ \vdots \\ V_{s-2} \\ V_{s-1} \\ V_{s} \end{array} 
	\left[
	\begin{array}{c|c|c|c|c|c|c|c}
		E_0 & E_1 & E_2 & E_3 & \cdots & E_{s-1} & E_{s} & E_{s+1} \\ \hline \hline 
		\underline{1}^t & \underline{0}^t & \underline{0}^t & \underline{0}^t  & \hdots & \underline{0}^t & \underline{0}^t & \underline{0}^t  \\
		\cline{1-8}
		D_0 & A_1 & 0 & 0 & \hdots & 0 & 0 & 0  \\
		\cline{1-8}
		0 & D_1 & A_2 & 0 & \hdots & 0 & 0 & 0 \\
		\cline{1-8}
		0 & 0 & D_2 & A_3 & \hdots & 0 & 0 & 0  \\
		\cline{1-8}
		0 & 0 & 0 & D_3 & \hdots & 0 & 0 & 0 \\
		\cline{1-8}
		\vdots & \vdots & \vdots & \vdots & \ddots & \vdots & \vdots & \vdots \\
		\cline{1-8}
		0 & 0 & 0 & 0 & \hdots & A_{s-1} & 0 & 0  \\
		\cline{1-8}
		0 & 0 & 0 & 0 & \hdots & D_{s-1} & A_{s} & 0 \\
		\cline{1-8}
		0 & 0 & 0 & 0 & \hdots & 0 & D_{s} & C \\
	\end{array} 
	\right].  \label{eq:stair_H_even_aug_ch4}
	\eea
\begin{note} [Additional parity check]
 It turns out that in all of the code constructions presented in the current paper, and which arise from the graphical representation described below, all the entries in $H$ are binary, i.e., belong to the set $\{0,1\}$.  In these codes, the  additional row at the top of \hinf\ is simply the modulo $2$ sum of the remaining rows of \hinf\ and hence is associated with an actual, rather than fictitious, parity check.   In general, however, it is not necessary for the entries of $H$ to be binary.
\end{note}

	Since each column of \hinf\ has weight $2$, the matrix has a natural interpretation as the vertex-edge incidence matrix of a graph \ginf\ where the incidence matrix of \ginf\ is obtained by replacing each non-zero entry of \hinf\ with a $`1'$. We will interchangeably refer to a vertex as a node.  Hence the vertices of \ginf\ are in one-one correspondence with the rows of the matrix \hinf\ and the edges of \ginf\ are in one-one correspondence with the columns.  An edge in \ginf\ corresponding to a column containing non-zero entries in rows $i,j$ connects the nodes corresponding to these two rows. 


The nodes of \ginf\ corresponding to the rows of \hinf\ containing the rows of $D_i$, $ 0 \leq i \leq s$ will be denoted by \vi\ and similarly, the edges corresponding to the columns of \hinf\ containing the columns of $D_j$, $ 0 \leq j \leq s$ will be denoted by $E_j$. The edges associated with the columns of \hinf\ containing the columns of $C$ will be denoted by $E_{s+1}$.  As noted earlier, we use $V_{\infty}$ to denote the node associated with the row at the very top of \hinf\ (see \eqref{eq:stair_H_even_aug_ch4}), i.e., associated with the added parity-check. Each node except the node $V_{\infty}$ has degree $(r+1)$.  The vertex $V_{\infty}$ has degree $a_0$.

	\subsubsection{Canonical Graphical Representation of a Rate-Optimal seq-LRC  ($t$ Even)}
	
		We will now deduce a simple representation of the graph \ginf\ which we will refer to as the canonical representation.   We will begin with a description of the representation in the general case, followed by an example.  
		
Since each row of \hinf, apart from the top row, has weight $(r+1)$, it follows that in the resultant graph, every node except the node $V_{\infty}$ has degree $(r+1)$.  Node \vinfty\ has degree $a_0$. Since $D_0$ is a diagonal matrix, the $a_0$ edges originating from \vinfty\ are terminated in the $a_0$ nodes making up \vzero. We will use $E_{0}$ to denote this collection of edges. There are $r$ other edges that emanate from each node in \vzero, each of these edges is terminated at a distinct node in \vone. We use \eone\ to denote this collection of edges.  Each of the other $r$ edges that emanate from each node in \vone, terminate in a distinct node in \vtwo.  We use \etwo\ to denote this collection of edges.  We continue in this fashion, until we reach the nodes in \vs\ via edge-set \es.  Here the pattern is discontinued and the $r$ other edges outgoing from each node in \vs\ are terminated among themselves.  We use \espone\ to denote this last collection of edges.  From this it can be inferred that the graph has a tree-like structure, except for the edges (corresponding to edge-set \espone) linking the leaf nodes $V_s$ at the very bottom. 

\begin{example} Fig.~\ref{fig:tree_graph_t_even} shows the graph for the case $(r+1)=4, t=6$, $s= \lfloor \frac{t-1}{2} \rfloor \ = \ 2$ with $a_0=(r+1)=4$.  It can be verified from Corollary~\ref{cor:equality_conditions} that $n=106,k=54$.   Since $a_0=(r+1)=4$, node \vinfty\ has the same degree as all the other nodes, thus making this a regular graph of degree $4$.  The $4$ edges (corresponding to edge-set $E_0$) originating from \vinfty\ are terminated in the $4$ nodes making up \vzero.  The $3$ other edges (corresponding to edge-set \eone) that emanate from each node in \vzero, are each terminated at distinct nodes in \vone.   Each of the other $3$ edges (corresponding to edge-set $E_2$) that emanate from each node in \vone, terminate in a distinct node in \vtwo.   The $3$ other edges (corresponding to edge-set $E_3$) outgoing from each node in \vtwo\ are terminated among themselves.    Thus as can be seen in Fig.~\ref{fig:tree_graph_t_even}, the graph has a tree-like structure, except for the edges (corresponding to edge-set $E_3$) linking the leaf nodes $V_2$ at the very bottom. 
 \end{example}		

We will use \gi\ to denote the restriction of \ginf\ to node-set $V_i \cup  \cdots V_{s-1} \cup V_{s}$ i.e., \gi\ denotes the subgraph of \ginf\ induced by the nodes $V_i \cup  \cdots V_{s-1} \cup V_{s}$ for $0 \leq i \leq s$.   

Thus the graphs are nested:
	\bean
	\gs \subseteq {\cal G}_{s-1} \subseteq \cdots \subseteq \gtwo\ \subseteq \gone\ \subseteq \gzero \subseteq \ginf.
	\eean
The graphs $\gtwo\ \subseteq \gone\ \subseteq \gzero\ \subseteq {\mathcal{G}}_{\infty}$ for the case $t=6$ are identified in Fig.~\ref{fig:tree_graph_t_even}. 

\begin{note}  We note that the incidence matrix \hzero\ of the graph \gzero\ is obtained by deleting the top row of \hinf\ as well as the columns corresponding to the matrix $D_0$ as shown below: 
	\bea
\hzero & = & 
\begin{array}{c} \ \\ V_0 \\ V_1 \\ V_2 \\ V_3 \\ \vdots \\ V_{s-2} \\ V_{s-1} \\ V_{s} \end{array} 
\left[
\begin{array}{c|c|c|c|c|c|c}
	E_1 & E_2 & E_3 & \cdots & E_{s-1} & E_{s} & E_{s+1} \\ \hline \hline 
	A_1 & 0 & 0 & \hdots & 0 & 0 & 0  \\
	\cline{1-7}
	D_1 & A_2 & 0 & \hdots & 0 & 0 & 0 \\
	\cline{1-7}
	0 & D_2 & A_3 & \hdots & 0 & 0 & 0  \\
	\cline{1-7}
	0 & 0 & D_3 & \hdots & 0 & 0 & 0 \\
	\cline{1-7}
	\vdots & \vdots & \vdots & \ddots & \vdots & \vdots & \vdots \\
	\cline{1-7}
	0 & 0 & 0 & \hdots & A_{s-1} & 0 & 0  \\
	\cline{1-7}
	0 & 0 & 0 & \hdots & D_{s-1} & A_{s} & 0 \\
	\cline{1-7}
	0 & 0 & 0 & \hdots & 0 & D_{s} & C \\
\end{array} 
\right].  \label{eq:stair_H_even_pun_ch4}
\eea
As we will see later in Section \ref{sec:complete_code}, the construction of rate-optimal codes begins with the graph \gzero. 
\end{note}	
 \begin{note} [Girth requirement, $t$ even] 
 The structure of \ginf\ is to a large extent determined once we specify $a_0$, since the graph \ginf\ with the edges belonging to edge-set \espone\ deleted, is a tree with $V_{\infty}$ as the root node where every node apart from the root node, has degree $(r+1)$.  The root node, $V_{\infty}$, itself has degree $a_0$.    The only other freedom lies in selecting the pairs of nodes in node-set \vs\ that are linked by the edges belonging to edge-set \espone. The p-c matrix requires that these edges be selected such that each node in \vs\ is of degree $(r+1)$.   A key additional requirement that we will impose is that the girth of \ginf\ be $\geq t+1$.  While we do not claim this condition to be necessary, it does lead directly to the construction of a code that is guaranteed to recover form $t$ erasures sequentially (Theorem \ref{thm:girthreq}).  Moreover, this code can be chosen to be a binary code. As is shown in Theorem \ref{LowerBoundOna0}, for the graph \ginf\ to have girth $\geq t+1$, we must have that $a_0 \geq r+1$.
	\end{note}
	
	\begin{figure}[ht]
		\centering
		\includegraphics[scale = 0.6]{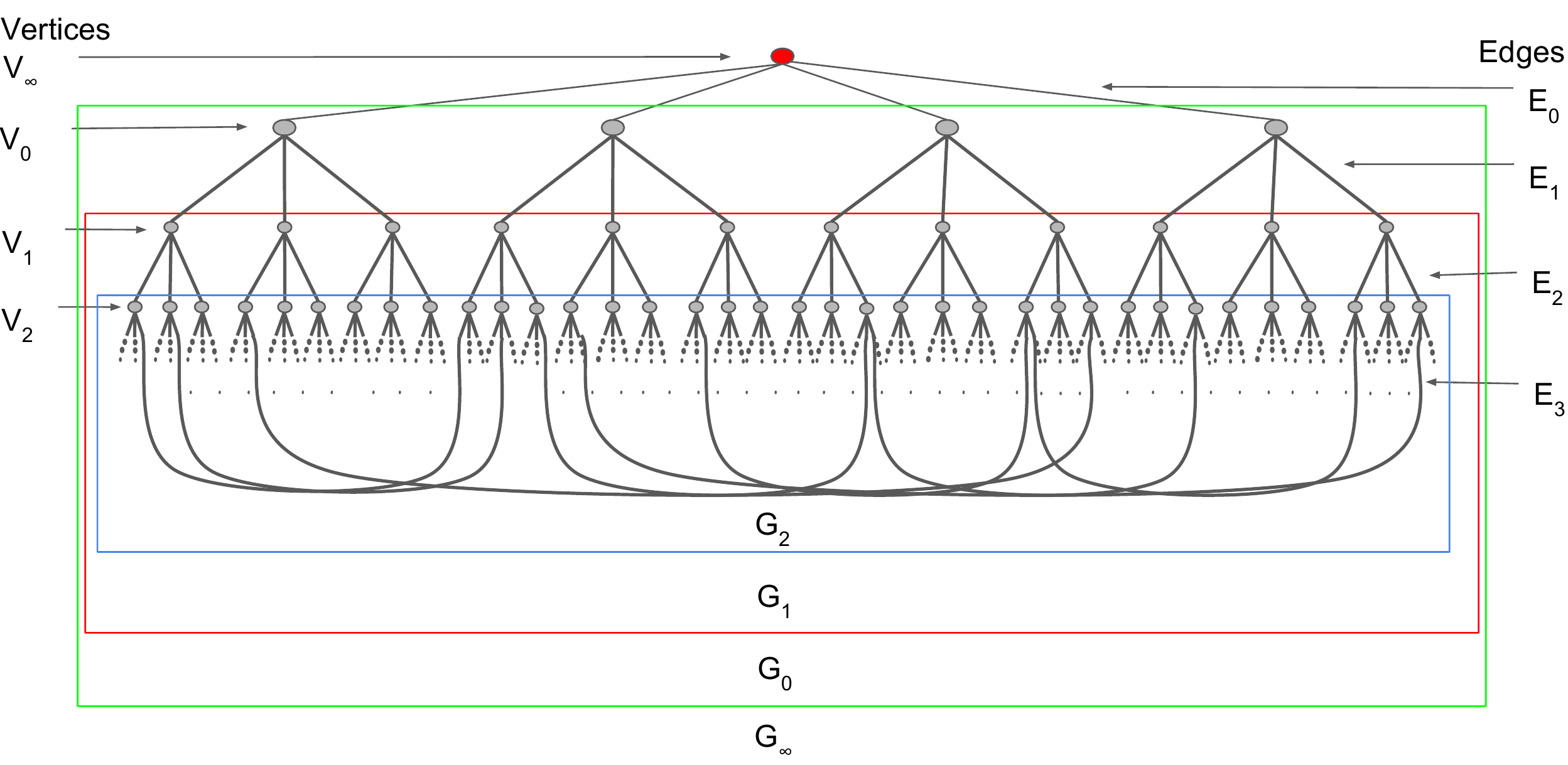}
		\caption{Case of $t$ even: Graphical representation of the code induced by the staircase nature of the p-c matrix appearing in \eqref{eq:stair_H_even_aug_ch4} shown for the case $(r+1)=4, \ t=6, \ s=2$ with $a_0=4$. Here $n=106$ and $k=54$. Apart from the edges connecting the nodes in the bottom layer, the graph is an $3$-ary tree with the exception of root node $V_{\infty}$ which has degree $4$.  In this particular example, we have chosen $a_0$ to have the minimum possible value $a_0=(r+1)$ (Theorem \ref{LowerBoundOna0}) which makes this graph a regular graph of degree $(r+1)=4$. In the general case, all the nodes would have degree $(r+1)$ with the possible exception of node \vinfty\ which would have degree $a_0$. }
		\label{fig:tree_graph_t_even}
	\end{figure}

	\subsection{$t$ Odd Case} \label{tOdd_Graph_Desc}
	
	In the case $t$ odd, the p-c matrix of a rate-optimal code can be put into the form (equation \eqref{eq:Hmatrixtodd_ch3}): 
	\bea
	H & = & \left[
	\begin{array}{c|c|c|c|c|c|c}
		D_0 & A_1 & 0 & 0 & \hdots & 0 & 0   \\
		\cline{1-7}
		0 & D_1 & A_2 & 0 & \hdots & 0 & 0  \\
		\cline{1-7}
		0 & 0 & D_2 & A_3 & \hdots & 0 & 0   \\
		\cline{1-7}
		0 & 0 & 0 & D_3 & \hdots & 0 & 0   \\
		\cline{1-7}
		\vdots & \vdots & \vdots & \vdots & \ddots & \vdots & \vdots  \\
		\cline{1-7}
		0 & 0 & 0 & 0 & \hdots & A_{s-1} & 0   \\
		\cline{1-7}
		0 & 0 & 0 & 0 & \hdots & D_{s-1} & A_{s}   \\
		\cline{1-7}
		0 & 0 & 0 & 0 & \hdots & 0 & D_{s} 
	\end{array} \right] \label{eq:stair_H_odd_ch4},
	\eea
	where $s= \lfloor \frac{t-1}{2} \rfloor$, or equivalently, $t=2s+1$. 
	Our goal once again, is to show that $H$ forces the code to have a certain graphical representation.  We add here as well, an additional row to $H$ at the top, which has all $1$s in the columns associated to $D_0$ and zeros elsewhere to obtain the matrix \hinf, shown in \eqref{eq:stair_H_odd_aug_ch4}.
	\bea
	\hinf = 
	\begin{array}{c} \ \\  V_{\infty} \\ V_0 \\ V_1 \\ V_2 \\ V_3 \\ \vdots \\ V_{s-2} \\ V_{s-1} \\ V_{s} \end{array} 
	\left[
	\begin{array}{c|c|c|c|c|c|c}
		E_0 & E_1 & E_2 & E_3 & \cdots & E_{s-1} & E_{s}  \\ \hline \hline 
		\underline{1}^t & \underline{0}^t & \underline{0}^t & \underline{0}^t  & \hdots & \underline{0}^t & \underline{0}^t  \\
		\cline{1-7}
		D_0 & A_1 & 0 & 0 & \hdots & 0 & 0  \\
		\cline{1-7}
		0 & D_1 & A_2 & 0 & \hdots & 0 & 0  \\
		\cline{1-7}
		0 & 0 & D_2 & A_3 & \hdots & 0 & 0   \\
		\cline{1-7}
		0 & 0 & 0 & D_3 & \hdots & 0 & 0  \\
		\cline{1-7}
		\vdots & \vdots & \vdots & \vdots & \ddots & \vdots & \vdots  \\
		\cline{1-7}
		0 & 0 & 0 & 0 & \hdots & A_{s-1} & 0   \\
		\cline{1-7}
		0 & 0 & 0 & 0 & \hdots & D_{s-1} & A_{s}  \\
		\cline{1-7}
		0 & 0 & 0 & 0 & \hdots & 0 & D_{s}  \\
	\end{array} 
	\right].  \label{eq:stair_H_odd_aug_ch4}
	\eea
	Since each column of \hinf\ also has weight $2$, the matrix again has an interpretation as the node-edge incidence matrix of a graph \ginf\ where the incidence matrix of the graph \ginf\ is obtained by replacing every non-zero entry of \hinf\ with $1$. We retain the earlier notation with regard to node sets \vi\ and node \vinfty\ and edge sets \ej (see \ref{eq:stair_H_odd_aug_ch4}).   We note that even here, apart from $V_{\infty}$, each node has degree $(r+1)$, with the root node $V_{\infty}$ itself having degree $a_0$.
	
	\subsubsection{Canonical Graphical Representation of a Rate-Optimal seq-LRC ($t$ Odd)}
	
	Next, we move on to a canonical representation of the graph exactly as in the case of $t$ even. Differences compared to the case $t$ even, appear only when we reach the nodes \vsmone\ via edge-set \esmone. Here, the $r$ other edges outgoing from each node in \vsmone\ are terminated in the node set \vs.  We use \es\ to denote this last collection of edges.  As can be seen, the graph has a tree-like structure, except for the edges (corresponding to edge-set \es) linking nodes in \vsmone\ and \vs.\   The restriction of the overall graph to $\vsmone\ \cup \ \vs$ i.e., the subgraph induced by the nodes $\vsmone\ \cup \vs$ can this time be seen to result in a {\em biregular, bipartite graph} \gsmone\ where each node in \vsmone\ has degree $r$ while each node in \vs\ has degree $r+1$.    A more detailed explanation for the appearance of such a bipartite graph is provided in Fig.~\ref{fig:bipartite}.
	
	\begin{figure}[ht!]
		\centering
		\begin{minipage}[c]{0.48\textwidth}
			\centering
		   \bean
		  \underbrace{ \left[ \begin{array}{ccc}
		   	D_{s-1} & A_s & 0 \\
		   	0 & D_s & C
		   \end{array} \right]}_{t \ \text{even}} 
		   \eean
		\end{minipage}
		\hspace{-0.1\textwidth}
		\begin{minipage}[c]{0.48\textwidth}
					\bean
					\underbrace{\left[ \begin{array}{cc}
						D_{s-1} & A_s\\
						0 & D_s
					\end{array} \right]}_{t \ \text{odd}}
					\eean
		\end{minipage}
				\caption{Explaining the appearance of a bi-regular, bipartite graph in the case of $t$ odd: Depicted on the left and right are the sub-matrices of the p-c matrix restricted to the last few rows and columns for the $t$ even and $t$ odd cases respectively. On the left, we see that since $C$ is a matrix where each column has weight $2$, the nodes corresponding to rows of $C$ have edges that originate and terminate among themselves, thereby forming an $r$-regular graph.  This is exactly the subgraph induced by nodes $V_s$.  On the right, we see that the nodes corresponding to the rows of $A_s$ are connected to nodes corresponding to the rows associated with the matrix $D_s$, as each column of either $A_s$ or $D_s$ has Hamming weight one.   It follows that the nodes corresponding to rows of $[\frac{A_s}{D_s}]$ form a bipartite graph. This graph is bi-regular as each row of $A_s$ has weight $r$ and each row of $D_s$ has weight $r+1$. This bipartite graph is precisely the subgraph induced by nodes $V_{s-1} \cup V_s$.}
				\label{fig:bipartite}
	\end{figure}
	
\begin{example} Fig.~\ref{fig:tree_graph_t_odd} shows the graph for the case $(r+1)=4,t=7,s=3$ with $a_0=(r+1)=4$.  It can be verified from Corollary~\ref{cor:equality_conditions} that $n=160,k=81$. Here $a_0=(r+1)=4$ and hence node \vinfty\ has the same degree as all other nodes making this a regular graph of degree $4$.  As described above, the subgraph induced by $V_2 \cup V_3$ is a biregular, bipartite graph with each node in $V_2$ having degree $3$ and each node in $V_3$ having degree $4$ in the induced bipartite graph.
\end{example}
	
	We use \ginf\ to denote the overall graph and use \gi\ to denote the restriction of \ginf\ to node-set $ V_i \cdots \cup V_{s-1} \cup V_{s}   $ i.e., \gi\ denotes the subgraph of \ginf\ induced by the nodes $V_i \cup  \cdots V_{s-1} \cup V_{s}$ for $0 \leq i \leq s-1$. Thus the graphs are nested:
	\bean
	{\cal G}_{s-1} \subseteq {\cal G}_{s-2} \subseteq \cdots \subseteq \gtwo\ \subseteq \gone\ \subseteq \gzero\ \subseteq \ginf.
	\eean
	Fig.~\ref{fig:tree_graph_t_odd} identifies the graphs $\gtwo\ \subseteq \gone\ \subseteq \gzero\ \subseteq {\mathcal{G}}_{\infty}$ for the case $t=7$. 
	
	\begin{note} [Condition on $a_0$]\label{note:bipartite_count} 
From the tree-like structure of the graph \ginf\ it follows that the number of nodes in $V_{s-1}$ and $V_s$ are respectively given by 
		\bea
		|V_{s-1}| & = & a_0 r^{s-1} \label{eq:Vsminus1_ch4} \\
		|V_s| & = & a_0 r^{s-1}\frac{r}{r+1}. \label{eq:Vs_ch4}
		\eea
		Since $r,r+1$ are co-prime, this forces $a_0$ to be a multiple of $(r+1)$.    This leads to the theorem below. 
	\end{note}
	\begin{thm}\label{thm:t_odd_azero}
		For $t$ odd, $a_0$ must be a multiple of $(r+1)$. 
	\end{thm}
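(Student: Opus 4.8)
The plan is to derive the claim as a direct arithmetic consequence of the counting identities in Remark~\ref{note:bipartite_count} together with the basic structural facts established in Corollary~\ref{cor:equality_conditions}. The key observation is that the graph \ginf\ in the $t$-odd case is, apart from the edges of edge-set \es, a tree rooted at \vinfty\ in which \vinfty\ has degree $a_0$ and every other node has degree $(r+1)$. Consequently the node counts propagate multiplicatively down the layers: $|V_0| = a_0$, and $|V_i| = a_0 r^i$ for $0 \le i \le s-1$, since each node in $V_{i-1}$ sends $r$ fresh edges down to distinct nodes in $V_i$. This recovers \eqref{eq:Vsminus1_ch4}. For the bottom layer, I would count the edges of \es\ in two ways: each of the $|V_{s-1}| = a_0 r^{s-1}$ nodes in \vsmone\ contributes $r$ edges into \es, while each of the $|V_s|$ nodes in \vs\ has degree $(r+1)$ in the induced bipartite graph (all of its incident edges lie in \es, as follows from the fact that $D_s$ is the last block). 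Equating, $a_0 r^{s-1} \cdot r = |V_s| \cdot (r+1)$, which gives \eqref{eq:Vs_ch4}: $|V_s| = a_0 r^{s-1} \frac{r}{r+1}$.

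Next I would invoke integrality: $|V_s|$ is the number of vertices in a graph and hence a nonnegative integer. Therefore $(r+1)$ must divide $a_0 r^{s}$. Since $\gcd(r, r+1) = 1$ and hence $\gcd(r^{s}, r+1) = 1$, it follows that $(r+1) \mid a_0$, which is exactly the assertion of the theorem. (Equivalently, one may phrase this using the known code parameters: Corollary~\ref{cor:equality_conditions} gives $k = a_0 r^{s} - a_0 r^{s-1}\frac{r}{r+1}$, and since $k \in \mathbb{Z}$ and $a_0 r^{s} \in \mathbb{Z}$, we again need $(r+1) \mid a_0 r^{s}$, hence $(r+1) \mid a_0$.)

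There is essentially no obstacle here — the statement is a short divisibility argument — but the one point that warrants care is justifying that \emph{every} edge incident to a node in \vs\ belongs to \es, i.e.\ that nodes in \vs\ have no edges going ``further down'' or back up past \vsmone. This is immediate from the staircase form of $H$ in \eqref{eq:Hmatrixtodd_ch3}: the rows corresponding to \vs\ are exactly the rows of the block $D_s$, and in those rows the only nonzero entries lie in the column-block $E_s$ (the block containing $A_s$ and $D_s$). Hence each such node's full degree $(r+1)$ is realized within the bipartite graph \gsmone, which is precisely what makes the double-count above valid. With that in hand, the co-primality of $r$ and $r+1$ finishes the proof.
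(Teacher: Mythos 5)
Your proof is correct and follows essentially the same route as the paper: the paper derives the theorem directly from Remark~\ref{note:bipartite_count}, namely $|V_{s-1}|=a_0 r^{s-1}$ from the tree structure and $|V_s|=a_0 r^{s-1}\frac{r}{r+1}$ from the degree count in the bipartite graph \gsmone, then invokes the coprimality of $r$ and $r+1$. Your additional justification that every edge incident on a node of $V_s$ lies in \es\ (read off from the block structure of $D_s$ in \eqref{eq:Hmatrixtodd_ch3}) is a harmless elaboration of the same argument.
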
 
The incidence matrix of the graph \gzero\ i.e., the graph \ginf\ with the node $V_{\infty}$ removed is given by :

	\bea
	\hzero = 
	\begin{array}{c} \ \\ V_0 \\ V_1 \\ V_2 \\ V_3 \\ \vdots \\ V_{s-2} \\ V_{s-1} \\ V_{s} \end{array} 
	\left[
	\begin{array}{c|c|c|c|c|c}
		E_1 & E_2 & E_3 & \cdots & E_{s-1} & E_{s}  \\ \hline \hline 
		A_1 & 0 & 0 & \hdots & 0 & 0  \\
		\cline{1-6}
		D_1 & A_2 & 0 & \hdots & 0 & 0  \\
		\cline{1-6}
		0 & D_2 & A_3 & \hdots & 0 & 0   \\
		\cline{1-6}
		0 & 0 & D_3 & \hdots & 0 & 0  \\
		\cline{1-6}
		\vdots & \vdots & \vdots & \ddots & \vdots & \vdots  \\
		\cline{1-6}
		0 & 0 & 0 & \hdots & A_{s-1} & 0   \\
		\cline{1-6}
		0 & 0 & 0 & \hdots & D_{s-1} & A_{s}  \\
		\cline{1-6}
		0 & 0 & 0 & \hdots & 0 & D_{s}  \\
	\end{array} 
	\right].  \label{eq:stair_H_odd_pun_ch4}
	\eea
As we will see later in Section \ref{sec:complete_code}, the construction of rate-optimal codes begins with the graph \gzero. 
 \begin{note} [Girth requirement, $t$ odd] 
 As in the case $t$ even, the structure of \ginf\ is largely determined once we specify $a_0$ as \ginf\ with edges in \es\ removed is just a tree with $V_{\infty}$ as the root node. Hence the only freedom lies in selecting the edges that make up the bipartite graph \gsmone.  We once again impose the key additional requirement that the girth of \ginf\ be $\geq t+1$.  While we do not claim this condition to be necessary, it does lead directly to the construction of a code that is guaranteed to recover form $t$ erasures sequentially (Theorem \ref{thm:girthreq}).  Moreover, this code can be chosen to be a binary code.  
\end{note}

	\begin{figure}[ht]
		\centering
		\includegraphics[scale = 0.55]{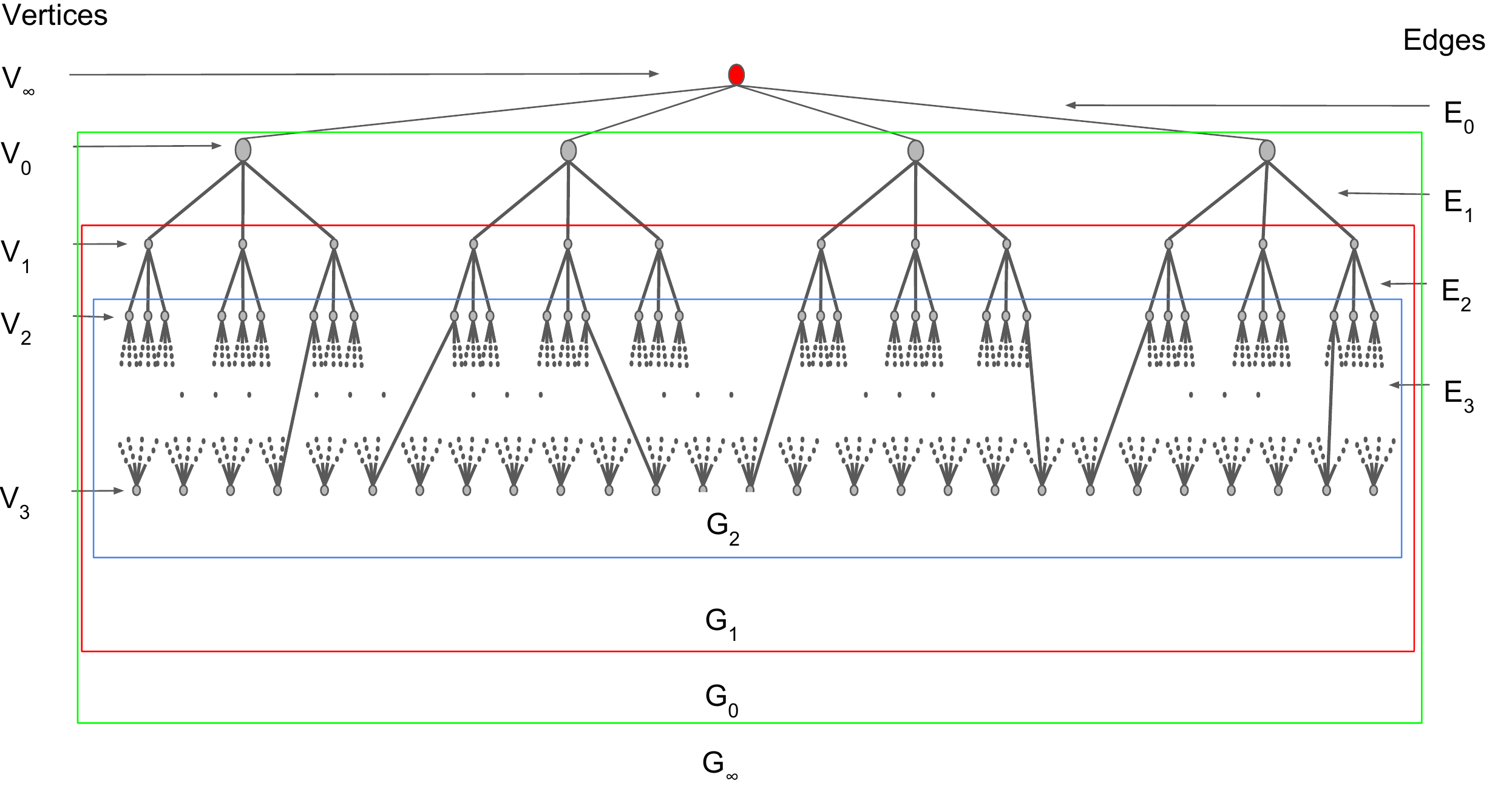}
		\caption{Case of $t$ odd: graphical representation of the code induced by the staircase nature of the p-c matrix appearing in \eqref{eq:stair_H_odd_aug_ch4} shown for the case $(r+1)=4, \ t=7, \ s=3$ with $a_0=4$. Here $n=160$ and $k=81$. Apart from the edges between $V_2$ and $V_3$, the graph is an $3$-ary tree with the exception of root node $V_{\infty}$ which has degree $4$.  The subgraph induced by $V_2 \cup V_3$ is a bi-regular, bipartite graph with each node in $V_2$ having degree $3$ and each node in $V_3$ having degree $4$ in the induced bipartite graph.  In this particular example, we have chosen $a_0$ to have the minimum possible value $a_0=(r+1)$ (Theorem \ref{LowerBoundOna0}) which makes this graph a regular graph of degree $(r+1)=4$. In the general case, all the nodes would have degree $(r+1)$ with the possible exception of node \vinfty\ which would have degree $a_0$. }
		\label{fig:tree_graph_t_odd}
	\end{figure}

	
	\section{Girth Requirement on the Underlying Graph} \label{sec:girth_req} 
	We begin by noting that the p-c matrix deduced in Corollary~\ref{cor:equality_conditions} was only specified to the extent of distinguishing between entries that are zero from those that are nonzero.  The graph \ginf\ replaces each nonzero entry of the p-c matrix with a $1$ and adds a row and then views the resultant matrix as the vertex-edge incidence matrix for the graph. We now show that irrespective of the finite field, if the graph \ginf\ has girth $\geq t+1$, then the code is guaranteed to recover from $t$ erasures sequentially. It is shown that this girth condition on \ginf\ is necessary in the case of binary codes but it may not be necessary for codes over other finite fields.    This is because it may be possible to have a p-c matrix as in \eqref{eq:Hmatrixteven_ch3}, \eqref{eq:Hmatrixtodd_ch3} whose entries belong to a nonbinary field and which leads to a graphical representation \ginf\ with girth $< t+1$ while still being able to recover from $t$ erasures sequentially.   Although not pursued in the present paper, this holds out the possibility that nonbinary seq-LRC can have shorter block length when compared against their binary counterparts.  Nonethless the block length of a rate-optimal seq LRC irrespective of the finite field has to satisfy the condition given in Remark \ref{blk_length_rem}.  We summarize our observations concerning the girth requirement in the theorem below.  
	\begin{thm} \label{thm:girthreq}
		For $t$ either even or odd, 
		\ben	
		\item the binary code associated to graph \ginf\ with each node representing a p-c over  $\mathbb{F}_2$ can recover sequentially from $t$ erasures iff \ginf\ has girth $\geq (t+1)$, 
		\item the code associated to graph \ginf\ with each node representing a p-c over $\mathbb{F}_q$ in the nonbinary $q>2$ case, can recover sequentially from $t$ erasures if \ginf\ has girth $\geq (t+1)$.  
		\een
	\end{thm}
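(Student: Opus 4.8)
The plan is to prove Theorem~\ref{thm:girthreq} by relating the sequential recovery of erasures to the structure of the subgraph of \ginf\ induced by the edges corresponding to erased symbols. Recall each edge of \ginf\ is a code symbol and each vertex is a parity check; if a vertex has exactly one erased edge incident on it, the corresponding parity check involves exactly one unknown symbol (the others being known), so that symbol can be recovered --- this is one step of the peeling decoder. Thus sequential recovery from a set $\scre$ of $\leq t$ erasures is possible \emph{via the peeling decoder} exactly when the subgraph $\ginf|_{\scre}$ (keep all vertices, only the edges in $\scre$) can be fully ``peeled'': at each stage some retained vertex has degree $1$, remove that edge, repeat, until no edges remain. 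A subgraph can be fully peeled in this sense iff it contains no cycle and no isolated multi-edge component --- equivalently iff it is a forest (acyclic).

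\textbf{Sufficiency (both binary and nonbinary).} Suppose \ginf\ has girth $\geq t+1$. Let $\scre$ be any erasure pattern with $|\scre| = s' \leq t$. I claim $\ginf|_{\scre}$ is acyclic: any cycle in $\ginf|_{\scre}$ is a cycle in \ginf, hence has length $\geq t+1 > s'$, but a cycle needs at least as many edges as its length, contradicting that only $s' \leq t$ edges are present. So $\ginf|_{\scre}$ is a forest; any forest with at least one edge has a leaf vertex, i.e. a retained vertex of degree $1$, so the peeling decoder can remove an edge and proceed by induction on $s'$ (the residual pattern has one fewer erasure and its induced subgraph is still a sub-forest of a girth-$\geq t+1$ graph). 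Hence all $s' \leq t$ erasures are recovered sequentially. Crucially this argument uses only that a parity check over \emph{any} field $\mathbb{F}_q$ with exactly one unknown determines that unknown (the nonzero coefficient is invertible), so it proves both item~1 (the ``if'' direction) and item~2.

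\textbf{Necessity in the binary case.} Suppose \ginf\ has girth $g \leq t$; let $\scre$ be the edge-set of a shortest cycle, so $|\scre| = g \leq t$. I must show the binary code cannot recover from this erasure pattern, i.e. there exist two distinct codewords of \calc\ that agree on all coordinates outside $\scre$. Equivalently, $\calc$ contains a nonzero codeword supported within $\scre$. Take the vector $\underline{x}$ that is $1$ on the $g$ edges of the cycle and $0$ elsewhere. Every vertex of \ginf\ is incident to either $0$ or $2$ edges of the cycle, so over $\mathbb{F}_2$ every parity check (row of \hinf, and hence every row of the original $H$, which is a subset/combination of these rows) evaluates to $0$ on $\underline{x}$; thus $\underline{x} \in \calc$. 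Since the two codewords $\underline{0}$ and $\underline{x}$ differ only on $\scre$, no decoder --- sequential or otherwise --- can disambiguate them when exactly the symbols in $\scre$ are erased, so the code fails to recover from these $\leq t$ erasures. This completes the ``only if'' direction of item~1.

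\textbf{Main obstacle.} The conceptually delicate point is handling the fictitious vertex $V_{\infty}$ and the passage between \hinf\ and the actual p-c matrix $H$. For sufficiency this is harmless: peeling on \ginf\ (which includes $V_{\infty}$) only gives the decoder \emph{more} usable parity checks than $H$ alone, so recovery succeeds a fortiori; one should note that a parity check at $V_{\infty}$ with one erased incident edge is still a genuine constraint derivable from $H$ since the $V_{\infty}$ row is a linear combination of rows of $H$. For necessity, one must verify that the even-weight-on-each-vertex vector $\underline{x}$ annihilates the actual rows of $H$ and not merely the rows of \hinf; but each row of $H$ is one of the vertex rows of \hinf\ (minus the $V_{\infty}$ row, which is anyway a sum of the others), so the cycle vector is orthogonal to all of them. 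The other thing worth checking carefully --- essentially a remark rather than an obstacle --- is the claim that full peelability is \emph{equivalent} to acyclicity: a connected graph with a cycle can never be reduced to the empty graph, because at the moment the last edge of any cycle would be removed the two endpoints each still carry $\geq 1$ cycle-edge, so no vertex on the cycle ever reaches degree $1$ while cycle-edges remain; this is exactly why girth $< t+1$ is fatal and girth $\geq t+1$ is enough.
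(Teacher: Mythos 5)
Your argument is essentially the paper's proof read contrapositively: the paper takes a minimal unrecoverable erasure set, notes that every incident vertex other than \vinfty\ must then have degree $\geq 2$ in the erased subgraph, and walks along erased edges until a vertex repeats, producing a cycle of length $\leq t$; you argue directly that girth $\geq t+1$ forces the erased subgraph to be a forest and then peel leaves. The binary ``only if'' direction (the cycle indicator vector is annihilated by every row of $H$, so it is a nonzero codeword supported on the erased coordinates) is the same in both proofs. So the overall structure is sound.

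The one step that does not hold up is your handling of \vinfty\ in the nonbinary case. You license peeling at \vinfty\ by asserting that the \vinfty\ row is a linear combination of the rows of $H$. That is true over $\mathbb{F}_2$, where it is the mod-$2$ sum of the rows, but over $\mathbb{F}_q$ with $q>2$ the appended row (all ones on the $D_0$ columns) is in general \emph{not} in the row space of $H$; the paper calls it a fictitious parity check for exactly this reason, and its proof never uses it --- only degree-one vertices in ${\cal U}\setminus\{\vinfty\}$ are exploited. As written, your sufficiency argument for item 2 could at some stage invoke a constraint the code need not satisfy. The repair is short and requires no new idea: every tree with at least one edge has at least two leaves, so every nontrivial component of your forest contains a degree-one vertex different from \vinfty, and you may always peel there, using a genuine row of $H$ (alternatively, mirror the paper's observation that an unrecoverable pattern forces all non-\vinfty\ vertices to have degree $\geq 2$, hence a cycle). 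With that substitution the proof is complete; if instead one insists on reading the theorem as defining the code by all rows of \hinf\ including the \vinfty\ row, the issue disappears, but then your ``linear combination'' justification is unnecessary rather than correct.
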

	\begin{proof}
	   Let us assume that there is an erasure pattern involving $\ell \leq t$ erased code symbols and that it is not possible to recover from this erasure pattern sequentially and $\ell$ is the smallest number with this property.  These $\ell$ erasures correspond to $\ell$ distinct edges $\{e_i\}_{i=1}^{\ell}$ of the graph \ginf.  Let $J=\{e_i \mid 1 \leq i \leq \ell\}$ and let us restrict our attention to the subgraph \gsub\ of \ginf\ having edge-set $J$ and vertex set equal to the set ${\cal U}$ of nodes that the edges in $J$ are incident upon. We note that every node in ${\cal U} \setminus \{\vinfty\}$ in the graph \gsub\ must have degree $\geq 2$.  This is because, the presence in \gsub\ of a node in ${\cal U} \setminus \{\vinfty\}$ of degree one would imply that the corresponding row of the p-c matrix can be used to recover the erased code symbol corresponding to the edge incident on it in \gsub.  We now consider two cases separately.  
		\ben
		\item Suppose $\vinfty \not \in {\cal U}$. We start with edge $e_1$, this must be linked to a p-c node $U_1 \in {\cal U}$ which is linked to a second erased symbol (say) $e_2$ and so on, as degree of each node in \gsub\ is $\geq 2$.  In this way, we can create a path in \gsub\ with distinct edges.  But since there are only a finite number of nodes, this must eventually force us to revisit a previous node, thereby establishing that the graph \ginf\ has girth $\leq \ell \leq t$. 
		\item Next  suppose $\vinfty \in {\cal U}$.  In this case, we start at an edge incident upon node \vinfty\ corresponding to an erased symbol and move to the node at the other end of the edge.  Since that node has degree $\geq 2$, there must be an edge corresponding to a second erased symbol that is connected to the node and so on.  Again the finiteness of the graph will force us to revisit either \vinfty\ or else, a previously-visited node proving once again that an unrecoverable erasure pattern contains a cycle and hence the graph \ginf\ has girth $\leq \ell \leq t$. 
		\een
		We have thus established that having a girth $\geq (t+1)$ will guarantee sequential recovery from $\leq t$ erasures.  For $q=2$, it is easy to see that a girth of $\geq (t+1)$ is necessary since if the girth is $\leq t$, then the set of erasures with erased code symbols forming a cycle of length $\leq t$ is uncorrectable regardless of whether or not the nodes associated with this cycle includes \vinfty.   This is because the columns of the p-c matrix $H$ corresponding to the edges forming the cycle, sum to the all zero vector, and are hence linearly dependent.
	\end{proof}
	
	\begin{thm} \label{LowerBoundOna0}
		For the graph \ginf\ to have girth $\geq (t+1)$, the degree $a_0$ of \vinfty\ or equivalently, the number $a_0$ of nodes in $V_0$, is lower bounded as $a_0 \geq r+1$.
	\end{thm}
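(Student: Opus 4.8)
The plan is to read off everything from the canonical description of \ginf\ developed in Section~\ref{sec:graphical_rep}. Deleting the ``closing'' edge-set (edge-set \espone\ when $t=2s+2$ is even, edge-set \es\ when $t=2s+1$ is odd) turns \ginf\ into a tree rooted at \vinfty\ in which \vinfty\ has degree $a_0$, every other node has degree $r+1$, the $a_0$ children of \vinfty\ are exactly the nodes of \vzero, and a node of $V_j$ lies at depth $j+1$ from \vinfty\ (in the odd case the nodes of \vs\ lie outside this tree, the leaves being the nodes of \vsmone\ at depth $s$). Deleting \vinfty\ therefore breaks this tree into exactly $a_0$ vertex-disjoint rooted subtrees $T_1,\dots,T_{a_0}$, one hanging from each node of \vzero, and every node of \vs\ belongs to exactly one $T_\ell$. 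First I would fix a single node $u\in\vs$ and show that $u$ together with its closing-edge neighbours already forces $r+1$ of the $T_\ell$'s to be distinct; since there are only $a_0$ subtrees in all, $a_0\ge r+1$ follows at once.

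Consider first $t$ even, so girth $\ge t+1=2s+3$. By Corollary~\ref{cor:equality_conditions} the row of $u$ inside the block $C$ has weight $r$, so $u$ is incident with exactly $r$ edges of \espone, joining it to $r$ nodes $y_1,\dots,y_r\in\vs$; these $r+1$ nodes $u,y_1,\dots,y_r$ are pairwise distinct, since a self-loop or a repeated edge would be a cycle of length $\le 2<t+1$. The heart of the proof is the claim that $u,y_1,\dots,y_r$ lie in $r+1$ pairwise distinct subtrees. All $r+1$ of them sit at depth $s+1$, so if two of them lay in one $T_\ell$ their lowest common ancestor would sit at depth between $1$ and $s$, and they would be joined by a tree path of length at most $2s$ whose only \vs-level vertices are its two endpoints. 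Hence: if some $y_i$ shares a subtree with $u$, the tree path from $u$ to $y_i$ together with the edge $u y_i$ is a cycle of length at most $2s+1<t+1$; and if $y_i$ and $y_j$ ($i\ne j$) share a subtree, the tree path from $y_i$ to $y_j$ together with the two edges $u y_i$ and $u y_j$ is a cycle of length at most $2s+2=t<t+1$. Both possibilities contradict the girth hypothesis, which proves the claim and therefore the bound.

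The case $t$ odd (girth $\ge t+1=2s+2$) goes through in the same way, now with the bi-regular bipartite closing graph \gsmone\ in place of $C$: a node $z\in\vs$ has all $r+1$ of its neighbours in \vsmone, and these are pairwise distinct. If two of them, $w_1$ and $w_2$, lay in one $T_\ell$ they would be joined by a tree path of length at most $2(s-1)$ (both being at depth $s$ with a common ancestor at depth $\ge 1$); appending the edges $z w_1$ and $z w_2$ yields a cycle of length at most $2s<t+1$, a contradiction. So the $r+1$ neighbours of $z$ occupy $r+1$ distinct subtrees and again $a_0\ge r+1$. (For $t$ odd this also follows immediately from Theorem~\ref{thm:t_odd_azero}.)

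I do not anticipate a genuine obstacle: the argument is essentially bookkeeping. The one point that needs care is checking that the cycles exhibited above are genuine cycles, i.e.\ that the tree path joining the two leaf-level vertices never secretly passes through $u$ (resp.\ $z$) and does not close up prematurely. This is immediate from the tree structure, since the only \vs-level (resp.\ \vsmone-level) vertices on such a path are its endpoints, and the presence of a deeper common ancestor would only shorten the resulting cycle, strengthening the contradiction. Finally I would remark that the very small cases $t\in\{2,3\}$ are degenerate (the tree has depth one, each $T_\ell$ is a single node), and there the statement either reduces to the absence of multiple edges in \ginf\ or is already contained in the earlier literature.
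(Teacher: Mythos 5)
Your proof is correct and follows essentially the same route as the paper: for $t$ even the paper likewise partitions $V_s$ according to the subtrees rooted at the nodes of $V_0$ and uses the girth hypothesis to rule out an \espone-edge joining two nodes of one subtree (cycle of length $\leq 2s+1$) and two \espone-neighbours of a single node landing in the same subtree (cycle of length $\leq 2s+2=t$), which forces $a_0\geq r+1$. The only difference is cosmetic: for $t$ odd the paper simply invokes Theorem~\ref{thm:t_odd_azero}, whereas you also supply a direct girth argument via the bipartite closing graph (and note the Theorem~\ref{thm:t_odd_azero} route as well).
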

	\begin{proof}
		\underline{Case of $t$ odd:} \ As shown in Theorem~\ref{thm:t_odd_azero}, $a_0$ must be in fact be a multiple of $(r+1)$.  
		
		\underline{Case of $t$ even:}  Let $v \in V_0$. Let $N_v$ denote the set containing all nodes in $V_s$ which are at a distance at most $s$ from the node $v \in V_0$. Here, distance between vertices $w_1,w_2$ is measured as the number of edges in the shortest path between $w_1,w_2$. Note that $N_v \subseteq V_s$ and $N_v \cap N_w = \emptyset$, $v \neq w$, as the two sets of vertices are disjoint.
		\begin{itemize}
			\item {\em In \ginf, no two nodes in $N_v$ can be connected by an edge in \espone\ :} every node in $N_v$ has a path of length (measured by the number of edges) $s$ leading to $v$, not involving an  edge from \espone. Now, if two nodes in $N_v$ were to be connected by an edge in \espone\, then there would be a cycle of length at most $2s+1 < t+1$, which violates the condition for $t$ erasure-correction.
			\item {\em A node in $N_v$ cannot connect to two nodes in $N_w$ via edges in \espone, $\forall v \neq w, w \in V_0$:} It would result in a cycle of length at most $2s+2 < t+1$.
		\end{itemize}
	  Hence, in conclusion, each node in $N_v$ must connect via edges in \espone\ to $r$ nodes, with $i$th node belonging to $N_{w_i}$, $1 \leq i \leq r$ respectively, for some set of $r$ distinct nodes $\{w_1,...,w_r\} \subseteq V_0-\{v\}$. It follows that there must be at least $r+1$ distinct nodes in $V_0$. In other words, $a_0 \geq r+1$. 
	\end{proof}
	  \begin{note} [Lower bound on block length]
	  	The inequality $a_0 \geq r+1$, imposes a lower bound on the block-length of the binary rate-optimal codes. This will be elaborated upon in the next section.
	  \end{note}


		\section{Seq-LRC that are Optimal with Respect to Both Rate and Block Length} \label{sec:Moore_ch4}
		
		In this section, we begin by presenting a construction for seq-LRC given in \cite{RawMazVis}. We improve upon the lower bound  on the code rate of this construction provided in \cite{RawMazVis} and also present a slight generalization that replaces the regular bipartite graph appearing in \cite{RawMazVis} with just a regular graph. This improved lower bound on code rate also applies to this slight generalization of the construction appearing in \cite{RawMazVis}. It is shown here that for a given $(r,t)$, there is a unique block length for which the improved lower bound on code rate is equal to the right hand side of the upper bound on rate derived in Theorem \ref{thm:rate_both}. We will see in this section that this unique block length corresponds to having $a_0=r+1$ and hence the resultant codes are not only rate optimal, they also have least possible block length (Theorem \ref{LowerBoundOna0}) for a binary rate-optimal seq-LRC.   For a given $(r,t)$, codes based on this construction with this unique block length correspond to codes based on a type of regular graphs known as a Moore graph.  In the present context, Moore graphs can be viewed as regular graphs, where every node has degree $r+1$ and which have the smallest number of vertices possible under the added requirement that the graphs have girth $\geq t+1$.   Unfortunately, Moore graphs exist (Theorem \ref{thm:MooreExis}) only for a very sparse set of $(r,t)$ parameters.   We begin with the construction given in \cite{RawMazVis}.
		\begin{const} \label{con:Raw_bipartite} (\cite{RawMazVis})
			Consider an $(r+1)$-regular bipartite graph $G$ having girth $\geq t+1$. Let the number of nodes in $G$ be $N$. Let $H$ be the $(N \times \frac{N(r+1)}{2})$ node-edge incidence matrix of the graph $G$. The binary code $\mathcal{C}$ with p-c matrix $H$ thus defined is a seq-LRC with parameters $(n = \frac{N(r+1)}{2},k \geq n-N,r,t)$ over $\mathbb{F}_2$. The construction takes $G$ as input and constructs code $\mathcal{C}$ as output.
		\end{const}
		\begin{proof}
			(sketch of proof)  
			Let $\mathcal{C}$ be the code obtained as the output of Construction \ref{con:Raw_bipartite} having as input, an $(r+1)$-regular bipartite graph $G$ with girth $\geq t+1$. The code $\mathcal{C}$ is a seq-LRC with parameters $(r,t)$, simply because a set of erased symbols with least cardinality which cannot be recovered through sequential recovery, must correspond to a set of linearly dependent columns in $H$ with least cardinality and hence corresponds to a set of edges forming a cycle in $G$.  Since $G$ has girth $\geq t+1$, the number of edges in this cycle must be $>t$ and hence the number of erased symbols is $>t$.  The code parameters follow from a simple calculation. 
		\end{proof}
		The graph $G$ described in Construction \ref{con:Raw_bipartite} need not have the tree-like structure of the graph \ginf.   Let $\mathcal{C}$ be the code obtained as the output of Construction \ref{con:Raw_bipartite} having as input, an $(r+1)$-regular bipartite graph $G$ having girth $\geq t+1$.  Since the graph $G$ need not have the tree-like structure of \ginf, it may not in general, be possible for the code $\mathcal{C}$ to have a p-c matrix similar to \eqref{eq:stair_H_even_aug_ch4} for $t$ even and \eqref{eq:stair_H_odd_aug_ch4} for $t$ odd and hence will not be rate optimal in general. We will now see that the code $\mathcal{C}$ is rate-optimal  if and only if $G$ is a Moore graph. It follows from Construction \ref{con:Raw_bipartite}, as was observed in \cite{RawMazVis}, that the rate of the code $\mathcal{C}$ is $\geq \frac{r-1}{r+1}$.   We will shortly provide a precise value for the rate of this code. 
	
		\begin{defn} (\textbf{Connected Component})
			Let $G$ be a graph. Then a connected component of $G$ is an induced subgraph $G_1$ such that $G_1$ is connected as a graph and moreover, there is no edge in $G$, connecting a vertex in $V(G_1)$ to a vertex in $V(G)\setminus V(G_1)$. 
		\end{defn}
		Clearly, if $G$ is a connected graph then there is just a single connected component, namely the graph $G$ itself. 

		\begin{thm} \label{thm:rate}
			Let $G$ be a connected, $(r+1)$-regular bipartite graph with $N$ vertices and having girth $\geq t+1$. The code $\mathcal{C}$ obtained as the output of Construction \ref{con:Raw_bipartite} with the graph $G$ as input
			is a seq-LRC with parameters $(n = \frac{N(r+1)}{2},k = n-N+1,r,t)$ over $\mathbb{F}_2$ and hence having rate given by:
			\bea
			\frac{r-1}{r+1}+\frac{1}{n}. \label{RateofRawatConst}
			\eea
		\end{thm}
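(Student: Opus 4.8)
The plan is to note that everything in the statement except the exact value of $k$ is already delivered by Construction~\ref{con:Raw_bipartite}, and then to pin down $k$ by computing the $\mathbb{F}_2$-rank of the incidence matrix $H$ of $G$. First I would record the block length: since $G$ is $(r+1)$-regular on $N$ vertices, the handshake lemma gives $|E(G)| = \frac{N(r+1)}{2}$, and each edge is a distinct code symbol, so $n = \frac{N(r+1)}{2}$. The fact that $\mathcal{C}$ is a seq-LRC with locality parameters $(r,t)$ is exactly the conclusion of Construction~\ref{con:Raw_bipartite}: a minimal set of erased symbols that cannot be peeled off sequentially corresponds to a minimal set of linearly dependent columns of $H$, hence to a cycle of $G$, and every cycle has length $> t$ by the girth hypothesis.

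It then remains to show $k = \dim_{\mathbb{F}_2}(\mathcal{C}) = n - \mathrm{rank}_{\mathbb{F}_2}(H)$ equals $n-(N-1)$, i.e.\ that $\mathrm{rank}_{\mathbb{F}_2}(H) = N-1$. For the upper bound, observe that every column of $H$ (an edge) has exactly two nonzero entries, so the $\mathbb{F}_2$-sum of all $N$ rows of $H$ is the all-zero vector; hence $\mathrm{rank}_{\mathbb{F}_2}(H) \le N-1$. For the matching lower bound I would characterise the left null space of $H$ over $\mathbb{F}_2$: for $S \subseteq V(G)$, the sum of the rows indexed by $S$ is zero iff every edge of $G$ has an even number (that is, $0$ or $2$) of endpoints in $S$, equivalently iff no edge of $G$ crosses the partition $(S,\, V(G)\setminus S)$. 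Since $G$ is connected, this forces $S = \emptyset$ or $S = V(G)$, so the left null space of $H$ is $\{\underline{0}, \underline{1}\}$, which is $1$-dimensional; therefore $\mathrm{rank}_{\mathbb{F}_2}(H) = N-1$ and $k = n-N+1$.

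Finally, substituting, $\frac{k}{n} = 1 - \frac{N-1}{n} = 1 - \frac{N}{n} + \frac{1}{n}$, and since $\frac{N}{n} = \frac{N}{N(r+1)/2} = \frac{2}{r+1}$, this becomes $\frac{k}{n} = 1 - \frac{2}{r+1} + \frac{1}{n} = \frac{r-1}{r+1} + \frac{1}{n}$, as claimed. The only step carrying any content is the rank computation, and there the sole obstacle is the lower bound $\mathrm{rank}_{\mathbb{F}_2}(H) \ge N-1$; this is precisely where connectedness of $G$ enters (bipartiteness of $G$ plays no role in this part), via the standard fact that the binary cocycle space of a connected graph on $N$ vertices has dimension exactly $N-1$.
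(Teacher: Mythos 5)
Your proposal is correct and follows essentially the same route as the paper: the upper bound $\mathrm{rank}_{\mathbb{F}_2}(H)\le N-1$ comes from each column having weight $2$ (rows sum to zero), and the lower bound comes from observing that a set of rows summing to zero corresponds to a vertex set with no edges crossing its cut, which connectedness forces to be empty or all of $V(G)$ — the paper phrases this as the minimal dependent row set inducing a connected component, but the argument is the same. Your remarks that bipartiteness is irrelevant here and that only the rank computation carries content also match the paper's own observations following the theorem.
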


		\begin{proof}
			Let $H$ be the node-edge incidence matrix of the graph $G$.
			From the description of Construction \ref{con:Raw_bipartite}, the matrix $H$ is a p-c matrix of the code $\mathcal{C}$.
			The p-c matrix $H$, has each row of Hamming weight $(r+1)$ and each column of weight $2$.  It follows that the sum of all the rows of $H$ is the all-zero vector.  Thus the rank of $H$ is $\leq N-1$.  

			Next, let $\ell$ be the smallest integer such that a set of $\ell$ rows of $H$ add up to the all-zero vector. Let $M$ be the set of nodes in $G$ corresponding to a set of $\ell$ rows $\underline{r}_1,\ldots,\underline{r}_{\ell}$ in $H$ such that $\sum_{i=1}^{\ell} \underline{r}_i=\underline{0}$.  We note that any edge $(u,v)$ in $G$ with $u \in M$ will be such that $v \in M$ and similarly if $v \in M$ then $u \in M$. Let $S=\cup_{i=1}^{\ell}\text{supp}(\underline{r}_i)$, it follows that the subgraph of $G$ with vertex set equal to $M$ and the edge set equal to the edges associated to columns of $H$ indexed by $S$ form a connected component of the graph $G$.  But since $G$ is connected, $\ell=|M|=N$ and hence $S=[n]$.  It follows that any set of $N-1$ rows of $H$ is linearly independent.  Hence the rank of $H$ equals $N-1$.   The parameters of $\mathcal{C}$ are thus given by:
			\bean
			\text{block length} \ n & = & \frac{N(r+1)}{2} \\
			\text{dimension} \ k & = & \frac{N(r+1)}{2} - (N-1) \\
			\text{rate} \ R & = & 1 - \frac{2(N-1)}{N(r+1)} =  1 - \frac{2}{r+1} + \frac{2}{N(r+1)}  =  \frac{r-1}{r+1}+\frac{1}{n}.
			\eean
		\end{proof}
		We note here that while the Construction~\ref{con:Raw_bipartite} made use of regular bipartite graphs, the bipartite requirement is not a requirement as in the argument above, we only used the fact that the graph $G$ is regular.   We collect together the above observations concerning rate and sufficiency of the regular-graph requirement into a (slightly) modified construction. 
		
		\begin{const} \label{con:modified} (modified version of the construction in \cite{RawMazVis})
			Let $G$ be a connected, regular graph of degree $(r+1)$ and of girth $\geq t+1$ having exactly $N$ vertices. Let $H$ be the $(N \times \frac{N(r+1)}{2})$ node-edge incidence matrix of the graph $G$ with each row representing a distinct node and each column representing a distinct edge. The code $\mathcal{C}$ with p-c matrix $H$ is a seq-LRC having parameters $(n = \frac{N(r+1)}{2},k=n-(N-1),r,t)$ over $\mathbb{F}_2$. The construction takes $G$ as input and constructs code $\mathcal{C}$ as output.
		\end{const}
		
		For the rest of this section: let $r,t$ be arbitrary but fixed positive integers. Let $G$ be a connected, regular graph of degree $(r+1)$ and of girth $\geq t+1$ having exactly $N$ vertices. Let $\mathcal{C}$ be the seq-LRC having parameters $(n = \frac{N(r+1)}{2},k=n-(N-1),r,t)$ over $\mathbb{F}_2$ obtained as the output of the Construction \ref{con:modified} with the graph $G$ as input.

		Clearly, the rate of the code $\mathcal{C}$ is maximized by minimizing the block length $n=\frac{N(r+1)}{2}$ of the code, or equivalently, by minimizing the number of vertices $N$ in $G$. Thus there is interest in regular graphs of degree $r+1$, having girth $\geq t+1$ with the least possible number of vertices.   This leads us to the Moore bound and Moore graphs.

		\begin{thm} \label{thm:Moore} \textbf{(Moore Bound)} (\cite{DynCageSur})
			The number of vertices $L$ in a regular graph of degree $r+1$ and girth $\geq t+1$ satisfies the lower bound :
			\bean
			L \ \geq N_{r,t} & := &  1+\sum_{i=0}^{s}(r+1)r^i,  \ \  \ \ \text{ for $t=2s+2$ even }, \label{eq:Moore:Even} \\
			L \ \geq N_{r,t} & := & 2\sum_{i=0}^{s}r^i,  \ \ \ \ \ \ \ \ \ \ \ \ \ \ \  \text{ for $t=2s+1$ odd }.  \label{eq:Moore:Odd}
			\eean
		\end{thm}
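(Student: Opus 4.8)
The plan is to prove both inequalities by the classical Moore-type ``ball-counting'' argument, treating the even-$t$ case (odd girth, $g \geq 2s+3$) and the odd-$t$ case (even girth, $g \geq 2s+2$) separately. In each case the only structural input needed is that a lower bound on the girth forces a small neighbourhood of a vertex (respectively, of an edge) to be acyclic; once that tree structure is in hand, $(r+1)$-regularity determines the vertex counts layer by layer and the two bounds fall out of finite geometric sums.

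\textbf{Case $t = 2s+2$ (girth $\geq 2s+3$).} Pick any vertex $v$ and, for $0 \leq i \leq s+1$, let $L_i$ be the set of vertices at distance exactly $i$ from $v$. First I would check that the ball $B(v,s+1) = \bigcup_{i=0}^{s+1} L_i$ induces a tree rooted at $v$: any cycle inside it, or any two distinct shortest paths from $v$ to a common vertex at distance $\leq s+1$, would produce a cycle of length at most $2(s+1) = 2s+2 < 2s+3 \leq g$, contradicting the girth bound. Given this tree structure, $|L_0| = 1$, $|L_1| = r+1$, and $|L_i| = (r+1)r^{i-1}$ for $1 \leq i \leq s+1$, since each vertex at layer $i-1 \geq 1$ uses one of its $r+1$ edges to reach its parent and contributes $r$ fresh vertices to layer $i$. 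Summing, $L \geq |B(v,s+1)| = 1 + \sum_{i=1}^{s+1}(r+1)r^{i-1} = 1 + \sum_{i=0}^{s}(r+1)r^i = N_{r,t}$.

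\textbf{Case $t = 2s+1$ (girth $\geq 2s+2$).} Pick any edge $\{u,v\}$. For $0 \leq i \leq s$ let $U_i$ (respectively $V_i$) be the set of vertices $w$ whose distance to $u$ (respectively to $v$) equals $i$ along a path not using the edge $\{u,v\}$. As before, the girth bound forces $\bigcup_i U_i$ and $\bigcup_i V_i$ each to be a tree, rooted at $u$ and $v$ respectively, with $|U_0| = |V_0| = 1$ and $|U_i| = |V_i| = r^i$ for $1 \leq i \leq s$ (the root now has only $r$ available edges after excluding $\{u,v\}$, and every deeper vertex again contributes $r$ fresh vertices). The remaining point, which is the only genuinely delicate step, is disjointness of the two trees: if some vertex $w$ lay in both, one could form the closed walk $w \to u$, then the edge $u \to v$, then $v \to w$, of total length at most $s + 1 + s = 2s+1 < 2s+2 \leq g$; since this closed walk traverses the edge $\{u,v\}$ exactly once, the set of edges used an odd number of times is a nonempty even subgraph and hence contains a cycle, necessarily of length $\leq 2s+1 < g$, a contradiction. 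Therefore the $2\sum_{i=0}^{s} r^i$ vertices counted are distinct, giving $L \geq 2\sum_{i=0}^{s} r^i = N_{r,t}$.

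\textbf{Main obstacle.} The layer counting is routine once acyclicity is established; the real work is in translating ``girth $\geq t+1$'' into the precise tree-and-disjointness statements. In particular one must verify carefully that the closed walks produced above genuinely contain a cycle strictly shorter than the girth (handling backtracking, and noting that the graph need not be bipartite, so one cannot argue by parity alone), and confirm that the chosen radii ($s+1$ about a vertex for even $t$, and $s$ about each endpoint of an edge for odd $t$) are exactly the largest for which the neighbourhood is forced to be a tree. With those points settled, both bounds reduce to evaluating $\sum_{i=0}^{s} r^i$-type sums.
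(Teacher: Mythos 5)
The paper never proves this theorem: it is quoted as a known result from the cage survey \cite{DynCageSur}, so there is no internal argument to compare yours against. Your proposal is the classical Moore-bound counting argument (vertex-rooted ball of radius $s+1$ for $t$ even, edge-rooted pair of depth-$s$ trees for $t$ odd), and it is correct in substance. In particular you correctly isolate and settle the one delicate point in the odd case: the closed walk through the edge $\{u,v\}$ traverses that edge exactly once, so the edges of odd multiplicity form a nonempty subgraph of even degrees and hence contain a cycle of length at most $2s+1<2s+2$, which is exactly what is needed for disjointness of the two trees (parity alone would not suffice in a non-bipartite graph, as you note).

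One overstatement is worth repairing in the even-$t$ case: girth $\geq 2s+3$ does \emph{not} force the ball $B(v,s+1)$ to induce a tree, because edges joining two vertices both at distance exactly $s+1$ from $v$ create cycles of length up to $2s+3$ and are therefore permitted; indeed in the Petersen and Hoffman-Singleton graphs (which attain the bound for $t=4$) the radius-$2$ ball is the entire graph. What the girth argument actually yields -- and all that your layer counts use -- is the following: every vertex at distance $i\leq s+1$ from $v$ has a unique neighbour at distance $i-1$; there are no edges inside or between the layers $L_0,\dots,L_s$ other than parent--child edges; and distinct vertices of $L_{i-1}$ have disjoint neighbour sets in $L_i$ for all $i\leq s+1$. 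Each violation produces a cycle of length at most $2s+2$, so these statements do follow from the girth hypothesis, and with them $|L_i|\geq (r+1)r^{i-1}$ for $1\leq i\leq s+1$ and the bound $L\geq 1+\sum_{i=0}^{s}(r+1)r^i$ go through unchanged. (In the odd case your claim is fine as stated: even an edge inside the outermost layer $U_s$ would close a cycle of length at most $2s+1<2s+2$, so the depth-$s$ neighbourhoods really are induced trees.) With that rewording the proof is complete.
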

		\begin{defn} \label{def:Mooregraph} \textbf{(Moore graphs)}
			A regular graph with degree $r+1$ with girth at least $t+1$ with number of vertices $L$ satisfying $L = N_{r,t}$ is called a Moore graph.
		\end{defn}
		\begin{thm} \label{thm:MooreExis} \textbf{(Existence of Moore graphs)} (\cite{DynCageSur})
			There exists a Moore graph of degree $r+1$ and girth $t+1$ if and only if
			\begin{enumerate}[(a)]
			\item $r = 1$ and $t \geq 2$, (cycles);
			\item $t = 2$ and $r \geq 1$, (complete graphs);
			\item $t = 3$ and $r \geq 1$, (complete bipartite graphs);
			\item $t = 4$ and \bit 
			\item $r = 1$, (the 5-cycle), 
			\item $r = 2$, (the Petersen graph), 
			\item $r = 6$, (the Hoffman-Singleton graph), 
			\item and possibly $r = 56$; \eit 
			\item $t = 5, 7, or 11$, and there exists a symmetric generalized $n$-gon of order $r$.
			\end{enumerate} 
	    \end{thm}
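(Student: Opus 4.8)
The plan is to split on the parity of $t$, which is the parity of the girth $g=t+1$, and within each case establish both directions of the equivalence, writing $k=r+1$ for the degree. The degenerate case $r=1$ is immediate and disposed of first: the $2$-regular connected graphs are the cycles $C_N$, the smallest one of girth $\geq t+1$ is $C_{t+1}$, and it exists for every $t\geq 2$, giving item (a); so from here on assume $r\geq 2$, i.e. the ``thick'' case $k\geq 3$.

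\emph{Odd girth ($t$ even, $g=2m+1$ with $m=s+1$).} If $G$ is a $k$-regular graph of girth $2m+1$ on exactly $N_{r,t}=1+\sum_{i=0}^{s}(r+1)r^{i}$ vertices, then meeting the Moore bound with equality forces the ball of radius $m$ about every vertex to be a complete $k$-ary tree; hence $G$ is distance-regular and its adjacency matrix $A$ satisfies a polynomial identity $\Phi_m(A)=J$, where $\Phi_m$ is the degree-$m$ polynomial produced by the path recursion. Consequently the spectrum of $A$ consists of $k$ (simple, since $G$ is connected) together with the $m$ roots $\lambda_1,\dots,\lambda_m$ of a fixed integer polynomial $p_m$; denoting the multiplicities by $\mu_1,\dots,\mu_m$, the identities $\sum_j\mu_j=N-1$ and $k+\sum_j\mu_j\lambda_j=\operatorname{tr}(A)=0$, combined with $\mu_j\in\mathbb{Z}_{\geq 0}$, yield a Diophantine system. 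For $m=1$ ($t=2$) the only graph is $K_{k+1}$, which exists for all $r$ (item (b)). For $m=2$ ($t=4$) this system is exactly the Hoffman--Singleton analysis: the two non-principal eigenvalues are $\tfrac{-1\pm\sqrt{4k-3}}{2}$, and either they coincide (forcing $k=2$, the $5$-cycle) or integrality of the multiplicities forces $4k-3$ to be an odd perfect square $u^2$ with $u\mid 15$, giving $k\in\{3,7,57\}$, i.e. $r\in\{2,6,56\}$ (item (d)); the Petersen and Hoffman--Singleton graphs realize $r=2,6$, and the case $r=56$ is open, whence ``possibly''. For $m\geq 3$ ($t\geq 6$ even) one invokes the Bannai--Ito/Damerell nonexistence theorem for Moore graphs of odd girth $\geq 7$ and degree $\geq 3$: a refined version of the same multiplicity argument, where the non-principal eigenvalues are roots of a Chebyshev-type polynomial, reduces integrality to a Diophantine equation with no admissible solutions.

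\emph{Even girth ($t$ odd, $g=2m$ with $m=s+1$).} A Moore graph of even girth $2m$ on $N_{r,t}=2\sum_{i=0}^{s}r^{i}$ vertices is necessarily bipartite with equal sides (it is $k$-regular and bipartite), and identifying one side with ``points'' and the other with ``lines'' shows it is precisely the incidence graph of a symmetric generalized $m$-gon of order $(r,r)$: the girth-$2m$ property translates verbatim into the generalized-polygon axioms. For $m=2$ ($t=3$) a generalized digon is the ``all points on all lines'' configuration, i.e. $K_{r+1,r+1}$, so the Moore graph exists for every $r$ (item (c)). For $m\geq 3$ ($t\geq 5$) apply the Feit--Higman theorem: a thick generalized $m$-gon exists only for $m\in\{2,3,4,6,8\}$, and for a symmetric one (order $(r,r)$ with $r\geq 2$) the extra Feit--Higman divisibility constraint eliminates the octagon $m=8$, leaving $m\in\{3,4,6\}$, i.e. $t\in\{5,7,11\}$; in each of these cases the above equivalence says the Moore graph exists iff a symmetric generalized $m$-gon of order $r$ (projective plane, generalized quadrangle, generalized hexagon, respectively) exists (item (e)).

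\emph{Converse and main obstacle.} The ``if'' direction is the easy bookkeeping of exhibiting, in each admissible case, a graph meeting the Moore bound with the stated girth: cycles, $K_{r+2}$, $K_{r+1,r+1}$, the $5$-cycle, the Petersen graph, and the Hoffman--Singleton graph directly, and for $t\in\{5,7,11\}$ the incidence graph of the assumed symmetric generalized polygon; verifying that the vertex count equals $N_{r,t}$ and the girth is exactly $t+1$ is routine in each case. The genuinely hard content, which is exactly what is cited here from \cite{DynCageSur}, is the two nonexistence results in the thick regime---the Hoffman--Singleton multiplicity/divisibility argument at $g=5$ and, more seriously, the Bannai--Ito/Damerell theorem ruling out odd girth $\geq 7$---together with the Feit--Higman classification of generalized polygons underlying the even-girth case; I would take these as black boxes and spend effort only on the reductions above.
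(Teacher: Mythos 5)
The paper does not prove this theorem at all: it is quoted verbatim as a known classification result from the cage survey \cite{DynCageSur}, so there is no in-paper argument to measure you against. Your outline is a correct rendering of the standard proof in the literature, and the reductions you do carry out yourself are sound: the $r=1$ case via cycles; the observation that equality in the Moore bound for odd girth forces the radius-$(s+1)$ balls to be trees, hence distance-regularity and the polynomial identity $\Phi_m(A)=J$; the Hoffman--Singleton eigenvalue-multiplicity computation at girth $5$ giving $k\in\{3,7,57\}$ (plus the irrational-eigenvalue branch giving the $5$-cycle); the identification of even-girth Moore graphs with incidence graphs of generalized $m$-gons of order $(r,r)$, with the digon case giving $K_{r+1,r+1}$; and the elimination of the symmetric octagon via the Feit--Higman divisibility condition. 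The genuinely deep ingredients --- Bannai--Ito/Damerell for odd girth $\geq 7$ and Feit--Higman for the polygon classification --- you explicitly invoke as black boxes, which leaves your writeup at the level of a proof sketch rather than a self-contained proof; but that is exactly the status the theorem has in the paper, where the entire statement is black-boxed by citation. Two small points to tighten if you expand this: the claim that a graph meeting the even-girth Moore bound is necessarily bipartite deserves its own short argument (distance partition from an edge), and in the girth-$5$ case you should state explicitly that the irrational-eigenvalue branch forces equal multiplicities of the two non-principal eigenvalues, which is what pins down $k=2$.
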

		\begin{lem} \label{MooreOptimality}
			The rate $R$ of the code $\mathcal{C}$ with block length $n=\frac{N(r+1)}{2}$ satisfies:
			\bea
			R = \frac{r-1}{r+1}+\frac{2}{N(r+1)} & \leq & \frac{r-1}{r+1}+\frac{2}{N_{r,t}(r+1)} \label{eq:rate_bound_Moore}
			\eea
			
			The inequality \eqref{eq:rate_bound_Moore} will become equality iff $G$ is a Moore graph.
		\end{lem}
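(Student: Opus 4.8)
The plan is to observe that this lemma is essentially immediate once the rate formula established for Construction~\ref{con:modified} (via Theorem~\ref{thm:rate}) is combined with the Moore bound of Theorem~\ref{thm:Moore}, so the work reduces to checking that the hypotheses line up and that the equality case is exactly the Moore-graph condition.

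\emph{Step 1 (the rate formula).} First I would recall that the code $\mathcal{C}$ in question is the output of Construction~\ref{con:modified} applied to a connected, $(r+1)$-regular graph $G$ with $N$ vertices and girth $\geq t+1$, hence has parameters $n = \tfrac{N(r+1)}{2}$ and $k = n-(N-1)$. The proof of Theorem~\ref{thm:rate} used only regularity and connectedness of $G$ (bipartiteness played no role: a minimal linear dependence among the rows of the incidence matrix was shown to force a connected component, which by connectedness is all of $G$), so the same computation applies and gives
\[
R \;=\; \frac{k}{n} \;=\; 1 - \frac{2(N-1)}{N(r+1)} \;=\; \frac{r-1}{r+1} + \frac{2}{N(r+1)},
\]
which is the equality stated in \eqref{eq:rate_bound_Moore}.

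\emph{Step 2 (the inequality).} Since $G$ is a regular graph of degree $r+1$ with girth $\geq t+1$, the Moore bound (Theorem~\ref{thm:Moore}) gives $N \geq N_{r,t}$. Because the map $x \mapsto \tfrac{2}{x(r+1)}$ is strictly decreasing for $x>0$, it follows that $\tfrac{2}{N(r+1)} \leq \tfrac{2}{N_{r,t}(r+1)}$; adding $\tfrac{r-1}{r+1}$ to both sides yields the asserted bound on $R$.

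\emph{Step 3 (the equality condition and the main point of care).} Equality in \eqref{eq:rate_bound_Moore} forces $\tfrac{2}{N(r+1)} = \tfrac{2}{N_{r,t}(r+1)}$, i.e.\ $N = N_{r,t}$. But $G$ is a connected regular graph of degree $r+1$ with girth $\geq t+1$ having exactly $N$ vertices, so $N = N_{r,t}$ is precisely the defining condition for $G$ to be a Moore graph (Definition~\ref{def:Mooregraph}); conversely any Moore graph of degree $r+1$ and girth $\geq t+1$ has $N = N_{r,t}$ by definition, so equality holds. Hence equality holds if and only if $G$ is a Moore graph. There is no substantive obstacle here; the only step needing a moment's attention is confirming that the rank computation $\text{rank}(H) = N-1$ (equivalently $k = n-(N-1)$) carries over verbatim from the connected-bipartite setting of Theorem~\ref{thm:rate} to the connected-regular setting of Construction~\ref{con:modified}, which it does since that argument invoked neither bipartiteness nor any property beyond connectedness and uniform row/column weights.
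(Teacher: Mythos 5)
Your proposal is correct and follows exactly the paper's route: the paper's proof also just combines the rate formula of Theorem \ref{thm:rate} (noting, as the paper itself does, that only regularity and connectedness were used, so it carries over to Construction \ref{con:modified}), the Moore bound of Theorem \ref{thm:Moore}, and Definition \ref{def:Mooregraph} for the equality case. Your write-up merely makes these steps explicit where the paper states them in one line.
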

		\begin{proof}
			The Lemma follows from Theorem \ref{thm:rate}, Construction \ref{con:modified} and the Moore bound given in Theorem \eqref{thm:Moore}.
		\end{proof}
		It turns out interestingly, that the upper bound on rate of the code $\mathcal{C}$ given by the expression $\frac{r-1}{r+1}+\frac{2}{N_{r,t}(r+1)}$ (Lemma \ref{MooreOptimality}) is numerically, precisely equal to the right hand side of inequality \eqref{Thm1} (for $t$ even), \eqref{Thm2} (for $t$ odd). As the inequality \eqref{Thm1} (for $t$ even), \eqref{Thm2} (for $t$ odd) gives an upper bound on rate of a seq-LRC, we have the following Corollary.
		\begin{cor}
			Let $r \geq 3$. The seq-LRC $\mathcal{C}$ having block length $n=\frac{N(r+1)}{2}$ is a rate-optimal code iff $G$ is a Moore graph.
		\end{cor}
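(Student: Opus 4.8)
The plan is to combine three previously established facts: (i) Lemma~\ref{MooreOptimality}, which states that the rate $R$ of the seq-LRC $\mathcal{C}$ obtained from a connected, $(r+1)$-regular graph $G$ of girth $\geq t+1$ satisfies $R = \frac{r-1}{r+1} + \frac{2}{N(r+1)} \leq \frac{r-1}{r+1} + \frac{2}{N_{r,t}(r+1)}$, with equality iff $G$ is a Moore graph; (ii) the observation recorded just before the Corollary that the quantity $\frac{r-1}{r+1} + \frac{2}{N_{r,t}(r+1)}$ is \emph{numerically identical} to the right-hand side of \eqref{Thm1} for $t$ even and of \eqref{Thm2} for $t$ odd; and (iii) Theorem~\ref{thm:rate_both}, which asserts that \emph{every} $(n,k,r,t)_{\text{seq}}$ code has rate at most this same quantity, hence $\mathcal{C}$ is rate-optimal precisely when its rate attains this upper bound.

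The argument then runs as follows. First I would invoke Theorem~\ref{thm:rate_both} together with the numerical identity (ii) to say: a seq-LRC is rate-optimal (by the definition adopted in Section~\ref{sec:tight_bound}) if and only if its rate equals $\frac{r-1}{r+1} + \frac{2}{N_{r,t}(r+1)}$. Applying this to our code $\mathcal{C}$: by Lemma~\ref{MooreOptimality}, $R(\mathcal{C}) = \frac{r-1}{r+1} + \frac{2}{N(r+1)}$, and this equals the optimal value iff $N = N_{r,t}$, i.e.\ iff $G$ meets the Moore bound (Theorem~\ref{thm:Moore}) with equality, which by Definition~\ref{def:Mooregraph} is exactly the statement that $G$ is a Moore graph. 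Chaining these equivalences gives $\mathcal{C}$ rate-optimal $\iff$ $G$ is a Moore graph. The hypothesis $r \geq 3$ is needed only because Theorem~\ref{thm:rate_both} is stated for $r \geq 3$.

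I expect no serious obstacle here: the Corollary is essentially a bookkeeping consequence of Lemma~\ref{MooreOptimality} and Theorem~\ref{thm:rate_both}, once one has the numerical coincidence between the two closed-form expressions. The only point that deserves a line of care is verifying that coincidence in both parities of $t$ --- e.g.\ for $t = 2s+2$ even, substituting $N_{r,t} = 1 + \sum_{i=0}^{s}(r+1)r^i = 1 + (r+1)\sum_{i=0}^{s} r^i$ into $\frac{r-1}{r+1} + \frac{2}{N_{r,t}(r+1)}$ and simplifying should reproduce $\frac{r^{s+1}}{r^{s+1} + 2\sum_{i=0}^{s} r^i}$, using $r^{s+1} = (r-1)\sum_{i=0}^{s} r^i + 1$; the odd case with $N_{r,t} = 2\sum_{i=0}^{s} r^i$ is analogous. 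But since the paper has already asserted this identity in the sentence preceding the Corollary, I would simply cite that observation rather than reprove it, and the proof of the Corollary itself reduces to the one-line equivalence chain above.
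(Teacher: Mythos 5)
Your proposal is correct and follows essentially the same route as the paper: the paper states this corollary as an immediate consequence of Lemma~\ref{MooreOptimality}, the numerical identity between $\frac{r-1}{r+1}+\frac{2}{N_{r,t}(r+1)}$ and the right-hand sides of \eqref{Thm1}, \eqref{Thm2}, and the rate bound of Theorem~\ref{thm:rate_both}, which is exactly your equivalence chain (with $N=N_{r,t}$ together with the girth hypothesis characterizing Moore graphs). Your optional verification of the numerical identity is a harmless addition; nothing is missing.
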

		\begin{cor}
			Let $r \geq 3$. Let $(r,t)$ be such that a Moore graph exists. If $G$ is the Moore graph then the code $\mathcal{C}$ is not only rate optimal, it also has the smallest block length possible for a binary rate-optimal seq-LRC for the given parameters $r,t$. 
		\end{cor}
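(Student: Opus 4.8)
The rate-optimality half of the statement is the content of the preceding corollary (which in turn rests on Lemma~\ref{MooreOptimality} and Theorem~\ref{thm:rate_both}), so the only thing left to establish is that the block length $\tfrac{N_{r,t}(r+1)}{2}$ of the Moore-graph code $\mathcal{C}$ is the smallest achievable by \emph{any} binary rate-optimal seq-LRC for the given parameters $(r,t)$. The plan is to derive, from results already in hand, a universal lower bound on the block length of a binary rate-optimal seq-LRC, and then check that the Moore-graph code meets it with equality.

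For the lower bound I would argue as follows. Let $\mathcal{C}'$ be any binary rate-optimal seq-LRC for parameters $(r,t)$. By Corollary~\ref{cor:equality_conditions}, $\mathcal{C}'$ admits a p-c matrix in the staircase form controlled by a single integer $a_0$, with block length $n'=a_0\big(\sum_{i=0}^s r^i+\tfrac{r^{s+1}}{2}\big)$ for $t$ even and $n'=a_0\sum_{i=0}^s r^i$ for $t$ odd; in either case $n'$ is proportional to $a_0$. Since $\mathcal{C}'$ is in particular a (binary) seq-LRC recovering from $t$ erasures, Theorem~\ref{thm:girthreq} forces the graph $\ginf$ obtained from this staircase p-c matrix to have girth $\geq t+1$, and then Theorem~\ref{LowerBoundOna0} gives $a_0\geq r+1$. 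Hence $n'\geq n_{\min}$, where $n_{\min}$ is the value obtained by setting $a_0=r+1$:
\[
n_{\min}=(r+1)\Big(\sum_{i=0}^s r^i+\tfrac{r^{s+1}}{2}\Big)\ \ (t\text{ even}),\qquad n_{\min}=(r+1)\sum_{i=0}^s r^i\ \ (t\text{ odd}).
\]

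For equality, I would use Construction~\ref{con:modified}, by which $\mathcal{C}$ has block length $\tfrac{N(r+1)}{2}$ with $N=N_{r,t}$ the Moore-bound value of Theorem~\ref{thm:Moore}, i.e.\ $N_{r,t}=1+\sum_{i=0}^s(r+1)r^i$ for $t=2s+2$ and $N_{r,t}=2\sum_{i=0}^s r^i$ for $t=2s+1$. The odd case gives $\tfrac{N_{r,t}(r+1)}{2}=(r+1)\sum_{i=0}^s r^i=n_{\min}$ directly. For the even case, the identity $(r-1)\sum_{i=0}^s r^i=r^{s+1}-1$ yields $1+(r+1)\sum_{i=0}^s r^i=2\sum_{i=0}^s r^i+r^{s+1}$, so $\tfrac{N_{r,t}(r+1)}{2}=(r+1)\big(\sum_{i=0}^s r^i+\tfrac{r^{s+1}}{2}\big)=n_{\min}$. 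Thus $\mathcal{C}$ is rate-optimal with block length exactly $n_{\min}$, which by the lower bound above is the smallest possible for a binary rate-optimal seq-LRC with parameters $(r,t)$; this is precisely the claim.

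Because all the heavy lifting is already done in Corollary~\ref{cor:equality_conditions}, Theorem~\ref{thm:girthreq} and Theorem~\ref{LowerBoundOna0}, this is a short argument and I do not anticipate a genuine obstacle. The two places needing a little care are: ensuring that the graph to which Theorem~\ref{thm:girthreq} and Theorem~\ref{LowerBoundOna0} are applied is exactly the $\ginf$ attached to the staircase p-c matrix of $\mathcal{C}'$, so that the chain ``binary $\Rightarrow$ girth $\geq t+1 \Rightarrow a_0\geq r+1$'' is legitimate; and carrying out the small $t$-even algebraic identity that matches $\tfrac{N_{r,t}(r+1)}{2}$ with $n_{\min}$.
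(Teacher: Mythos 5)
Your proposal is correct and follows essentially the same route as the paper: Corollary~\ref{cor:equality_conditions} forces the staircase structure and hence the graph \ginf, Theorem~\ref{thm:girthreq} forces girth $\geq t+1$ for a binary code, Theorem~\ref{LowerBoundOna0} gives $a_0 \geq r+1$, and block length grows monotonically with $a_0$, with equality at $a_0=r+1$ matching the Moore-graph block length $\tfrac{N_{r,t}(r+1)}{2}$. The only difference is that you spell out the algebraic identification of $n_{\min}$ with $\tfrac{N_{r,t}(r+1)}{2}$ explicitly, which the paper leaves as a numerical verification.
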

		\begin{proof}
			From the discussions in Section \ref{sec:graphical_rep}, a rate-optimal code must have the graphical representation \ginf. From Theorem \ref{thm:girthreq}, \ginf\ must have girth $\geq t+1$, if the rate-optimal code is over $\mathbb{F}_2$. Hence from Theorem \ref{LowerBoundOna0}, the number $a_0$ of vertices in $V_0$ satisfies the lower bound 
			\bea
			a_0 & \geq & r+1. \label{eq:a0}
			\eea
			Since the value of $a_0,r$ through numerical computation  gives the number of edges in \ginf, it can be seen that \ginf\ with girth $\geq t+1$ and $a_0=(r+1)$ leads to a binary rate-optimal code having block length equal to $\frac{N_{r,t} (r+1)}{2}$ by Corollary \ref{cor:equality_conditions}. The code $\mathcal{C}$ also has block length equal to $\frac{N_{r,t} (r+1)}{2}$ when $G$ is a Moore graph. Hence the corollary follows from \eqref{eq:a0} and noting that the number of edges in the graph \ginf\ grows strictly monotonically with $a_0$ and hence $a_0=r+1$ correspond to least possible block length.
		\end{proof}

		%
	\begin{example}	Set $t=4,r=6$. An example Moore graph with $t=4,r=6$, known as the Hoffman-Singleton graph is shown in the Fig.~\ref{fig:Moore}. If  $G$ is this Moore graph then the code \calc\ is a rate-optimal code with least possible block-length with $n=175$. \end{example} 
		Since Moore graphs exist (Theorem \ref{thm:MooreExis}) only for a very sparse set of $(r,t)$ parameters, we now present a different and general construction of rate-optimal seq-LRC for any $r \geq 3$, and any $t$.


   \section{Rate-Optimal Code Construction by Meeting Girth Requirement} \label{sec:complete_code}

Throughout this section we assume $r \geq 3$. As noted in Section~\ref{sec:girth_req}, to complete the construction of a rate-optimal code over $\mathbb{F}_2$ for either the $t$ even or $t$ odd case, we need to ensure that the graph \ginf\ has girth $\geq t+1$. Since the code is binary, the node-edge incidence matrix of a graph \ginf\ with girth $\geq t+1$ will directly yield the p-c matrix of a rate-optimal code.   Hence in this section, we focus only on the construction of a graph \ginf\ having girth $\geq t+1$.  
The construction of \ginf\ that we present here has a significantly smaller number of nodes and correspondingly smaller block length in comparison to the construction previously presented by us in \cite{BalKinKum_ISIT}.   The outline of a construction that differs from the one presented here in terms of the manner in which a certain base graph appearing in the construction is colored, appears in our paper \cite{BalKinKum_NCC}.  
We describe the construction below. Throughout the construction, whenever we speak of coloring the edges of a graph with $b$ colors or of edge coloring of a graph with $b$ colors, we will mean an assignment of a set of $b$ colors to the edges of the graph such that each edge is assigned a color and no two adjacent edges, i.e., no two edges incident on the same vertex, have the same color.  We begin by introducing the notion of  a perfect matching. 

\begin{defn} \label{defn:perfect_matching}
A {\em perfect matching} of a graph  \aux\ is a subset ${\cal E}$ of edges of  \aux, such that each vertex of the graph is incident on precisely one edge belonging to ${\cal E}$.  A graph is said to have $r+1$ pairwise disjoint perfect matchings if there are $r+1$ pairwise disjoint subsets of edges, each of which is a perfect matching of the graph.
\end{defn}
  
  In the following we construct the graph \ginf\ by first constructing its sub-graph \gzero\ such that its girth $\geq t+1$.  This then concludes the construction because if \gzero\ has girth $\geq t+1$ then we can add a node $V_{\infty}$ and connect it to each node in $V_0$.  The resulting graph will have the same node-edge incidence structure as does $\ginf$ and will have girth $\geq t+1$.  

We next present an overview of the construction.  The construction proceeds in $3$ steps. 

\begin{enumerate}[(i)]
\item Recall that the graph \gzero\ is the graph \ginf\ restricted to the node set $V_0 \cup \cdots \cup V_s$.  Equivalently, \gzero\ is the graph \ginf\ obtained by deleting node $V_{\infty}$ and the edges incident on node $V_{\infty}$.  As the first step, we construct a graph \gbase\ having the same structure as the graph \gzero, but with arbitrary girth. Hence this graph \gbase\ has the same incidence matrix as the corresponding matrix \hzero\ for $t$ even or odd  i.e., \eqref{eq:stair_H_even_pun_ch4} for $t$ even and \eqref{eq:stair_H_odd_pun_ch4} for $t$ odd. We will also impose an additional condition, namely that the graph \gbase\ permit an edge coloring with $r+1$ colors.  Turns out that in the case $t$ even, the requirement of an edge coloring with $(r+1)$ colors requires a special construction of \gbase, whereas, in the case of $t$ odd, any graph \gbase\ has an edge coloring using $r+1$ colors. 
\item Next, pick a graph \aux\ with $m$ nodes having both girth $\geq t+1$ as well as a set of $r+1$ pairwise disjoint perfect matchings.  
\item In the third step, create $m$ replicas of the graph \gbase.   If there is an edge between nodes $a,b$ in \gbase\ of color $i$, then replace that edge with the $i^{th}$ perfect matching of \aux\ between the replicated copies of the nodes $a,b$. This process is illustrated in Fig.~\ref{fig:matching} below.  This results in a graph having the structure of \gbase\ but will turn out to have girth $\geq t+1$. This concludes the outline of the construction. 
\een

	\begin{figure}[ht]
		\centering
		\includegraphics[scale = 0.6]{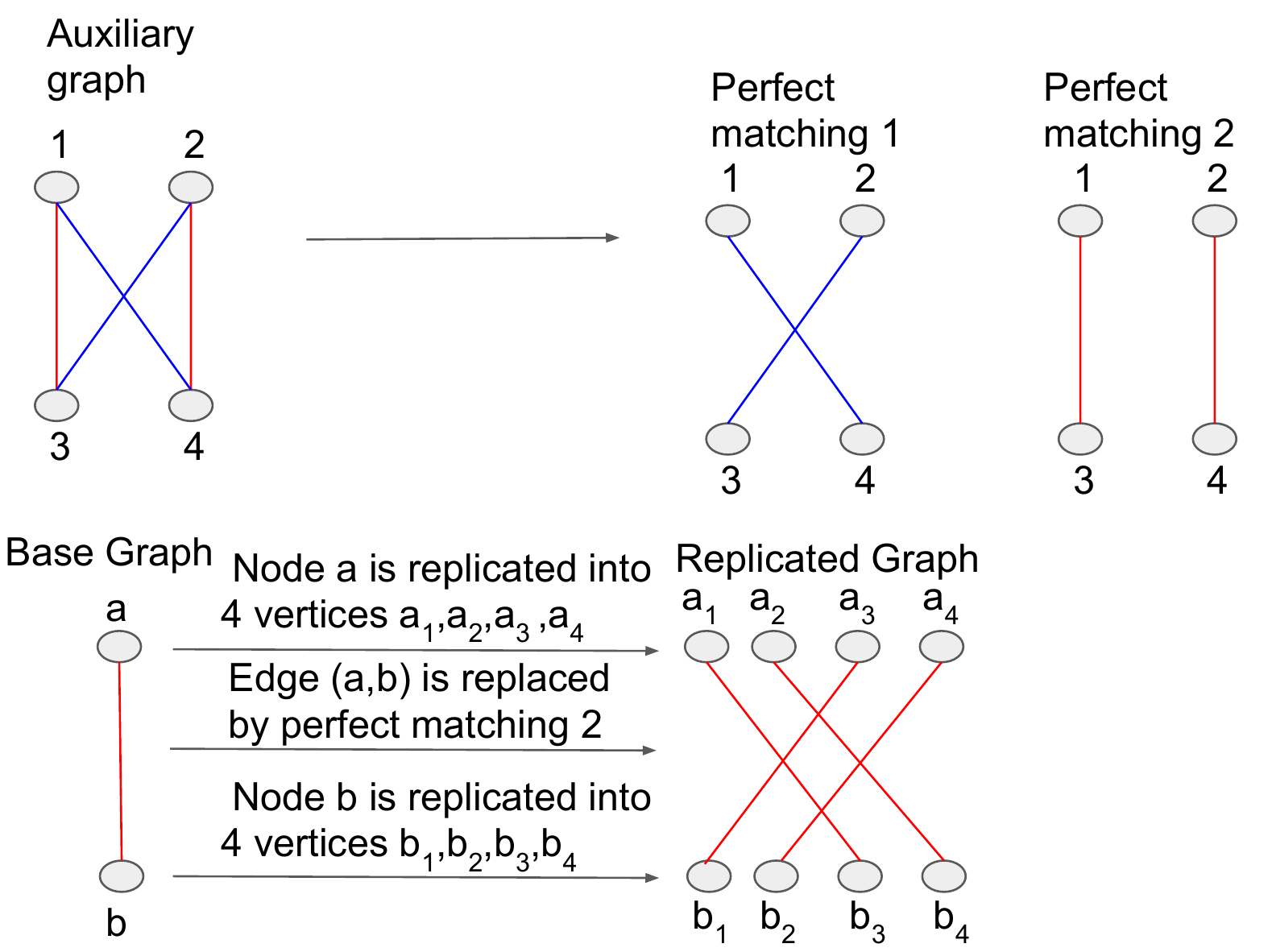} 
		\caption{Explaining how matching is employed in the construction. }
		\label{fig:matching}
	\end{figure}

The total number of nodes in the resultant graph is given by
\bean
|V(\gbase)| \times |V(\aux)| = a_0 \left(1+r+\cdots +r^s\right) \times O(r^{t+1}), & & \text{for $t$ even and} \\
|V(\gbase)| \times |V(\aux)| =  a_0 \left(1+r+\cdots +r^{s-1}+r^{s-1} \frac{r}{r+1} \right) \times O(r^{t+1}), & & \text{for $t$ odd}. 
\eean
This follows from Corollary \ref{cor:equality_conditions} together with the additional observation that the graph \aux\ can be constructed with $O(r^{t+1})$ nodes as will subsequently be shown. We will also see that we can choose $a_0 \leq r+1$ in the above expression, since the construction of \gbase\ in the first step does not require girth $\geq t+1$.    Each step involved in the construction is described in greater detail below. 

\subsection*{Step $1$ : Construction and Edge Coloring of the Base Graph} 
	
	\ben
	\item Select a graph \ginf\ as described in Section \ref{sec:graphical_rep}, with an arbitrary regular graph as the subgraph induced by nodes in $V_s$ in the case of $t$ even and an arbitrary biregular bipartite graph \gsmone\ as the subgraph induced by nodes in $V_{s-1} \cup V_s$ in the case $t$ odd. We begin by deleting the edges in the graph \ginf\ connecting $V_{\infty}$ to the nodes in $V_0$.  One is then left with the graph \gzero\ where each of the nodes in $V_0$ has degree $r$ and all the remaining nodes have degree $(r+1)$.  Thus in particular, every node has degree $\leq (r+1)$. We shall call the resultant graph the {\em base graph} \ \gbase. Hence this graph \gbase\ has the same incidence matrix as the corresponding matrix \hzero\ for $t$ even or odd i.e., \eqref{eq:stair_H_even_pun_ch4} for $t$ even and \eqref{eq:stair_H_odd_pun_ch4} for $t$ odd.  If the graph \gbase\ has girth $\geq t+1$, the construction ends here. If not, as noted above, we will modify \gbase\ to construct a second graph ${\cal J}_{\infty}$ with girth $\geq t+1$, having an incidence matrix that has the form appearing in \eqref{eq:stair_H_even_aug_ch4} for $t$ even, and in \eqref{eq:stair_H_odd_aug_ch4} for $t$ odd. 

	\item Since every node in \gbase\ has degree $\leq r+1$, it follows from Vizing's theorem \cite{Viz}, that the edges of \gbase\ can be colored using $\ell \leq (r+2)$ colors. However, as will be seen below, it is possible to color the edges of \gbase\ using $\ell = (r+1)$ colors.  We discuss separately, the cases of $t$ even and $t$ odd.  
	\bit
	\item {\bf Case $t$ even:}   For $t=2$ and hence, $s=0$, we can choose \gbase\ to be a complete graph of degree $r+1$ and the complete construction of rate-optimal code ends here as the graph \gbase\ has girth $t+1=3$.  For $t=4$ and hence $s=1$, we can construct \gbase\ with girth $\geq t+1=5$ and the complete construction of rate-optimal code ends here. For the sake of brevity, we skip this part ($t=4$) of the proof and refer the reader instead to our arXiv publicaiton~\cite{BalPraKum4Era}. The construction for $t=4$ in \cite{BalPraKum4Era} has $a_0=O(r^2)$. In the following we give a construction of rate-optimal codes with $t=4$ with $a_0=2r$ for some selected values of $r$ ($r$ such that a Moore graph of degree $r+1$ and girth $6$ exists). Take a Moore graph $G$ (if it exists) of degree $r+1$ and girth $6$. Take an edge $(u,v)$ in $G$. Now merge the nodes $u,v$ into a single node $V_{\infty}$ and connect all the neighbours of $u,v$ to $V_{\infty}$ to form the graph $G_{u,v}$. By expanding the neighbourhood structure of $V_{\infty}$ in the form of a tree i.e., by placing $V_{\infty}$ as root node and by placing the neighbours of $V_{\infty}$ at depth $1$ and placing neighbours of neighbours of $V_{\infty}$ at depth $2$ and by noting that all the nodes in the graph $G_{u,v}$ has appeared exactly once in this expansion (because $G$ is a Moore graph (Definition \ref{def:Mooregraph}) of degree $r+1$ and girth $6$) we can see that the graph $G_{u,v}$ has the same structure as $\mathcal{G}_{\infty}$ with $a_0=2r$ and it also has girth $\geq 5$.
	
	For the general $t \geq 6$, $s \geq 2$ case, one can set  $a_0=4$ and through careful selection of the edges connecting nodes in $V_s$, ensure that the edges of the graph \gbase\ can be colored using $(r+1)$ colors but with \gbase\ having arbitrary girth, details are provided in Appendix \ref{Gbasecolouring_ch4}.   Thus the base graph will in this case have 
	\bea
	\nbase\ \ = \ a_0(1+r+\cdots +r^s) & = & 4\left( \frac{r^{s+1}-1}{r-1} \right) \label{eq:tevenblklength}
	\eea
	vertices.

	\item {\bf Case $t$ odd:} \ In the $t$ odd case, by selecting $a_0=(r+1)$, one can ensure that the edges of the graph \gbase\ can be colored using $(r+1)$ colors but with \gbase\ having arbitrary girth, see Appendix \ref{Gbasecolouring_ch4} for details.   Thus the base graph will have in this case, a total of 
	\bean
	\nbase\ \ = \ a_0(1+r+\cdots +r^{s-1}+r^{s-1} \frac{r}{r+1}) & = & (r+1)\left( \frac{r^{s}-1}{r-1}\right) +r^s \label{eq:toddblklength}
	\eean
vertices. 
	
	\eit
	The coloring is illustrated in Fig.~\ref{fig:base_graph_coloring} for the case $t=5$ and $r=3$. 
	\een
	\begin{figure}[ht]
		\centering
		\includegraphics[scale = 0.35]{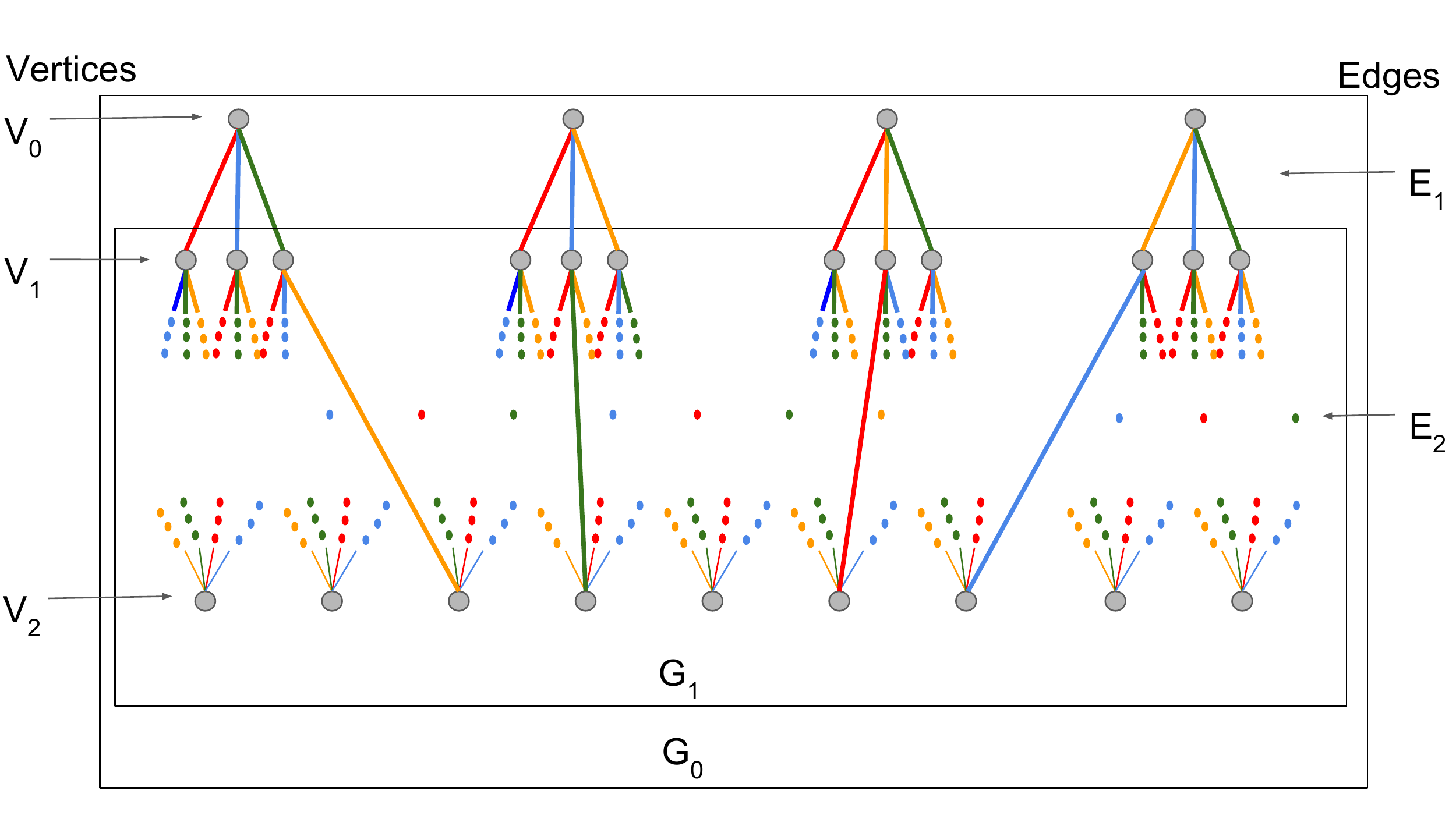}
		\caption{ An example base graph \gbase\ with associated coloring of the edges using $(r+1)$ colors.  Here $t=5$, $r=3$, $a_0=r+1=4$ with the edges of \gbase\ colored with $r+1=4$ colors.}
		\label{fig:base_graph_coloring}
	\end{figure}
	In summary, the base graph \gbase\ corresponds to the graph \gzero\  having incidence matrix as the corresponding matrix \hzero\ for $t$ even or odd.  All the nodes in \gbase\ are of degree $ \leq (r+1)$ and the edges of the graph can be colored using $(r+1)$ colors.   These are the only properties of the base graph \gbase\ that are carried forward to the next steps of the construction.   We will number the colors $1$ through $r+1$ and speak of color $i$ as the $i$th color. The steps that follow are the same for either $t$ even or $t$ odd.  
	
	\subsection*{Step $2$ : Construction and Coloring of the Auxiliary Graph} 
	
	 Next, pick a graph \aux\ referred to as {\em auxiliary graph} that has both girth $\geq t+1$ as well as a set of $r+1$ pairwise disjoint perfect matchings.  Let ${\cal E}_1,\ldots,{\cal E}_{r+1}$ be the subsets of edges corresponding to the $r+1$ pairwise disjoint perfect matchings.   Let the edges in ${\cal E}_i$ be colored using color $i$. Let us denote by  \auxi, the subgraph of \aux\ with edge set exactly equal to ${\cal E}_i$. The coloring of an example auxiliary graph is shown in Fig.~\ref{fig:aux_graph_coloring}.  Note that the edges of \aux\ are colored using the same set of colors used to color the edges of \gbase\ in Step 1 above.
	\begin{figure}[h!]
		\centering
		\includegraphics[angle=270,scale = 0.45]{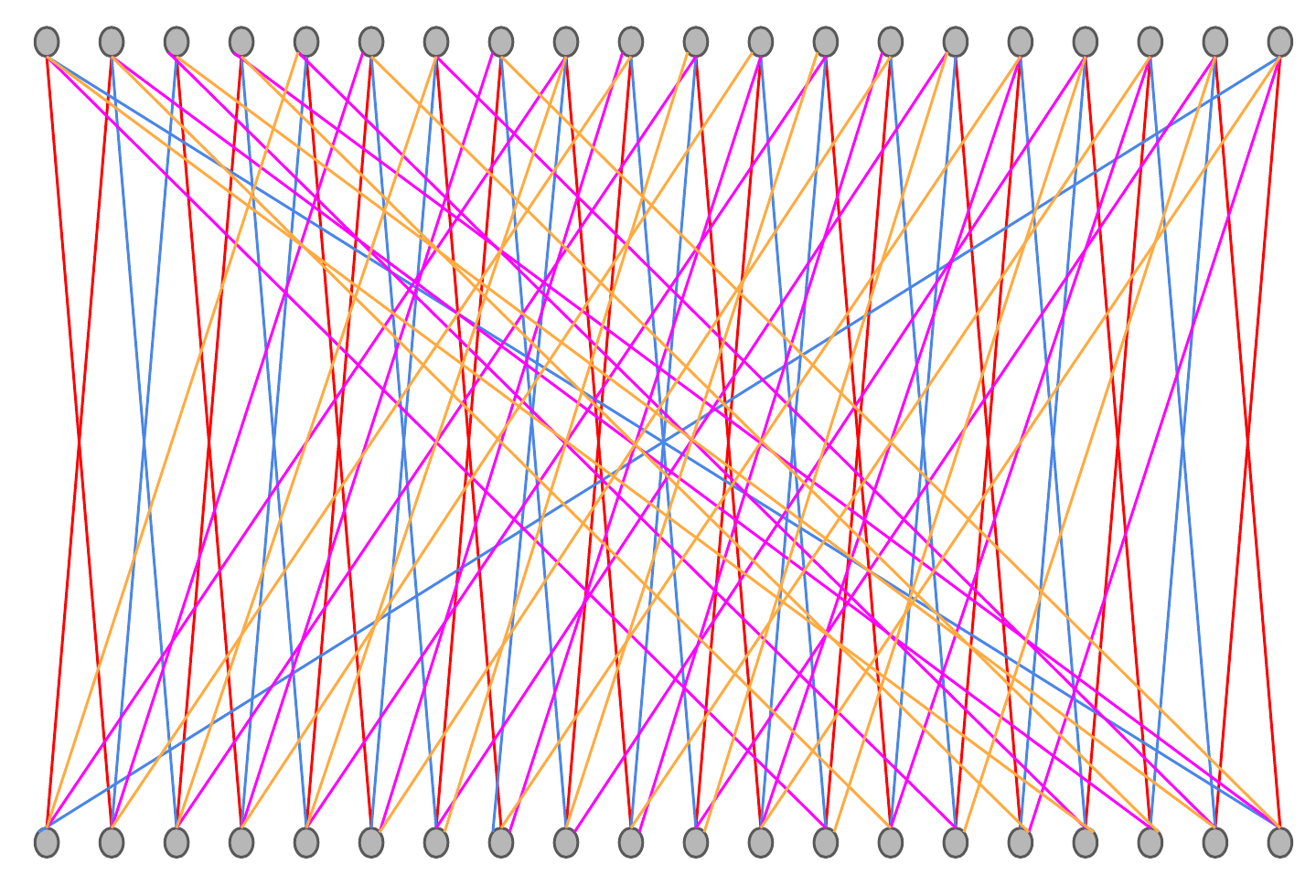}
		\caption{An example auxiliary graph \aux\ whose edges are colored using $(r+1)$ colors.  Here $r=3$, so there are $r+1=4$ colors.  This graph is a regular bipartite graph of degree $r+1=4$ with $\naux=40$ vertices and having girth $\geq 6$.  This graph can be seen to have a set of $4$ pairwise disjoint perfect matchings.  }
		\label{fig:aux_graph_coloring}
	\end{figure}
	It is shown in Theorem \ref{bipartiteColouring} of Appendix \ref{Gbasecolouring_ch4}, that every $(r+1)$-regular bipartite graph $G$ has a set of $(r+1)$ pairwise disjoint perfect matchings.    It follows that an $(r+1)$-regular bipartite graph $G$ of girth $\geq (t+1)$ meets the requirements placed on the auxiliary graph \aux.  
We can even create the auxiliary graph by starting with a general regular graph ${\cal P}$ of degree $(r+1)$ and girth $\geq t+1$.    To do this, one first replicates the vertices of ${\cal P}$ to produce two replicas ${\cal P}_1$ and ${\cal P}_2$ of the vertices of ${\cal P}$.   Next one places edges between the two replicas as follows. If there is an edge between nodes $u,v$ in ${\cal P}$, we draw edges $(u_1,v_2)$ and $(u_2,v_1)$ where $(u_i,v_i)$ are the vertices in ${\cal P}_i$ respectively, associated with $u,v$.    Let ${\cal Q}$ denote the resultant graph.  Then it is easily seen that ${\cal Q}$ is a regular bipartite graph of degree $(r+1)$ and has girth $\geq t+1$.  
In this way, we have created a bipartite regular graph ${\cal Q}$ by starting from a  regular graph ${\cal P}$ of the same degree while maintaining the girth requirement.  
This construction is of interest since an $(r+1)$-regular graph having girth $\geq t+1$ can be constructed having a relatively small number, $O(r^{t+1})$ of nodes by drawing from the results in \cite{Lubotzky1988,X_Dahan,Mor,DavSarVal,LazUstWol}.  By Theorem \ref{thm:Moore}, this is close to the smallest possible.

	\subsection*{Step $3$ : Using the Auxiliary Graph to Expand the Base Graph} 
   In this final step, we use the auxiliary graph \aux\ to expand the graph \gbase, creating in the process, a new graph \jzero. The graph \jzero\ will be our desired graph \gzero\ with girth $\geq t+1$.  
	\begin{enumerate}[(i)]
	\item Let the auxiliary graph \aux\ have $m$ nodes and let these nodes be identified with the integers in $\{1,2,\ldots,m\}$.
	\item For every node $a$ in \gbase\, we create $m$ replicas $a_1,\ldots,a_m$. Let the resulting $m$-fold replicated graph be denoted by ${\cal J}_0$.
    \item Next, we describe the edge set of the graph ${\cal J}_0$. If \gbase\ has an edge $(a,b)$ of color $i$ then we connect the corresponding nodes $a_1,\ldots,a_m$ and $b_1,\ldots,b_m$ in ${\cal J}_0$ as follows: we connect $a_{\ell_1}$ to $b_{\ell_2}$ and $a_{\ell_2}$ to $b_{\ell_1}$ for every edge $(\ell_1,\ell_2)$ in \auxi. This is illustrated in Figure \ref{fig:matching}.
	\een
	
	\begin{thm}
		${\cal J}_0$ has girth $\geq t+1$.
	\end{thm}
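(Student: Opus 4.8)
The plan is to equip $\mathcal{J}_0$ with two graph homomorphisms (``projections'') and use them to force every cycle of $\mathcal{J}_0$ to unwind onto a cycle of the auxiliary graph, whose girth is already known to be at least $t+1$. First I would define the \emph{base projection} $\pi\colon \mathcal{J}_0 \to \mathcal{G}_{\text{base}}$, $a_\ell \mapsto a$, and the \emph{auxiliary projection} $\rho\colon \mathcal{J}_0 \to \mathcal{A}$, $a_\ell \mapsto \ell$. By the description in Step~3, every edge of $\mathcal{J}_0$ has the form $\{a_{\ell_1},b_{\ell_2}\}$ and arises from a unique edge $\{a,b\}$ of $\mathcal{G}_{\text{base}}$, say of colour $i$, together with a unique edge $\{\ell_1,\ell_2\}$ of the perfect matching $\mathcal{E}_i$. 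Consequently both $\pi$ and $\rho$ map edges to edges, and moreover the colour of $\{a,b\}$ in $\mathcal{G}_{\text{base}}$ equals the index of the matching to which $\{\ell_1,\ell_2\}$ belongs in $\mathcal{A}$; I will call this the \emph{colour-synchronisation} property.

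The second step is to take an arbitrary cycle $C=(v_0,v_1,\ldots,v_L=v_0)$ of $\mathcal{J}_0$ with $v_j = a^{(j)}_{\ell_j}$, and to study its image $\rho(C)=(\ell_0,\ell_1,\ldots,\ell_L=\ell_0)$, a closed walk of length $L \ge 3$ in $\mathcal{A}$. The crux of the whole argument --- and the step I expect to need the most care --- is the claim that $\rho(C)$ is \emph{cyclically reduced}: $\ell_{j-1}\neq\ell_{j+1}$ for every $j$, indices taken modulo $L$. Suppose on the contrary that $\ell_{j-1}=\ell_{j+1}$. Then the matching edges $\{\ell_{j-1},\ell_j\}$ and $\{\ell_j,\ell_{j+1}\}$ are literally the same edge of $\mathcal{A}$; since the $r+1$ matchings are pairwise disjoint, this edge lies in exactly one of them, so by colour-synchronisation the two edges $\{a^{(j-1)},a^{(j)}\}$ and $\{a^{(j)},a^{(j+1)}\}$ of $\mathcal{G}_{\text{base}}$ receive the same colour. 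But these two edges are both incident to $a^{(j)}$, and since the edge-colouring of $\mathcal{G}_{\text{base}}$ is proper they must coincide; as $\mathcal{G}_{\text{base}}$ is a simple graph this forces $a^{(j-1)}=a^{(j+1)}$, and combined with $\ell_{j-1}=\ell_{j+1}$ we get $v_{j-1}=v_{j+1}$, contradicting the fact that $C$ is a cycle of length $L\ge 3$. I would emphasise that this argument invokes only the properness of the colouring of $\mathcal{G}_{\text{base}}$ and the pairwise disjointness of the $\mathcal{E}_i$, and in particular uses no girth hypothesis on $\mathcal{G}_{\text{base}}$ (which is essential, since $\mathcal{G}_{\text{base}}$ is allowed to have arbitrarily short cycles).

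The final step turns cyclic reducedness into the desired girth bound by a routine observation about walks. If the vertices $\ell_0,\ldots,\ell_{L-1}$ are pairwise distinct, then $\rho(C)$ is itself a cycle of length $L$ in $\mathcal{A}$. Otherwise, pick indices $i<j\le L-1$ with $\ell_i=\ell_j$ and $j-i$ minimal; minimality makes the interior vertices of the sub-walk $(\ell_i,\ldots,\ell_j)$ pairwise distinct, while cyclic reducedness rules out the values $j-i=1$ and $j-i=2$, so $(\ell_i,\ldots,\ell_j)$ is a genuine cycle of $\mathcal{A}$ of length $j-i$ with $3\le j-i\le L-1$. In either case $\mathcal{A}$ contains a cycle of length at most $L$, so $t+1\le\operatorname{girth}(\mathcal{A})\le L$. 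As $C$ was an arbitrary cycle of $\mathcal{J}_0$, every cycle of $\mathcal{J}_0$ has length at least $t+1$, i.e.\ $\mathcal{J}_0$ has girth $\ge t+1$. The projection $\pi$ plays no independent role beyond supplying the $\mathcal{G}_{\text{base}}$-edges used in the crux step, so in a polished write-up I would fold it into the statement of colour-synchronisation rather than introduce it separately.
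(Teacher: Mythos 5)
Your proof is correct and follows essentially the same route as the paper's: project a cycle of $\mathcal{J}_0$ to a closed walk in the auxiliary graph $\mathcal{A}$, use the properness of the $(r+1)$-colouring of $\gbase$ together with the disjointness of the matchings to show the walk never backtracks, and then invoke girth$(\mathcal{A})\geq t+1$. You have merely spelled out the details (colour-synchronisation, cyclic reducedness, extracting a cycle from a non-backtracking closed walk) that the paper's proof states in compressed form.
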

	\begin{proof}
		This follows simply because corresponding to every path traversed through ${\cal J}_0$ with successive edges in the path with some sequence of colors, there is a corresponding path in \aux\ with successive edges in the path corresponding to the same sequence of colors. The edge coloring of the base graph ensures that we never retrace our steps in the auxiliary graph i.e., any two successive edges in the path in auxiliary graph are not the same.  It follows that since \aux\ has girth $\geq t+1$, the same must hold for ${\cal J}_0$. 
	\end{proof} 

  A little thought will now show that the graph \jzero\ has the same structure as \gbase.   
 \bit
 \item In the $t$ even case, both graphs \jzero, \gbase\ can be regarded as the union of $r$-ary trees of the same depth followed by a pairing of the leaf nodes and the drawing of an edge between pairs of leaf nodes is done in such a way that the graph restricted to only leaf nodes is an $r$-regular graph.  The two differences are that \gbase\ is the union of $a_0$ trees and the girth of \gbase\ is unknown, while \jzero\ can be regarded as the union of $a_0 |V(\aux)|=a_0 m$ trees and has girth $\geq t+1$. 
 \item In the $t$ odd case, the graph \gbase\ can be regarded as containing the union of $a_0$ $r$-ary trees of depth $s-1$ starting from depth $0$; additionally, the leaf nodes are the upper nodes of a bi-regular, bipartite graph of degrees $r,r+1$, so the overall graph is the union of the $a_0$ trees together with the vertex set $V_s$ making up the lower nodes of the bipartite graph.  The graph \jzero\ has the same structure except once again that while \gbase\ contains the union of $a_0$ trees and the girth of \gbase\ is unknown, \jzero\ contains the union of $a_0 | V(\aux) |=a_0 m$ trees and has girth $\geq t+1$.  In both \gbase\ and \jzero\ there is a bi-regular, bipartite graph of degrees $r,r+1$ at the very bottom, whose upper nodes are the leaf nodes of the different trees.  
\eit

%
Let $U_0 \subseteq V({\cal J}_0)$ be the $m$-fold replication of the vertices in $V_0$ of \gbase. To complete the construction, we add a node $U_{\infty}$ to ${\cal J}_0$ and connect it to each node in $U_0$ through an edge. Let us call the resulting graph ${\cal J}_{\infty}$.  Hence ${\cal J}_{\infty}$ has the same node-edge incidence matrix as  \eqref{eq:stair_H_even_aug_ch4} over $\mathbb{F}_2$  for $t$ even, \eqref{eq:stair_H_odd_aug_ch4} over $\mathbb{F}_2$ for $t$ odd and also has girth $\geq t+1$. This node-edge incidence matrix gives the required p-c matrix for a rate-optimal code over $\mathbb{F}_2$ with parameters $(r,t)$. Hence we have constructed the required p-c matrix for a rate-optimal code over $\mathbb{F}_2$ for both $t$ even and $t$ odd, for any $r \geq 3$.
This concludes the construction of rate-optimal codes for both $t$ even and $t$ odd.
	
\begin{note}
Following the conference presentations of portions of this paper, we discovered that similar graph-expansion constructions using perfect matchings have previously been used to construct large-girth LDPC codes \cite{PraSubTha}. 
\end{note}

\section{Seq-LRC Having Largest Dimension for Given Block length} \label{sec:high_dimension}

 In this section, we focus on constructing dimension-optimal codes. Recall that dimension-optimal codes are codes achieving the bound in Corollary \ref{cor:dimension_bound} with equality. A dimensional optimal seq-LRC need not be a rate-optimal seq-LRC since the upper  bound on rate provided in Theorem \ref{thm:rate_both} is not achievable if the block length is not of the form given by Remark \ref{blk_length_rem}.  In the cases when bound in Theorem \ref{thm:rate_both} is not achievable focus shifts to achieving the bound provided in Corollary \ref{cor:dimension_bound}. While it is not clear whether the bound in Corollary \ref{cor:dimension_bound} is achievable for all block lengths, in the following we illustrate an approach to construct dimension-optimal codes when the bound in Corollary \ref{cor:dimension_bound} is achievable. Linear inequalities involving the parameters $a_0,a_1,\ldots,a_s,a_{s+1},p$ described in the derivation of the rate bound in Theorem \ref{thm:rate_both} when used in linear programming formulation (Appendix \ref{app:linprog}), lead to an upper bound on dimension.   (The parameters $a_0,a_1,\ldots,a_s,a_{s+1},p$ are introduced in the proof of Theorem \ref{thm:rate_both} in the Appendix \ref{Appendix_rate_bounds}).  Hence the linear programming formulation given in  Appendix \ref{app:linprog} can be used to obtain an upper bound on dimension for a given block length. Note that in the linear programming formulation we can constrain all the values of $a_0,a_1,\ldots,a_s,a_{s+1},p,k$ to be integers. While from Appendix \ref{app:linprog}, it is clear that linear programming approach yields the same bound as Corollary \ref{cor:dimension_bound}, it is not clear that they yield the same bound after constraining all the values of $a_0,a_1,\ldots,a_s,a_{s+1},p,k$ to be integers in the linear programming formulation. In all of our simulation results for $t$ even, Corollary \ref{cor:dimension_bound} and the integer linear programming have always yielded the same upper bound on dimension. 

Our approach: we use the values of $a_0,a_1,\ldots,a_s,a_{s+1},p$ obtained via the integer linear programming formulation and which yield the maximum-possible dimension to derive the graphical structure of the code's graphical representation.   By graphical representation, we mean the tree-like structure of the graphical representation of a seq-LRC of given block length that is dimension optimal. Note that as long as $\sum_{i=0}^{s+1} a_i=n$, the tree-like structure holds even for dimension-optimal codes except for the fact that the degrees of nodes is only contrained to be $\leq r+1$ and not $=(r+1)$ and the graph at the bottom of the tree could be different as $p$ could be non-zero and in case of $t$ odd, $a_{s+1}$ could be non-zero. This is because the tree-like structure is applicable as long as the code have minimum distance $\geq (t+1)$ (as opposed to requiring rate optimality) as the tree-like structure comes from Lemma \ref{claim1} for $t$ even and Lemma \ref{claim2} for $t$ odd in Appendix \ref{Appendix_rate_bounds}. 

We start with a rate-optimal code having parameters close to the parameter set of the code that we are trying to construct and modify it according to the values of $a_0,a_1,\ldots,a_s,a_{s+1},p$ to derive dimension-optimal codes for the target parameters. We do not have an explicit algorithm to modify the rate-optimal code.  Rather, we illustrate this approach to constructing dimension-optimal codes by providing a few hand-crafted examples. All of the example constructions shown correspond to the parameter set: $t = 4$, $r = 3$. The values of $a_0,a_1,a_2,p$ which yield the maximum dimension $k$ for a given value of $n$ are obtained by solving the integer linear programming problem using MATLAB. Table \ref{table:dim} lists the values of $a_0$, $a_1$, $a_2$ and $p$ which yield the maximum value for dimension $k$ say $k_{max}$, for a number of values of $n$ and $t = 4$, $r = 3$. While complying with the structure dictated by the parameters $a_0,a_1,a_2,p,n,k_{max},r,t$, we try to connect the nodes with edges, maintaining the right degrees. 

In all of the graphs described below, we assume as always that each edge represents a distinct code symbol and that each node represents the parity check of the code symbols represented by edges incident on it.
Note that some rows in Table \ref{table:dim} have non-zero values for $p$, whereas, as the derivation using linear programming shows, the value of $p$ for rate-optimal codes is zero.  A few example graphs are shown next based on our approach.  The graphs below are derivatives of a Moore graph of degree $4$ and girth $6$ with $26$ vertices. Note that the code corresponding to the Moore graph is both rate and block-length optimal for $t=5$.  As explained in Section \ref{sec:complete_code}, we use this Moore graph to get a {\em rate-optimal} code having parameters $t=4$, $r=3$, $a_0=2r=6$, $n = 51$ and $k = 27$. This is shown in the Figure \ref{fig:olsgraph}. Further, modification of this graph turns out to yield examples of codes that are dimension optimal. 

\begin{table}
	\begin{center}
		\begin{tabular}{||c|c|c|c|c|c||}
			\hline
			$n$ & $k_{\text{max}}$ & $a_0$ & $a_1$ & $a_2$ & $p$\\
			\hline\hline
			$\mathbf{43}$ & $\mathbf{22}$ & $\mathbf{6}$ & $\mathbf{15}$ & $\mathbf{22}$ & $\mathbf{0}$\\
			\hline
			$44$ & $23$ & $5$ & $15$ & $24$ & $1$\\
			\hline
			$45$ & $23$ & $6$ & $16$ & $23$ & $0$\\
			\hline
			$46$ & $24$ & $6$ & $16$ & $24$ & $0$\\
			\hline
			$\mathbf{47}$ & $\mathbf{24}$ & $\mathbf{7}$ & $\mathbf{16}$ & $\mathbf{24}$ & $\mathbf{0}$\\
			\hline
			$48$ & $25$ & $6$ & $17$ & $25$ & $0$\\
			\hline
			$49$ & $25$ & $7$ & $17$ & $25$ & $0$\\
			\hline
			$\mathbf{50}$ & $\mathbf{26}$ & $\mathbf{6}$ & $\mathbf{18}$ & $\mathbf{26}$ & $\mathbf{0}$\\
			\hline
			$\mathbf{51}$ & $\mathbf{27}$ & $\mathbf{6}$ & $\mathbf{18}$ & $\mathbf{27}$ & $\mathbf{0}$\\
			\hline
			$52$ & $27$ & $7$ & $18$ & $27$ & $0$\\
			\hline
			$53$ & $28$ & $6$ & $18$ & $29$ & $1$\\
			\hline
			$54$ & $28$ & $7$ & $19$ & $28$ & $0$\\
			\hline
			$55$ & $29$ & $6$ & $18$ & $31$ & $2$\\
			\hline
			$\mathbf{56}$ & $\mathbf{29}$ & $\mathbf{7}$ & $\mathbf{20}$ & $\mathbf{29}$ & $\mathbf{0}$\\
			\hline
		\end{tabular}
	\end{center}
	\caption{Table showing the upper bound on dimension $k_{\text{max}}$, and the optimal values of $a_i,p$ (values for which the upper bound on dimension is achievable) for different values of block length for $t=4$ and $r=3$. The rows with highlighted text represent the parameters for which we were able to construct dimension-optimal codes.}
	\label{table:dim}
\end{table}
\begin{figure}[ht]
	\centering
	\includegraphics[width=3in]{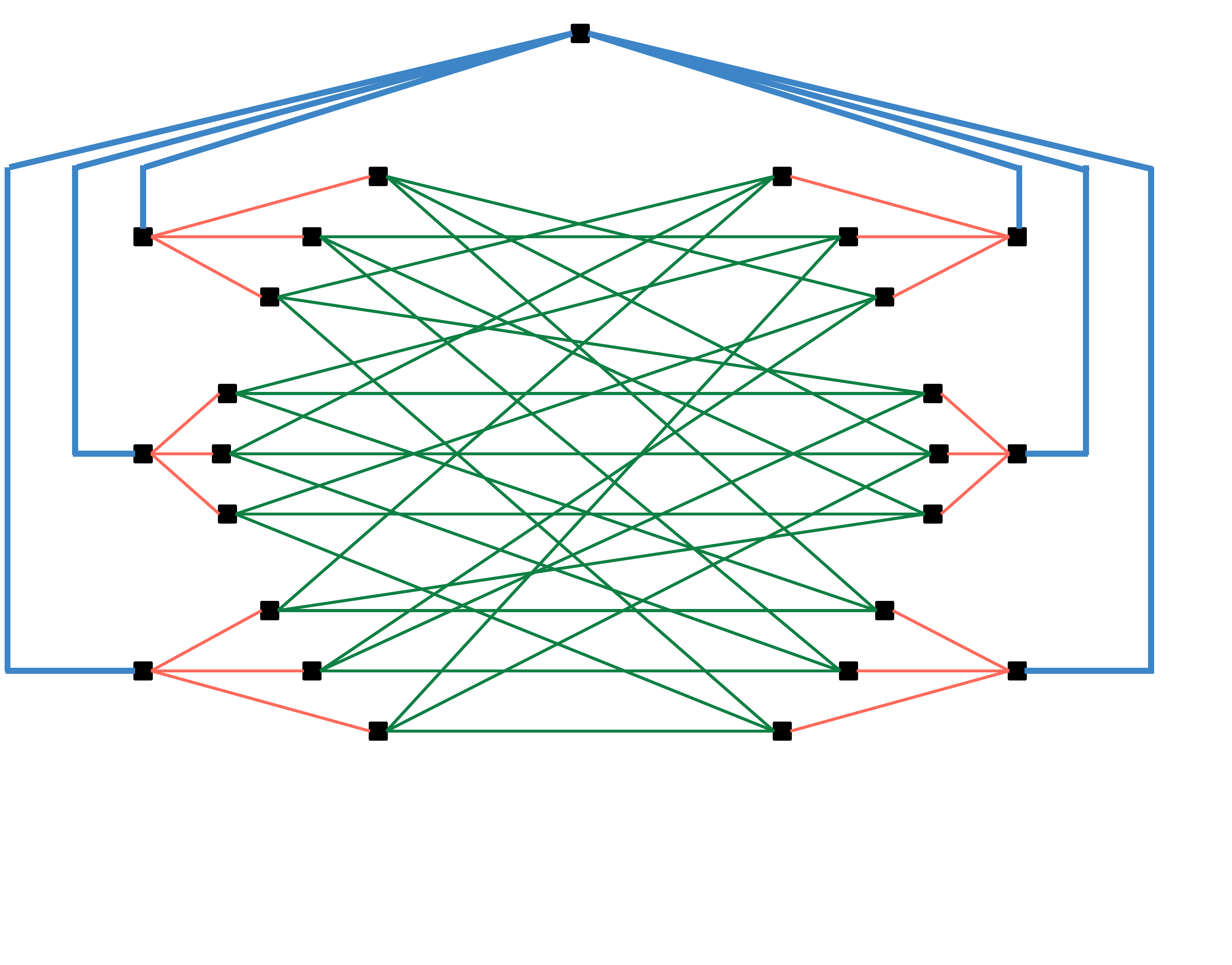}
	\caption{Graphical representation of a dimension-optimal code with $r=3,t=4$ having largest possible dimension $(k=27$) for block length $n=51$. It is also a rate-optimal code.}
	\label{fig:olsgraph}
\end{figure}

Two other example constructions of dimension-optimal codes are shown in the Figures \ref{fig:blkop1} and \ref{fig:blkop4}. Note that these examples use the values of $a_0,a_1,a_2,p$ from the Table \ref{table:dim}. The graphs corresponding to these 2 examples are derived from the graph shown in Figure \ref{fig:olsgraph}. 
\begin{figure}[h!]
	\begin{minipage}[t]{0.45\linewidth}
		\includegraphics[width=\linewidth]{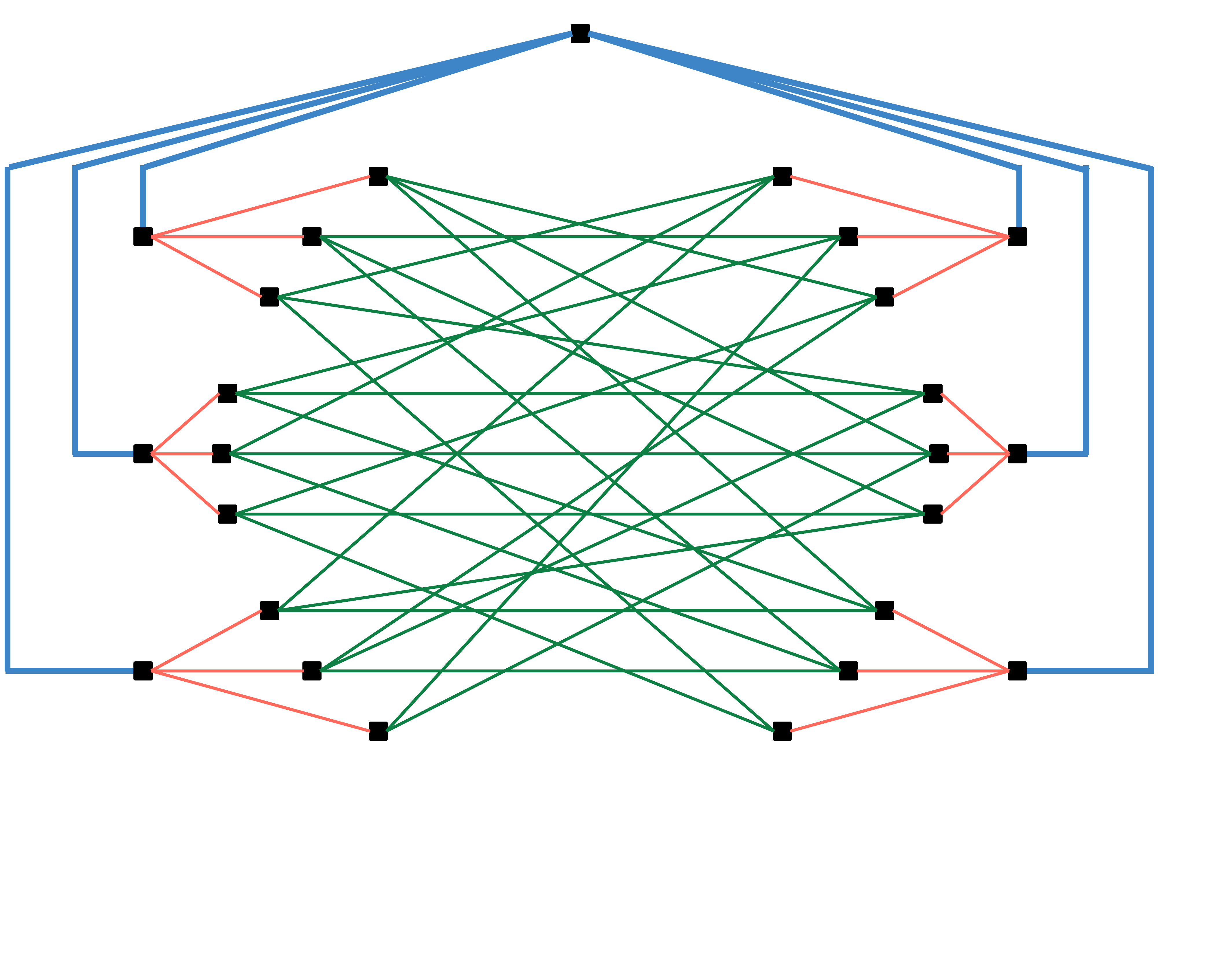}
		\caption{Graphical representation of a dimension-optimal code with $r=3,t=4$ having largest possible dimension $(k=26$) for block length $n=50$.}
		\label{fig:blkop1}
	\end{minipage}%
	\hfill%
	\begin{minipage}[t]{0.45\linewidth}
		\includegraphics[width=\linewidth]{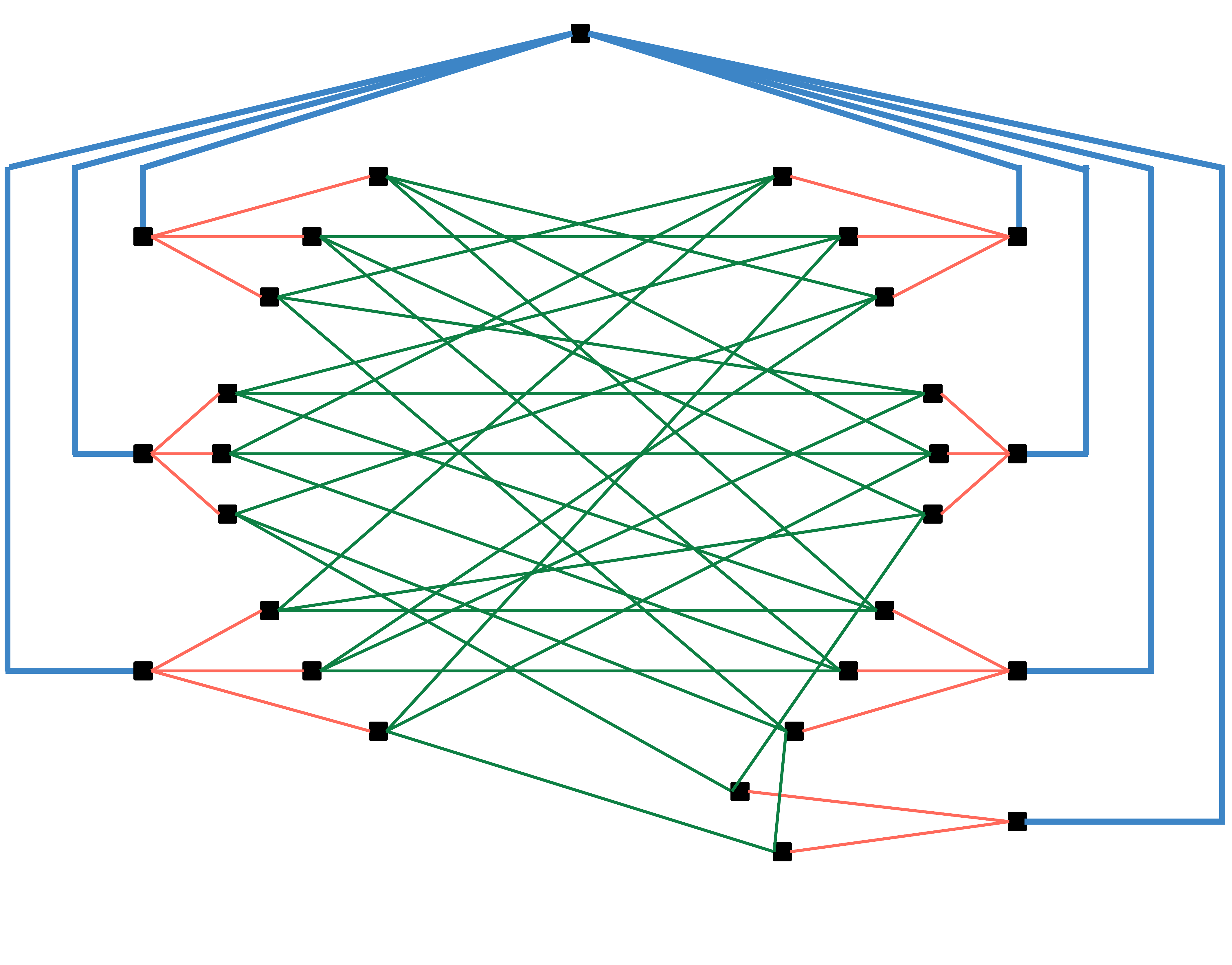}
		\caption{Graphical representation of a dimension-optimal code with $r=3,t=4$ having largest possible dimension $(k=29$) for block length $n=56$.}
		\label{fig:blkop4}
	\end{minipage} 
\end{figure}	 	 
	
	 The girth of the graphs shown in Figure \ref{fig:olsgraph}, \ref{fig:blkop1} and \ref{fig:blkop4} were verified to be at least $5$ (exactly $5$ in some cases) using a computer program that executes a simple algorithm described below.\\
	 \textbf{Simple Algorithm for Calculating Girth of an Undirected Graph}\\	
	 Let $G=(V,E)$ be the graph under consideration. Let the graph not contain any self-loops. Let $E=\{e_1,...,e_M\}$.\\
	 \textbf{Algorithm:}
	 \begin{itemize}
	 	\item \textit{girth}$=\infty$
	 	\item \textit{for} $(i=1,i\leq M,i=i+1)$\\
	 	\ \ \ \ \textit{Remove edge }$e_i$\textit{ from }$E$, \textit{ girth }$=$\textit{ min}\textit{(girth, $1+$ length of the shortest path between nodes previously connected by $e_i $)}
	 \end{itemize}
	 The above algorithm was carried out by representing a graph through its incidence matrix. The shortest path between pairs of nodes was computed by the application of Dijkstra's algorithm \cite{Dijkstra:1959:NTP:2722880.2722945}.
	
		\section{A Connection with Tornado Codes} \label{sec:Tornado}
		 	Tornado codes, introduced in \cite{tornado}, offer an efficient choice for erasure correction as they permit a large number of erasures to be corrected with high probability, while maintaining the complexity of encoding and decoding to a low level. Sometime after we came up with the graphical construction of rate-optimal seq-LRC, we realized that there were structural similarities with the layered-parity-check nature of the graph representing a Tornado code.  We provide below, a brief description of these similarities.

The graphical representation of a Tornado code over a finite field $\mathbb{F}_q$, involves a collection of $m+1$ consecutive bipartite graphs $B_0,\ldots,B_m$ as well as an auxiliary erasure code ${\cal C}_{\text{T}}$\footnote{Not to be confused with the earlier notion of  an auxiliary graph ${\cal A}$.}.  The $i$th graph, $B_i$, has $\beta^i k$ left nodes and $\beta^{i+1}k$ right nodes (with edges drawn between left and right nodes) for $\forall$ $0\leq i \leq m+1$ and for some $0 < \beta < 1$.   
The $\beta^{i+1}k$ right nodes of the graph $B_i$ are identified with the $\beta^{i+1}k$ left nodes of the graph $B_{i+1}$.    The left nodes of $B_0$ are associated to message symbols, the right nodes with parity symbols. When we speak of a parity-check on a collection of symbols $x_i$, we mean a linear constraint $\sum_i a_i x_i=0$ with $a_i \in \mathbb{F}_q$.  The right nodes of $B_1$ are associated to parity symbols representing parity checks imposed on the parity symbols associated to graph $B_0$ etc.  Thus the code symbols can be partitioned into message symbols, parity symbols, parity-upon-parity, parity-upon-parity-upon-parity and so on, as described below: 
\bean
\text{left nodes of $B_0$} & \Rightarrow & \text{message symbols $m_i$} \\ 
\text{left nodes of $B_1$ (or equivalently, right nodes of $B_0$)} &  \Rightarrow & \text{parity symbols $p_i$ on message symbols $m_i$} \\ 
\text{left nodes of $B_2$ (or equivalently, right nodes of $B_1$)} &  \Rightarrow & \text{parity symbols $q_i$ on parity symbols $p_i$}, 
\eean
etc.   
 Hence in any graph $B_i$, a node in the right represents a parity symbol storing the parity of the symbols represented by nodes in the left it is connected with via edges.
In addition, there is a final, set of parity symbols that are derived as follows.  Let ${\cal C}_{\text{T}}$ be an erasure code of rate $(1-\beta)$.  The $k \beta^{m+1} $ parity symbols corresponding to the right nodes of the bipartite graph $B_m$ are fed as message symbols to the erasure code ${\cal C}_{\text{T}}$. This code then generates a further $\frac{\beta^{m+2}k}{1-\beta}$ parity symbols. The overall code can be verified to have rate $(1-\beta)$. A graphical depiction of the above description of a Tornado code is given in Fig. \ref{fig:tornado1}

We show below that rate-optimal seq-LRC also permit a graphical description in terms of  sequence of bipartite graphs.  
		 	
\subsection{Bipartite-Graph-Based Graphical Description of a Rate-Optimal seq-LRC}
		 	Rate-optimal seq-LRC for $t$ even can be viewed as being constructed in a manner similar to Tornado codes.    To see this, we have to modify our graphical representation.  This is because in our graphical representation of the seq-LRC code, nodes represent parity checks and edges represent message symbols, whereas, in the description of a Tornado code, vertices represent code symbols, i.e., either message or parity symbols.    The edges in the graphical representation of a Tornado code, serve only to identify the linear-dependence relation between symbols in a  vertical layer with symbols in the immediately prior layer to the left.    
			
To make the connection with Tornado codes, we therefore provide a revised graphical description of the rate-optimal seq-LRC in terms of a sequence of bipartite graphs $B_i$.  
The description is slightly different for the odd and even cases.  We begin with the case of $t$ even, $t=2s+2$.  
\begin{enumerate}[(i)]
\item The left nodes of the bipartite graph $B_0$ are in $1-1$ correspondence with the edges $E_{s+1}$, the right nodes in $1-1$ correspondence with the edges $E_{s}$ .  We connect a node on the left representing an edge $e_1$ in $E_{s+1}$ with a node on the right corresponding to an edge $e_2$ in $E_{s}$ iff the edges are incident on the same node belonging to vertex set $V_s$\footnote{The restriction to node set $V_s$ etc, causes the resultant graph to be different from the line graph of \ginf.}.
\item Next, the left nodes of the bipartite graph $B_1$ are in $1-1$ correspondence with the edges $E_{s}$, the right nodes in $1-1$ correspondence with the edges $E_{s-1}$ .  We connect a node on the left representing an edge $e_1$ in $E_{s}$ with a node on the right corresponding to an edge $e_2$ in $E_{s-1}$ iff the edges are incident on the same node in $V_{s-1}$.  
\item We continue in this fashion in this way ending up with the rightmost bi-partite graph $B_s$ linking edges $E_1$ on the left with edges $E_0$ on the right iff the edges are incident on the same node in $V_{0}$.  
\end{enumerate} 
This is illustrated in the figure on the left in Fig.~\ref{fig:tornado3}. We note that the left nodes of $B_0$ have degree $2$, the right nodes have degree $r$.  In all the other bipartite graphs, $B_i, 1 \leq i \leq s$, the left nodes have degree $1$, the right nodes continue to have degree $r$.  The left nodes of $B_0$ can be regarded as message symbols. 

In the case $t$ odd, the graph is constructed in a similar (but not identical) fashion:
\begin{enumerate}[(i)]
\item The left nodes of the bipartite graph $B_0$ are in $1-1$ correspondence with the edges $E_{s}$, the right nodes in $1-1$ correspondence with the edges $E_{s-1}$ .  We connect a node on the left representing an edge $e_1$ in $E_{s}$ with a node on the right corresponding to an edge $e_2$ in $E_{s-1}$ iff the edges are incident on the same node in $V_{s-1}$.  
\item Next, the left nodes of the bipartite graph $B_1$ are in $1-1$ correspondence with the edges $E_{s-1}$, the right nodes in $1-1$ correspondence with the edges $E_{s-2}$ .  We connect a node on the left representing an edge $e_1$ in $E_{s-1}$ with a node on the right corresponding to an edge $e_2$ in $E_{s-2}$ iff the edges are incident on the same node in $V_{s-2}$.  
\item We continue in this fashion in this way ending up with the rightmost bi-partite graph $B_{s-1}$ linking edges $E_1$ on the left with edges $E_0$ on the right iff the edges are incident on the same node in $V_{0}$.   
\end{enumerate} 
This is illustrated in the figure on the right in Fig.~\ref{fig:tornado3}.   Here, all the left nodes of the bipartite graphs, $B_i, 0 \leq i \leq s-1$, have degree $1$, the right nodes continue to have degree $r$.  The left nodes of $B_0$, cannot be regarded as message symbols in the odd case however, as they are constrained to satisfy the parity checks imposed by the nodes $V_s$.  Thus we may regard the left nodes of $B_0$ as code symbols of a code ${\cal C}_{\text{LRC}}$ that encodes the actual message symbols of the overall code to produce the code symbols associated with left nodes of $B_0$. 
  \begin{figure}
		 		\begin{center}
		 			\includegraphics[width=\textwidth]{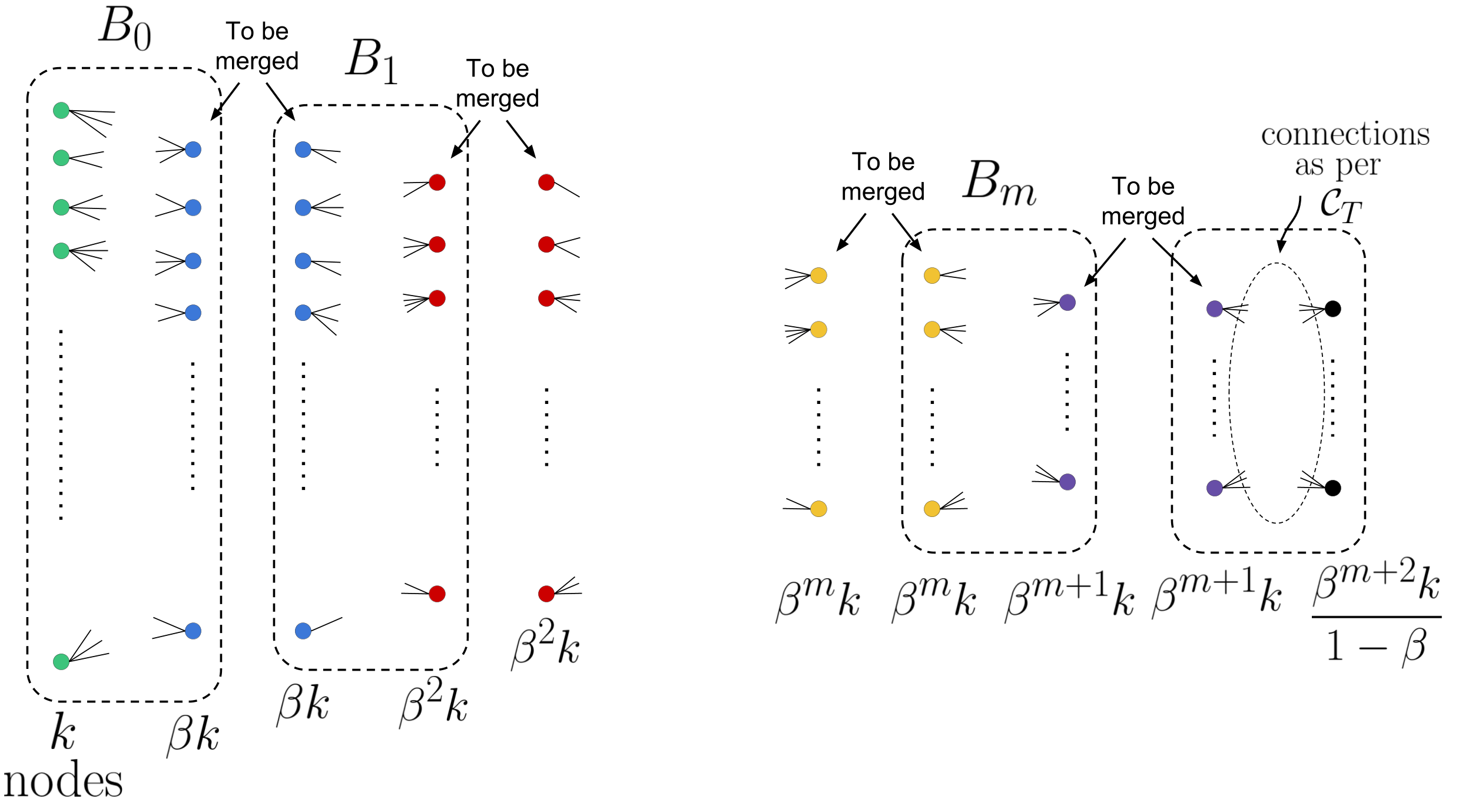}
		 			\caption{The $(m+1)$bipartite graphs $B_0,\ldots, B_m$, appearing in a Tornado code, along with the graph induced by the code ${\cal C}_{\text{T}}$. In the construction of a Tornado code, the ``right" nodes of $B_i$ and the ``left" nodes of $B_{i+1}$, for $0 \leq i \leq m-1$ represent the same set of $\beta^{i+1}k$ parity symbols, and will be merged in the final graph. Such sets of nodes are marked as ``to be merged" in the figure.}
		 			\label{fig:tornado1}
		 		\end{center}
		 	\end{figure}
			\begin{figure}
		 		\begin{center}
		 			\includegraphics[width=\textwidth]{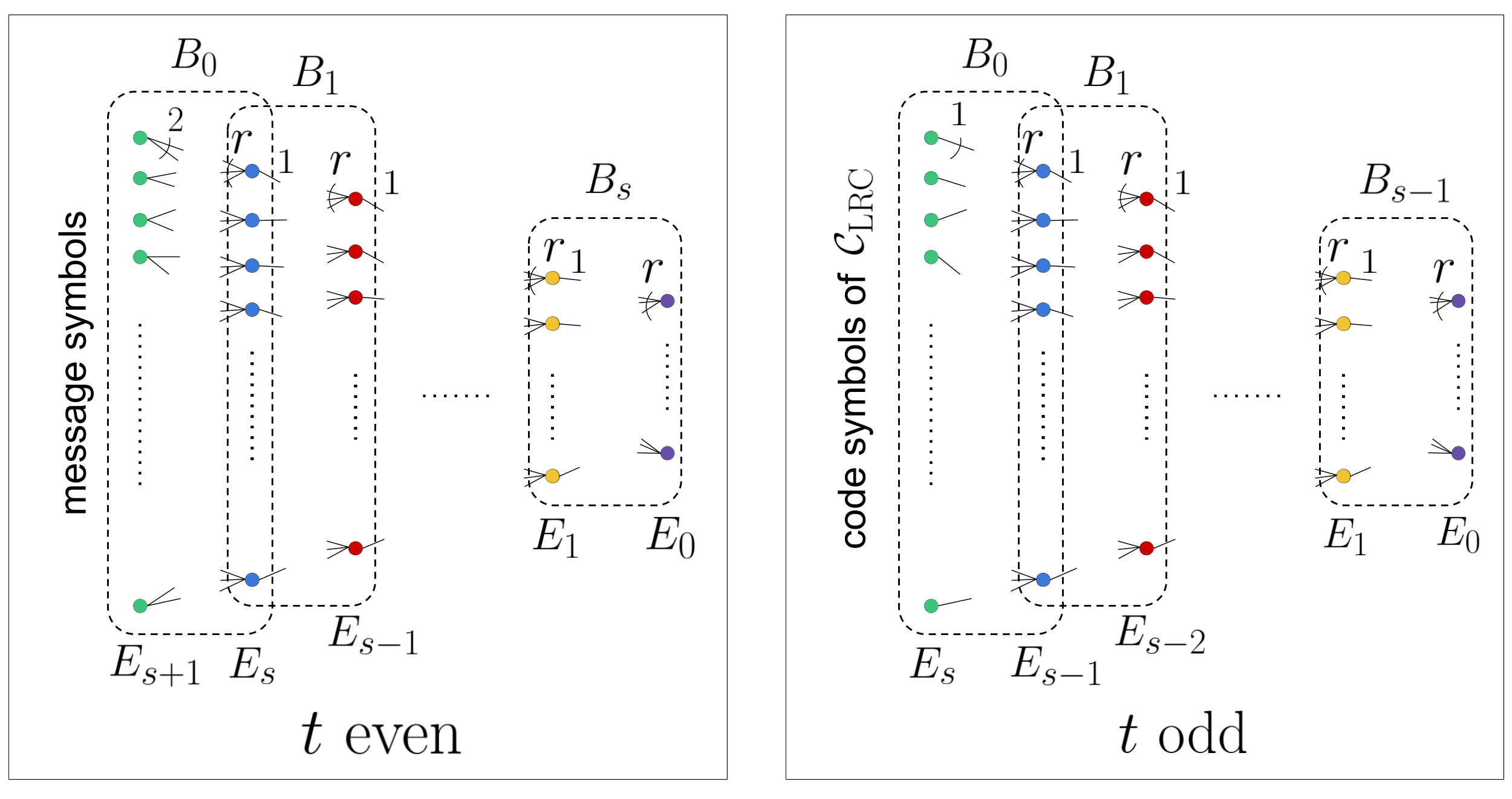}
		 			\caption{The rate-optimal seq-LRC presented as a sequence of bipartite graphs, so as to make the connection with Tornado codes.}
		 			\label{fig:tornado3}
		 		\end{center}
		 	\end{figure}

		 	\subsection{Some Differences Between Tornado Codes and Rate-Optimal seq-LRC}
		 	\ben
		 	\item {\bf Degree Distribution} In Tornado codes, the bipartite graphs $B_i$ in general, are designed to be either having a certain left and right degree distribution or as expander graphs (there could be other ways of designing $B_i$) whereas in the rate-optimal seq-LRC for $t$ even, the left most nodes in $B_0$ corresponding to message symbols have degree $2$ and left nodes of subsequent bipartite graphs has degree $1$ and all the right nodes have degree $r$.
		 	\item {\bf Presence of  an auxiliary code} 
			\bit \item In the case of a Tornado code, the left nodes of the bipartite graph $B_0$ are the message symbols.  The same is true of the bipartite graph $B_0$ in the case of a rate-optimal seq-LRC for $t$ even.  In the case of a rate-optimal seq-LRC for $t$ odd, however, the left most nodes in $B_0$ are code symbols of the code ${\cal C}_{\text{LRC}}$ which may be regarded as a precoder.		 	
			\item The right nodes of the bipartite graph $B_m$ in the case of a Tornado code are message symbols of the auxiliary code ${\cal C}_{\text{T}}$, whereas there is no such code in the case of a seq-LRC.  \eit 
		 	\item {\bf Probabilistic vs. Worst-Case Guarantees} Tornado codes offers probabilistic guarantees with regard to the recovery of $n\delta$ erasures (for some fraction $\delta>0$) and have been shown to achieve capacity on the erasure channel.  Rate-optimal seq-LRC offer deterministic guarantees on the recovery of a fixed number $t$ of erasures and achieve maximum possible rate for a seq-LRC for a given $r,t$.
		 	
		 	\een

	\appendices

	  \section{Proof of Theorem \ref{thm:rate_both}}\label{Appendix_rate_bounds}
	  	
		We begin with an overview of the proof which contains $3$ steps. 
	
	  	\bit
	  	\item Step 1: We first deduce the general form of the p-c matrix of the code.  The form is different for the $t$ even and $t$ odd cases. 
	  	\item Step 2: We obtain linear inequalities by counting in two different ways, along rows and columns, the number of non-zero entries in various sub-matrices of the p-c matrix.  
	  	\item Step 3: We algebraically manipulate these inequalities to derive the upper bound on rate of the code.
	  	\eit
	  	
	  	\subsection{\bf Notation}
		Given an $(m \times n)$ matrix $A$ and subsets $E_1=\{e_1,..,e_p\} \subseteq [m]$, $E_2=\{f_1,...,f_q\} \subseteq [n]$ of size $p$, $q$ respectively, with $e_{\ell} < e_{\ell+1}$, $\forall \ell \in [p-1]$ and $f_{\ell} < f_{\ell+1}$, $\forall \ell \in [q-1]$ we denote by $A|_{E_1,E_2}$, the $(p\times q)$ submatrix of $A$ whose rows are indexed by $E_1$ and columns by $E_2$. 
	  	
  \begin{proof}	  
	  We will handle separately the cases $t$ even and $t$ odd.  
\subsubsection{\bf Case (i) $t$ even}  We begin with the case $t$ even.  	
	  	\paragraph{\bf Step 1: Deducing the form of the p-c matrix}

		  	Recall that the p-c matrix $H$ is of the form $H={[{\underline{c}_1} \  {\underline{c}_2} \ldots {\underline{c}_m}]}^T$ where ${\underline{c}_1,\underline{c}_2,\ldots,\underline{c}_m}$ are $m$ linearly independent codewords with $w_H(\underline{c}_i) \leq r+1$, $\forall i \in [m]$.   It is not hard to see that by permuting rows and columns of $H$, the matrix can be brought into the form:
	  	\bea
	  	H = \left[
	  	\scalemath{1}{
	  		\begin{array}{c|c|c|c|c|c|c|c|c}
	  			D_0 & A_1 & 0 & 0 & \hdots & 0 & 0 & 0 & \\
	  			\cline{1-8}
	  			0 & D_1 & A_2 & 0 & \hdots & 0 & 0 & 0 & \\
	  			\cline{1-8}
	  			0 & 0 & D_2 & A_3 & \hdots & 0 & 0 & 0 & \\
	  			\cline{1-8}
	  			0 & 0 & 0 & D_3 & \hdots & 0 & 0 & 0 &  \\
	  			\cline{1-8}
	  			\vdots & \vdots & \vdots & \vdots & \hdots & \vdots & \vdots & \vdots & D\\
	  			\cline{1-8}
	  			0 & 0 & 0 & 0 & \hdots & A_{\frac{t}{2}-2} & 0 & 0 & \\
	  			\cline{1-8}
	  			0 & 0 & 0 & 0 & \hdots & D_{\frac{t}{2}-2} & A_{\frac{t}{2}-1} & 0 & \\
	  			\cline{1-8}
	  			0 & 0 & 0 & 0 & \hdots & 0 & D_{\frac{t}{2}-1} &  & \\
	  			\cline{1-7}
	  			0 & 0 & 0 & 0 & \hdots & 0 & 0 & C & \\
	  		\end{array} 
	  	}
	  	\right], \ \ \label{Hform}
	  	\eea
	  	%
	  	where the following conditions hold. 
	  	\begin{enumerate}[(i)]
	  		\item Rows are labeled by the integers $1,...,m$  with top most row of $H$ labelled $1$ and next row labeled $2$ and so on. Similarly columns are labeled by the integers $1,...,n$ with leftmost column of $H$ labeled $1$ and next column labeled $2$ and so on, 
	  		\item The matrix $D_0$ is a $(\rho_0 \times a_0)$ matrix for some integers $\rho_0,a_0$. In the matrix $D_0$, each column has weight $1$ and each row has weight at least $1$. The first $a_0$ columns of $H$ contains the columns of $D_0$. The set of first $a_0$ columns of $H$ is equal to the set of all those columns of $H$ which has weight $1$,
			\item $A_i$ is a $(\rho_{i-1} \times a_i)$ matrix for $1 \leq i \leq \frac{t}{2}-1$, $D_i$ is a $(\rho_{i} \times a_i)$ matrix for $0 \leq i \leq \frac{t}{2}-1$ for some integers $\{\rho_j\},\{a_j\}$.  An important point to note here is that we do not insist that all the integers $\{\rho_j\},\{a_j\}$ be non-zero.  This needs some explanation.  We interpret a matrix of size $(u \times v)$ where either $u$ or $v$ or both are zero as simply representing a vacuous or empty matrix, i.e., a matrix that is not present.   We present in the figure below an example to illustrate this point. 
 	\begin{figure}[ht!]
 		\centering
 		\begin{minipage}[c]{0.48\textwidth}
 			\centering
 				\bean
 				H = \left[
 				\begin{array}{c|c|c|c|c|c}
 					D_0 & A_1 & 0 & 0 & 0 & \\
 					\cline{1-5}
 					0 & D_1 & A_2 & 0 & 0 & \\
 					\cline{1-5}
 					0 & 0 & D_2 & A_3 & 0 & D \\
 					\cline{1-5}
 					0 & 0 & 0 & D_3 &  & \\
 				    \cline{1-4}
 				    0 & 0 & 0 & 0 & C & \\
 				\end{array} 
 				\right],  
 				\eean
 		\end{minipage}
 		\hspace{-0.1\textwidth}
 		\begin{minipage}[c]{0.48\textwidth}
 			\bean
 			H = \left[
 			\begin{array}{c|c|c|c|c}
 				D_0 & A_1  & 0 & 0 & \\
 				\cline{1-4}
 				0 & D_1  & 0 & 0 & \\
 				\cline{1-4}
 				0 & 0  & A_3 & 0 & D \\
 				\cline{1-4}
 				0 & 0  & D_3 &  & \\
 				\cline{1-3}
 				0 & 0 & 0 & C & \\
 			\end{array} 
 			\right].
 			\eean
 		\end{minipage}
 		\caption{Giving an example of form of p-c matrix given in \eqref{Hform} when there are empty matrices among $\{A_i\},\{D_i\}$: We illustrate it for the case $t=8$. In the left is the p-c matrix when all of $a_0,a_1,a_2,a_3,\rho_0,\rho_1,\rho_2,\rho_3$ are non-zero i.e., $D_0,A_1,D_1,A_2,D_2,A_3,D_3$ are non-empty matrices. In the right is the p-c matrix when $a_0,a_1,a_3,\rho_0,\rho_1,\rho_2,\rho_3$ are non-zero but $a_2=0$ i.e., $D_0,A_1,D_1,A_3,D_3$ are non-empty matrices but $A_2,D_2$ are empty matrices. In both left and right, we assume both $C,D$ are non-empty matrices.}
 		\label{fig:Genformpcmatrix}
 	\end{figure}
 	
	  		\item  For $1\leq i \leq \frac{t}{2}-1$, when the matrices $A_i$,$D_i$, are not empty, then they are such that, each column of the concatenated matrix $B_i \triangleq \left[\frac{A_i}{D_i}\right]$ has weight 2, each column of $A_i$ has weight at least $1$ and each row of $D_i$ has weight at least $1$ and each column of $D_i$ has weight at most $1$.  
	  		\item The matrix $C$ is a $((\rho_{\frac{t}{2}-1}+p) \times a_{\frac{t}{2}})$ matrix for some integers $p,a_{\frac{t}{2}}$. Here also we do not insist that both $p,a_{\frac{t}{2}}$ are non-zero. When the matrix $C$ is not empty i.e., when $a_{\frac{t}{2}} \neq 0$, it is a matrix with each column having weight $2$. The set of columns of the matrix $D$ is precisely the set of all columns of $H$ having weight $\geq 3$.
	  	\end{enumerate}
	  		  	\paragraph{\bf Handling the case when any of matrices $A_i,D_i,C$ is empty} We now consider cases when any of the above matrices $\{A_j\},\{D_j\},C$ is empty.  
	  		  	Define:
	  		  	\bean
	  		    \ell_1 &=& \min{\{\{j: 1 \leq j \leq \frac{t}{2}-1, A_j \text{ is an empty matrix}, D_j \text{ is not an empty matrix}\} \cup \{\frac{t}{2} \} \}}, \\
	  		  	\ell_2 &=& \min{\{\{j: 1 \leq j \leq \frac{t}{2}-1, A_j,D_j \text{ are empty matrices}\} \cup \{\frac{t}{2} \} \}}, \\
	  		  	\ell_3 &=& \min{\{\{j: 1 \leq j \leq \frac{t}{2}-1, A_j \text{ is not an empty matrix}, D_j \text{ is an empty matrix}\} \cup \{\frac{t}{2} \} \}}.
	  		  	\eean
	  		     We will now show that the case $A_j$ is an empty matrix and $D_j$ is not an empty matrix or the case $\ell_3 < \ell_2$ cannot occur. If the case $A_j$ is an empty matrix and $D_j$ is an empty matrix occur we will redefine our matrix $H$ by redefining the matrix $C$.
	  		  	\bit
	  		  	\item If $A_j$ is an empty matrix for some $j$ then $D_j$ is also an empty matrix because each column of $[\frac{A_j}{D_j}]$ has weight $2$ and $D_j$ has the constraint that each of its column has weight at most $1$. Hence the case $A_j$ an empty matrix and $D_j$ a non-empty matrix cannot occur for any $j$.
	  		  	\item If $A_j$ is not an empty matrix and $D_j$ is an empty matrix for some $j$ then $\rho_j=0$ and hence $A_{j+1}$ is also an empty matrix and by previous point, $D_{j+1}$ is also an empty matrix. Hence $\ell_2 \leq \ell_3+1$ (This inequality holds true obviously when $\ell_3=\frac{t}{2}-1$).
	  		  	\item We now handle the case when $A_j$ is not an empty matrix and $D_j$ is an empty matrix for some $j$ and $\ell_2 > \ell_3$ i.e., $\ell_2=\ell_3+1$. As $\ell_2 > \ell_3$, the proof of the Lemma \ref{claim1} mentioned following this discussion (since the proof of Lemma \ref{claim1} proceeds by induction starting with the proof of the lemma for $D_0$ first and then proceeding to $A_1$ and then $D_1$ and so on and hence we prove the lemma for $A_{\ell_3}$ first and then proceed to $D_{\ell_3}$ and since $D_0,A_i,D_i$ must be non-empty matrices $\forall 1 \leq i \leq \ell_3-1$) will imply that each column of $A_{\ell_3}$ has weight 1 which will imply that $D_{\ell_3}$ cannot be an empty matrix as each column of $[\frac{A_{\ell_3}}{D_{\ell_3}}]$ has weight $2$. Hence the case $A_{\ell_3}$, a non-empty matrix and $D_{\ell_3}$, an empty matrix cannot occur with $\ell_2 > \ell_3$.
	  		   \item We now handle the case when $A_j,D_j$ are empty matrices for some $j$ and $\ell_2 < \ell_3$ or the case when $D_0$ is an empty matrix. Note that we can assume $\ell_2 < \ell_3$ as we have already proved that $\ell_2 > \ell_3$ cannot occur. We set $\ell_2=0$, if $D_0$ is an empty matrix. We now redefine $H$ based on the value of $\ell_2$.
	  	We set $A_i$, $D_i$ to be empty matrices and set $a_i=0, \rho_i=0$, $\forall \ell_2 \leq i \leq \frac{t}{2}-1$. 
	  	Let $E_2  \subseteq \{\sum_{i=0}^{\ell_2-1}a_i+1,\ldots,n-1,n\}$ such that $E_2$ is the set of labels of all the 2-weight columns of $H$ apart from those 2-weight columns of $H$ containing the columns of $B_1,B_2,\ldots,B_{\ell_2-1}$. By a $2$-weight column, we refer to a column of weight $2$. Let $E_1 = \{\sum_{i=0}^{\ell_2-1}{\rho_i}+1,\ldots,m-1,m\}$. If $\ell_2=\frac{t}{2}$ then $H$ is defined by \eqref{Hform} with none of the matrices among $\{A_i\},\{D_i\}$ being empty. If $\ell_2 < \frac{t}{2}$, we redefine $C=H|_{E_1,E_2}$.   If $\ell_2 < \frac{t}{2}$, the matrix $H$ can be written in the form given in \eqref{Hform_111} and hence defined by \eqref{Hform_111}.
	  	\bea
	  	H = \left[
	  	\scalemath{1}{
	  		\begin{array}{c|c|c|c|c|c|c|c|c}
	  			D_0 & A_1 & 0 & 0 & \hdots & 0 & 0 & 0 & \\
	  			\cline{1-8}
	  			0 & D_1 & A_2 & 0 & \hdots & 0 & 0 & 0 & \\
	  			\cline{1-8}
	  			0 & 0 & D_2 & A_3 & \hdots & 0 & 0 & 0 & \\
	  			\cline{1-8}
	  			0 & 0 & 0 & D_3 & \hdots & 0 & 0 & 0 &  \\
	  			\cline{1-8}
	  			\vdots & \vdots & \vdots & \vdots & \hdots & \vdots & \vdots & \vdots & D\\
	  			\cline{1-8}
	  			0 & 0 & 0 & 0 & \hdots & A_{\ell_2-2} & 0 & 0 & \\
	  			\cline{1-8}
	  			0 & 0 & 0 & 0 & \hdots & D_{\ell_2-2} & A_{\ell_2-1} & 0 & \\
	  			\cline{1-8}
	  			0 & 0 & 0 & 0 & \hdots & 0 & D_{\ell_2-1} & 0 & \\
	  			\cline{1-8}
	  			0 & 0 & 0 & 0 & \hdots & 0 & 0 & C & \\
	  		\end{array} 
	  	}
	  	\right]. \ \ \label{Hform_111}
	  	\eea

	  	Note that none of the matrices $D_0,A_1,D_1,A_2,D_2,\ldots,A_{\ell_2-1},D_{\ell_2-1}$ are empty matrices. Irrespective of the value of $\ell_2$, let the number of columns in $C$ be denoted as $a_{\frac{t}{2}}$ and the number of rows of $C$ be denoted as $\rho_{\frac{t}{2}-1}+p$. Note that $\rho_{\frac{t}{2}-1}=0$, if $D_{\frac{t}{2}-1}$ is an empty matrix because if $A_{\frac{t}{2}-1}$ is also an empty matrix then clearly we have set $\rho_{\frac{t}{2}-1}=0$ and the case $A_{\frac{t}{2}-1}$ is not an empty matrix and $D_{\frac{t}{2}-1}$ is an empty matrix cannot occur at this point. If $C$ is an empty matrix then we can clearly set $a_{\frac{t}{2}}=0$.
	  	The entire derivation of upper bound on rate is correct and all the inequalities in the derivation will hold with $a_i=0$, $\rho_i=0$, $\forall \ell_2 \leq i \leq \frac{t}{2}-1$ and $a_{\frac{t}{2}}=0$ (if $C$ is an empty matrix).
	  	\eit
	  	Although we have to prove the following Lemma \ref{claim1} for $A_i,D_i$, $\forall 1 \leq i \leq \ell_2-1$, $D_0$, we assume all $D_0,$ $A_i,D_i$, $\forall 1 \leq i \leq \frac{t}{2}-1$ to be non-empty matrices and prove the lemma. Since the proof of the lemma is by induction, the induction can be made to stop after proving the lemma for $A_{\ell_2-1},D_{\ell_2-1}$ (The proof of the lemma is by induction and induction starts by proving the lemma for $D_0$ and proceeds to $A_1$ and to $D_1$ and so on as mentioned before) and the proof is unaffected by it.
	  		  	\paragraph{\bf Deducing the structure of p-c matrix further}
	  	\begin{lem}\label{claim1}
	  		For $1 \leq i \leq \frac{t}{2}-1$, $A_i$ is a matrix with each column having weight 1.
	  		For $0 \leq j \leq \frac{t}{2}-1$, $D_j$ is a matrix with each row having weight 1 and each column having weight 1. 
	  	\end{lem}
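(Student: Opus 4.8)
The plan is to prove Lemma \ref{claim1} by induction along the staircase, establishing the structure of $D_0$ first, then $A_1$, then $D_1$, then $A_2$, and so on, in the order these blocks appear down the diagonal of \eqref{Hform}. Two facts are used throughout. First, the rows of $H$ form a basis of $\mathcal{B}_0$, so every dual codeword of weight $\le r+1$ is a linear combination $\sum_k \beta_k \underline{h}_k$ of the rows $\underline{h}_k$ of $H$. Second, $\mathcal{C}$ is an $(n,k,r,t)_{\text{seq}}$ code, so every set of at most $t$ erased code symbols can be peeled off one symbol at a time, each symbol being recovered via some dual codeword of weight $\le r+1$ whose support avoids all erased symbols not yet recovered. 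The contradictions will always be produced by exhibiting a set of at most $t$ code symbols which, once erased, cannot be peeled at all.

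For the base case, since each column of $D_0$ already has weight $1$ and each row weight $\ge 1$, it suffices to rule out a row of $D_0$ of weight $\ge 2$. If two columns $j_1 \ne j_2$ of $D_0$ (equivalently, two weight-$1$ columns of $H$) had their unique nonzero entry in a common row $i$, then erasing $c_{j_1}$ and $c_{j_2}$ yields an unrecoverable pair of size $2\le t$: any $\underline{d} = \sum_k \beta_k \underline{h}_k$ with $d_{j_1}\ne 0$ must have $\beta_i\ne 0$, since row $i$ is the only row of $H$ nonzero in column $j_1$; but then $d_{j_2}= \beta_i H_{i,j_2}\ne 0$ as well, so $j_2$ lies in the support of $\underline{d}$. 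Neither symbol can be recovered first, contradicting the seq-LRC property, so every row of $D_0$ has weight $1$.

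For the inductive step, assume $D_0,A_1,D_1,\dots,A_{i-1},D_{i-1}$ all have the asserted form, i.e. each $D_\ell$ with $\ell\le i-1$ is a generalized permutation matrix (exactly one nonzero in each row and in each column) and each $A_\ell$ with $\ell\le i-1$ has all columns of weight $1$. To show $A_i$ has all columns of weight $1$, suppose column $j$ of $A_i$ had weight $2$; since the corresponding column of $B_i$ has weight $2$, column $j$ of $H$ then has weight exactly $2$ with both nonzeros among the rows of $D_{i-1}$, say in rows $p_1\ne p_2$. I would then \emph{trace backwards}: because $D_{i-1}$ is a generalized permutation matrix, $p_1$ and $p_2$ pick out two distinct columns of $B_{i-1}$ (of $D_0$ directly, if $i=1$); because $A_{i-1}$ has weight-$1$ columns, each of these picks out a row of $D_{i-2}$; because $D_{i-2}$ is a generalized permutation matrix, that row picks out a column of $B_{i-2}$; and so on, descending $i$ levels until each trace reaches a weight-$1$ column of $D_0$, which is a dead end. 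This yields a set $S$ of at most $2i+1\le t-1$ code symbols ($c_j$ together with the columns met along the two traces). Erasing $S$, no member can be recovered first: requiring one chosen member of $S$ to lie in $\mathrm{supp}(\underline{d})$ while all others do not forces nonvanishing of the coefficients $\beta_k$ to propagate along the two traces (each trace symbol has weight $2$ in $H$, so forcing it out of the support forces the next coefficient nonzero), until the far $D_0$-endpoint of the \emph{other} trace is forced into $\mathrm{supp}(\underline{d})$ — a contradiction. Hence $A_i$, and therefore $D_i$, has weight-$1$ columns. That $D_i$ also has weight-$1$ rows then follows similarly: a row of $D_i$ of weight $\ge 2$ would pair two columns of $B_i$ sharing a $D_i$-nonzero, and tracing both backwards (now with $A_i$ available) again builds a short, unpeelable erased set.

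The main obstacle is the tracing-backwards step and, above all, verifying unrecoverability \emph{over an arbitrary finite field}. The two backward traces need not be disjoint pendant paths — they can merge early and create a cycle-like configuration — and over a non-binary field a cycle of columns of $H$ need not be linearly dependent, so one cannot simply argue that ``the erased columns sum to zero.'' The argument must instead always go through the peeling condition directly, showing that \emph{every} weight-$\le r+1$ dual codeword singling out one erased symbol is forced to cover another one; this requires a careful case analysis of how the traces can meet, careful bookkeeping of which rows of $H$ can participate (only those incident, in the bipartite row/column incidence structure, to the erased columns), and checking that the erased set never exceeds $t$ symbols. Making all of these cases close up is the technical heart of the proof and is carried out in detail in the remainder of the appendix.
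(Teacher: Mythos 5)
Your overall strategy (induct down the staircase in the order $D_0, A_1, D_1, A_2,\dots$, and derive a contradiction from a bad row/column by producing a small bad configuration of at most $t$ columns) matches the paper's, and your base case for $D_0$ is fine. But the inductive step has a genuine gap, and it is exactly the one you flag and then defer: the merged-trace case over a non-binary field. Concretely, suppose the two backward traces from the rows $p_1\neq p_2$ of $D_{i-1}$ merge at a common row $q_1$ of $D_{i-2}$ and share a tail down to a weight-one column $h^{(0)}$ of $D_0$. Your forcing argument blocks recovery of the bad column $c_j$ and of the two level-$(i-1)$ columns, but when the symbol chosen to be recovered first is $h^{(0)}$ (or any tail symbol), the propagation only forces the coefficients $\beta$ up the shared tail and then forces $\beta_{p_1}\neq 0$ and $\beta_{p_2}\neq 0$; the remaining requirement $d_j=\beta_{p_1}H_{p_1,j}+\beta_{p_2}H_{p_2,j}=0$ can be satisfied over $\mathbb{F}_q$ with $q>2$ by choosing the coefficients appropriately, so no contradiction is forced and unpeelability of your erased set is not established (one would additionally have to rule out that some such combination has weight $\leq r+1$, a constraint your argument never exploits). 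Saying the case analysis ``is carried out in detail in the remainder of the appendix'' does not close this; as written, the inductive step is not proved.

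The paper avoids this obstruction by not arguing through peeling at all. It uses only the consequence $d_{\min}\geq t+1$ of sequential recovery, together with an induction invariant (Property $P_i$): every nonzero vector of weight at most $2$ supported on the rows of $D_i$ is a linear combination of at most $2(i+1)$ columns of $H$ drawn from the first $\sum_{\ell=0}^{i}a_\ell$ columns. A weight-$2$ column of $A_{i+1}$ (or a weight-$\geq 2$ row of $D_{i+1}$) then immediately yields a nonempty linearly dependent set of at most $2(i+1)+1\leq t-1$ (respectively $2(i+1)+2\leq t$) columns, contradicting $d_{\min}\geq t+1$. Because this invariant is a statement about existence of a linear-combination representation rather than about disjoint paths, it is automatically insensitive to traces merging and is valid over any field. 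If you want to keep your peeling-based framing, you would need to replace the path-tracing by something equivalent to Property $P_i$ (or restrict to $\mathbb{F}_2$, where the merged configuration gives a genuine short cycle of dependent columns); otherwise the cleanest repair is to switch, as the paper does, to the minimum-distance formulation.
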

	  	%
	  	%
	  	%
	  	\begin{proof}
	  		Let $d_{\min}$ be the minimum distance of the code with parity check matrix $H$. We use the fact that $d_{\min} \geq t+1$ to prove the lemma i.e., we show that there is a set of $\leq t$ linearly dependent columns if the lemma fails to hold.\\
	  		It is enough to show that:
	  		
	  		\begin{itemize}
	  			\item For $1 \leq i \leq \frac{t}{2}-1$, $A_i$ is a matrix with each column having  weight 1.
	  			\item For $0 \leq i \leq \frac{t}{2}-1$, $D_i$ is a matrix with each row having weight 1. 
	  		\end{itemize}
	  		This is because the property that $A_i$ is a matrix with each column having weight 1 combined with the fact that each column of $B_i=\left[\frac{A_i}{D_i}\right]$ has weight 2  implies that $D_i$ is a matrix with each column having weight 1 and $D_0$ by definition is a matrix with each column having weight 1. \\
	  		Let us denote the column of the matrix $H$ with label $j$ by $\underline{h}_j$, $1 \leq j\leq n$. 
	  		
	  		Before proceeding to the proof, we want to note that the rows of $H$ with labels $\{\sum_{\ell=0}^{i-1}\rho_{\ell}+1,\ldots,\sum_{\ell=0}^{i-1}\rho_{\ell}+\rho_i\}$ correspond to the rows of $H$ containing $D_{i},A_{i+1}$ and the columns of $H$ with labels $\{\sum_{\ell=0}^{i-1}a_{\ell}+1,\ldots,\sum_{\ell=0}^{i-1}a_{\ell}+a_i\}$ correspond to the columns of $H$ containing $A_i,D_i$.  Also throughout the proof, support or support set of a vector refers to the set of indices corresponding to non-zero co-ordinates of the vector. For example the vector $[1, 0, 2, 0]$ (or $[1, 0, 2, 0]^T$) has support set equal to $\{1,3\}$ as it has non-zero first and third co-ordinate. A non-trivial linear combination of a set of vectors $\{\underline{v}_1,...,\underline{v}_j\}$ refers to a linear combination of the set of vectors $\sum a_i \underline{v}_i$ with at least one of the coefficients $a_1,a_2,\ldots,a_j$ taking a non-zero value. All linear combinations referred to in the proof are non-trivial. We use these facts throughout the proof. \\

	  		Let us prove the lemma by induction as follows:
	  		
	  		{\em Induction Hypothesis:} 
	  		\begin{itemize}
	  			\item We induct on a variable denoted by $i$.
	  			\item  Property $P_i$: any non-zero $m \times 1$ vector having weight at most 2 with support contained in $\{\sum_{\ell=0}^{i-1}\rho_{\ell}+1,\ldots,\sum_{\ell=0}^{i-1}\rho_{\ell}+\rho_i\}$  can be written as some linear combination of vectors  $\underline{h}_{p_1},\ldots,\underline{h}_{p_\psi}$ for some $\{p_1,p_2,\ldots,p_\psi\} \subseteq \{1,2,\ldots,\sum_{\ell=0}^{i}a_{\ell}\}$ and for some $0 < \psi \leq 2(i+1)$.
	  			\item {\em Let us assume as induction hypothesis that the property $P_i$ is true and the Lemma \ref{claim1} is true for $A_1,\ldots,A_i$, $D_0,\ldots,D_i$.}
	  		\end{itemize}
	  		{\em Initial step of induction corresponding to $i=0$ and $i=1$:}
	  		\begin{itemize}	
	  			\item {\em We now show that each row of $D_0$ has weight exactly $1$.}\\
	  			Suppose there exists a row in $D_0$ with weight more than $1$; let the support set of the row be $\{i_1,i_2,...\}$ in $D_0$. Then the columns $\underline{h}_{i_1},\underline{h}_{i_2}$ of $H$ can be linearly combined to give a zero column and hence $d_{\min} \leq 2$. This contradicts the fact that $d_{\min} \geq t+1, t \geq 2$ as $t$ is even.
	  			Hence, all rows of $D_0$ have weight exactly $1$.
	  			\item If $t = 2$, then the lemma is already proved. So let $t \geq 4$.		
	  			\item {\em We now show that each column of $A_1$ has weight exactly $1$.}\\
	  			Suppose $j^{th}$ column of $A_1$ $\text{for some } 1 \leq j \leq a_1$ has weight 2; let the support of the column be $\{j_1,j_2\}$ in $A_1$. Then the column $\underline{h}_{a_0 + j}$ in $H$ along with the 2 column vectors of $H$ say $\underline{h}_{p_1},\underline{h}_{p_2}$ where $\{p_1,p_2\} \subseteq \{1,\ldots,a_0\}$ where $\underline{h}_{p_1}$ has exactly only one non-zero element in  $j_1^{th}$ co-ordinate and $\underline{h}_{p_2}$ has exactly only one non-zero element $j_2^{th}$ co-ordinate, can be linearly combined to give a zero column again leading to a contradiction on minimum distance. Such columns with one column having only one non-zero element exactly in $j_1^{th}$ co-ordinate and another column having only one non-zero element exactly in $j_2^{th}$ co-ordinate with column labels in $\{1,\ldots,a_0\}$ exist due to the $1$-weight columns in the matrix $D_0$.
	  			\item The above argument also shows that any non-zero $m \times 1$ vector having weight at most 2 with support contained in $\{1,\ldots,\rho_0\}$ can be written as some linear combination of at most 2 column vectors of $H$ say $\underline{h}_{p_1},\ldots,\underline{h}_{p_\psi}$ for some $\{p_1,\ldots,p_\psi\} \subseteq \{1,\ldots,a_0\}$ ($\psi = 1 \text{ or }2$). Hence Property $P_0$ is true.		
	  			\item { \em We now show that each row of $D_1$ has weight exactly $1$.}
	  			Suppose $j^{th}$ row of $D_1$ for some $1 \leq j \leq \rho_1$ has weight more than 1; let the support set of the row be $\{i_1,i_2,\ldots\}$ in $D_1$. Now there is some linear combination of the columns $\underline{h}_{a_0+i_1}$ and $\underline{h}_{a_0+i_2}$ in $H$ that gives a zero in $(\rho_0+j)^{th}$ coordinate and thus this linear combination has support contained in $\{1,\ldots,\rho_0\}$ with weight at most $2$. Now applying Property $P_0$ on this linear combination implies that there is a non-empty set of at most $4$ linearly dependent columns in $H$ leading to a contradiction on minimum distance.		
	  			\item {\em Now we show that Property $P_{1}$ is true.} We have to prove that any non-zero $m \times 1$ vector with weight at most $2$ with support contained in $\{\rho_0+1,\ldots,\rho_0+\rho_1\}$ can be written as linear combination of at most $2(1+1)=4$ column vectors of $H$ say $\underline{h}_{p_1},\ldots,\underline{h}_{p_\psi}$ for some $\{p_1,\ldots,p_\psi\} \subseteq \{1,2,\ldots,\sum_{\ell=0}^{1}a_{\ell}\}$ and $0 < \psi \leq 4$. This can be easily seen using arguments similar to ones presented before. Let $1 \leq j_1,j_2 \leq \rho_{1}$.
	  			Let $\underline{v}$ be an $m \times 1$ vector having non-zero elements exactly in coordinates $\rho_0+j_1,\rho_0+j_2$ or $\rho_0+j_1$. Take $2$ columns $\underline{h}_{y_1},\underline{h}_{y_2}$ having a non-zero element in co-ordinates $\rho_0+j_1,\rho_0+j_2$ respectively with $\{y_1,y_2\} \subseteq \{ a_0+1,\ldots,a_0+a_{1} \}$. These columns $\underline{h}_{y_1},\underline{h}_{y_2}$ exist due to $D_1$. Note that $D_1$ is a matrix with each row and column  having weight exactly $1$. Then the vector $\underline{v}$ can be linearly combined with $\underline{h}_{y_1},\underline{h}_{y_2}$ where $\{y_1,y_2\} \subseteq \{ a_0+1,a_0+2,\ldots,a_0+a_{1} \}$ to form a $m \times 1$ vector with weight at most $2$ with support contained in $\{ 1,2,\ldots,\rho_0 \}$ which in turn can be written as linear combination of at most $2$ column vectors in $H$ say $\underline{h}_{z_1},\ldots,\underline{h}_{z_\theta}$ for some $\{z_1,\ldots,z_\theta\} \subseteq \{1,\ldots,a_0\}$ ($\theta = 1$ or $2$) by property $P_0$. Hence the given $m \times 1$ vector $\underline{v}$ is written as linear combination of at most $2(1+1)=4$ column vectors in $H$: $\underline{h}_{y_1},\underline{h}_{y_2},\underline{h}_{z_1}...,\underline{h}_{z_\theta}$ for some $\{y_1,y_2,z_1,\ldots,z_{\theta}\} \subseteq \{1,2,\ldots,\sum_{\ell=0}^{1}a_{\ell}\}$. Hence property $P_1$ is true.
                
	  		\end{itemize}
	  		{\em Induction step :}
	  		\begin{itemize}
	  			\item For $t=4$, the initial step of induction with $i \in \{0,1\}$ completes the proof of the Lemma \ref{claim1}. Hence we assume $t \geq 6$. Let us assume by induction hypothesis that Property $P_i$ is true and the Lemma \ref{claim1} is true for $A_1,\ldots,A_i$, $D_0,\ldots,D_i$ for some $i \leq \frac{t}{2}-2$ and prove the induction hypothesis for $i+1$. 
	  			\item {\em Now we show that each column of $A_{i+1}$ has weight exactly $1$}. Suppose $j^{th}$ column of $A_{i+1}$ for some $1 \leq j \leq a_{i+1}$ has weight $2$; let the support of the column be $\{j_1,j_2\}$ in $A_{i+1}$. It is clear that the corresponding column vector  $\underline{h}_{\sum\limits_{\ell=0}^{i}a_{\ell}+j}$ in $H$ is a vector with support  $\{\sum_{\ell=0}^{i-1}\rho_{\ell}+j_1,\sum_{\ell=0}^{i-1}\rho_{\ell}+j_2\} \subseteq \{\sum_{\ell=0}^{i-1}\rho_{\ell}+1,\ldots,\sum_{\ell=0}^{i-1}\rho_{\ell}+\rho_i\}$ and hence has weight $2$. Now applying Property $P_i$ on this column vector $\underline{h}_{\sum\limits_{\ell=0}^{i}a_{\ell}+j}$ implies that there is a non-empty set of at most $2(i+1)+1$ columns in $H$ which are linearly dependent; hence contradicts the minimum distance of the code as $2(i+1)+1 \leq t-1$. Hence each column of $A_{i+1}$ has weight exactly $1$.
	  			\item { \em Now we show that each row of $D_{i+1}$ has weight exactly $1$.} Suppose $j^{th}$ row of $D_{i+1}$ for some $1 \leq j \leq \rho_{i+1}$ has weight more than $1$; let the support set of the row be $\{\ell_1,\ell_2,\ldots\}$ in $D_{i+1}$. Now some linear combination of the columns $\underline{h}_{\sum\limits_{j=0}^{i}a_j+ \ell_1}$ and $\underline{h}_{\sum\limits_{j=0}^{i}a_j+ \ell_2}$ in $H$ will have a $0$ in $(\sum\limits_{\ell=0}^{i}\rho_{\ell}+j)^{th}$ coordinate and this linear combination also has weight at most 2 with support contained in $\{\sum_{\ell=0}^{i-1}\rho_{\ell}+1,\ldots,\sum_{\ell=0}^{i-1}\rho_{\ell}+\rho_i\}$ and hence applying Property $P_i$ on this linear combination implies that there is a non-empty set of at most $2(i+1)+2$ columns in $H$ which are linearly dependent; hence contradicts the minimum distance as $2(i+1)+2 \leq t$; thus proving that each row of $D_{i+1}$ has weight exactly 1.
	  			\item {\em Now we show that Property $P_{i+1}$ is true.} We have to prove that any non-zero $m \times 1$ vector with weight at most $2$ with support contained in $\{\sum_{\ell=0}^{i}\rho_{\ell}+1,\ldots,\sum_{\ell=0}^{i}\rho_{\ell}+\rho_{i+1}\}$ can be written as linear combination of at most $2(i+2)$ column vectors of $H$ say $\underline{h}_{p_1},\ldots,\underline{h}_{p_\psi}$ for some $\{p_1,\ldots,p_\psi\} \subseteq \{1,\ldots,\sum_{\ell=0}^{i+1}a_{\ell}\}$ and $0 < \psi \leq 2(i+2)$. This can be easily seen using arguments similar to ones presented before.
	  			
	  			Let $1 \leq j_1,j_2 \leq \rho_{i+1}$. Let $\underline{v}$ be an $m \times 1$ vector having non-zero elements exactly in coordinates $\sum_{\ell=0}^{i}\rho_{\ell}+j_1,\sum_{\ell=0}^{i}\rho_{\ell}+j_2$ or $\sum_{\ell=0}^{i}\rho_{\ell}+j_1$. Take $2$ columns $\underline{h}_{y_1},\underline{h}_{y_2}$ having a non-zero element in co-ordinates $\sum_{\ell=0}^{i}\rho_{\ell}+j_1,\sum_{\ell=0}^{i}\rho_{\ell}+j_2$ respectively with $\{y_1,y_2\} \subseteq \{ \sum_{\ell=0}^{i}a_{\ell}+1,\ldots,\sum_{\ell=0}^{i}a_{\ell}+a_{i+1} \}$. These columns $\underline{h}_{y_1},\underline{h}_{y_2}$ exist due to $D_{i+1}$. Note that $D_{i+1}$ is a matrix with each row and column  having weight exactly $1$. Then the vector $\underline{v}$ can be linearly combined with $\underline{h}_{y_1},\underline{h}_{y_2}$ where $\{y_1,y_2\} \subseteq \{ \sum_{\ell=0}^{i}a_{\ell}+1,\ldots,\sum_{\ell=0}^{i}a_{\ell}+a_{i+1} \}$ to form a $m \times 1$ vector with weight at most $2$ with support contained in $\{ \sum_{\ell=0}^{i-1}\rho_{\ell}+1,\sum_{\ell=0}^{i-1}\rho_{\ell} +\rho_i \}$ which in turn can be written as linear combination of at most $2(i+1)$ column vectors in $H$ say $\underline{h}_{z_1},\ldots,\underline{h}_{z_\theta}$ for some $\{z_1,\ldots,z_\theta\} \subseteq \{1,\ldots,\sum_{\ell=0}^{i}a_{\ell}\}$ and $0 < \theta \leq 2(i+1)$ by property $P_i$. Hence the given $m \times 1$ vector $\underline{v}$ is written as linear combination of at most $2(i+2)$ column vectors in $H$: $\underline{h}_{y_1},\underline{h}_{y_2},\underline{h}_{z_1},...,\underline{h}_{z_\theta}$ for some $\{y_1,y_2,z_1,\ldots,z_{\theta}\} \subseteq \{1,2,\ldots,\sum_{\ell=0}^{i+1}a_{\ell}\}$. Hence property $P_{i+1}$ is true.
	  		\end{itemize}
	  	\end{proof}
	  		  	\paragraph{\bf Step 2: Forming linear inequalities in variables $n,m,\{a_i\},\{\rho_i\}$}
	  	We now form linear inequalities in variables $n,m,\{a_i\},\{\rho_i\}$ based on counting the non-zero entries row-wise and column-wise in various matrices $\{A_i\},\{D_i\},C,H$ using the lemma \ref{claim1} in the process.
	  	By Lemma \ref{claim1}, after permutation of columns of $H$ (in \eqref{Hform} or \eqref{Hform_111} depending on the value of $\ell_2$) within the columns labeled by the set $\{\sum_{\ell=0}^{j-1}a_{\ell}+1,...\sum_{\ell=0}^{j-1}a_{\ell}+a_{j}\}$ for $0 \leq j \leq \ell_2-1$, the matrix $D_j,0 \leq j \leq \ell_2-1$ can be assumed to be a diagonal matrix with non-zero entries along the diagonal and hence $\rho_{i}=a_{i}$, $\forall 0 \leq i \leq \frac{t}{2}-1$ because $\rho_j=a_j=0$ for $j > \ell_2-1$. 
	  	
	  	Since the sum of the column weights of $A_i, 1 \leq i \leq \frac{t}{2}-1$ must equal the sum of the row weights and since each row of $A_i$ for $i \leq \ell_2-1$ can have weight atmost $r$ and not $r+1$ due to $1$-weight rows in $D_{i-1}$, and since for $\ell_2 \leq i \leq \frac{t}{2}-1$, $A_i$ is an empty matrix and we have set $a_i=0$, we obtain:
	  	\bea
	  	\text{For } 1 \leq i \leq \frac{t}{2}-1 : & & \notag \\
	  	\rho_{i-1} r & \geq & a_i, \notag \\
	  	a_{i-1} r & \geq & a_i. \label{Ineq1}
	  	\eea
	  	We also have that,
	  	\bea
	  	\sum_{i=0}^{\frac{t}{2}-1} \rho_i + p = \sum_{i=0}^{\frac{t}{2}-1} a_i + p =m.  \label{Ineq2}
	  	\eea 
	  	By equating sum of row weights of $C$, with sum of column weights of $C$, we obtain:
	  	\bea
	  	2 a_{\frac{t}{2}} & \leq & (\rho_{\frac{t}{2}-1}+p) (r+1)-\rho_{\frac{t}{2}-1} \notag \\
	  	2 a_{\frac{t}{2}} & \leq & (a_{\frac{t}{2}-1}+p) (r+1)-a_{\frac{t}{2}-1}. \label{Ineq3}	    
	  	\eea
	  	Note that if $C$ is an empty matrix then also the inequality \eqref{Ineq3} is true as we would have set $a_{\frac{t}{2}}=0$. If $\ell_2 < \frac{t}{2}$ and $C$ a non-empty matrix then the number of rows in $C$ is $p$ with each column of $C$ having weight 2, hence the inequality \eqref{Ineq3} is still true.\\
	  	Substituting \eqref{Ineq2} in \eqref{Ineq3} we get:
	  	\bea
	  	2 a_{\frac{t}{2}} & \leq & (m-\sum_{i=0}^{\frac{t}{2}-2} a_i) (r+1)-(m-\sum_{i=0}^{\frac{t}{2}-2} a_i-p), \notag \\
	  	2 a_{\frac{t}{2}} & \leq & (m-\sum_{i=0}^{\frac{t}{2}-2} a_i) r+p. \label{Ineq4}
	  	\eea
	  	
	  	By equating sum of row weights of $H$, with sum of column weights of $H$, we obtain:
	  	\bea
	  	m(r+1) & \geq & a_0 + 2(\sum_{i=1}^{\frac{t}{2}} a_i) + 3 (n-\sum_{i=0}^{\frac{t}{2}} a_i), \label{Ineq_star1} 
	  	\eea
	  	If $\ell_2 < \frac{t}{2}$ then $a_i=0$, $\forall \ell_2 \leq i \leq \frac{t}{2}-1$. If $C$ is an empty matrix then $a_{\frac{t}{2}}=0$. Hence the inequality \eqref{Ineq_star1} is true irrespective of whether $\ell_2=\frac{t}{2}$ or $\ell_2 < \frac{t}{2}$ (even if $C$ is an empty matrix).
	  	\bea
	  	\text{From \eqref{Ineq_star1} :} & &  \notag \\
	  	m(r+1) & \geq & 3n-2a_0 -(\sum_{i=1}^{\frac{t}{2}} a_i). \label{Ineq5} 
	  	\eea
	  	
	  	Our basic inequalities are \eqref{Ineq1},\eqref{Ineq2},\eqref{Ineq3},\eqref{Ineq_star1}. We manipulate these 4 inequalities to derive the bound on rate.
	  	\paragraph{\bf Step 3: Algebraic manipulation of the 4 linear inequalities to derive the bound on rate}

	  	Substituting \eqref{Ineq2} in \eqref{Ineq5} we get:
	  	\bea
	  	m(r+1) & \geq & 3n-a_0 -a_{\frac{t}{2}}-(m-p), \notag \\
	  	m(r+2) & \geq & 3n+p-a_0 -a_{\frac{t}{2}}. \label{Ineq6} 
	  	\eea
	  	Substituting \eqref{Ineq4} in \eqref{Ineq6}, we get:
	  	\bea
	  	m(r+2) & \geq & 3n+p-a_0 - \left( \frac{(m-\sum_{i=0}^{\frac{t}{2}-2} a_i) r+p}{2} \right), \notag \\
	  	m(r+2+\frac{r}{2}) & \geq & 3n+\frac{p}{2}-a_0 + \left(\sum_{i=0}^{\frac{t}{2}-2} a_i \right) \frac{r}{2}. \label{Ineq7}
	  	\eea
	  	
	  	From \eqref{Ineq2}, for any $0 \leq j \leq \frac{t}{2}-2$:
	  	\bea
	  	a_{\frac{t}{2}-2-j} = m-\sum_{i=0}^{\frac{t}{2}-2-j-1} a_i - \sum_{i=\frac{t}{2}-2-j+1}^{\frac{t}{2}-1} a_i - p. \label{Ineq10}
	  	\eea
	  	Subtituting \eqref{Ineq1} for $\frac{t}{2}-2-j+1 \leq i \leq \frac{t}{2}-1$ in \eqref{Ineq10}, we get: 
	  	\bea
	  	a_{\frac{t}{2}-2-j} & \geq & m-\sum_{i=0}^{\frac{t}{2}-2-j-1} a_i - \sum_{i=\frac{t}{2}-2-j+1}^{\frac{t}{2}-1} a_{\frac{t}{2}-2-j} r^{i-(\frac{t}{2}-2-j)} - p, \notag
	  	\eea
	  	\bea
	  	a_{\frac{t}{2}-2-j} & \geq & m-\sum_{i=0}^{\frac{t}{2}-2-j-1} a_i - \sum_{i=1}^{j+1} a_{\frac{t}{2}-2-j} r^{i} - p, \notag
	  	\eea
	  	\bea
	  	a_{\frac{t}{2}-2-j} & \geq & \frac{m-\sum_{i=0}^{\frac{t}{2}-2-j-1} a_i-p}{ 1+ \sum_{i=1}^{j+1} r^{i} }. \label{Ineq11}
	  	\eea
	  	Let,
	  	\bea
	  	\delta_0 & = & \frac{r}{2}, \label{Ineq12} \\
	  	\text{For $j_1 \geq 0$: }\delta_{j_1+1} & = & \delta_{j_1} - \frac{\delta_{j_1}}{1+\sum_{i=1}^{j_1+1}r^i}. \label{Ineq13}
	  	\eea
	  	Let us prove the following inequality by induction for $0 \leq J_1 \leq \frac{t}{2}-2$,
	  	\bea
	  	m(r+2+\delta_{J_1}) & \geq & 3n + p\left( \frac{1}{2}+\delta_{J_1} - \frac{r}{2}\right)-a_0  + \left(\sum_{i=0}^{\frac{t}{2}-2-J_1} a_i \right) \delta_{J_1}. \label{Ineq9}
	  	\eea
	  	\eqref{Ineq9} is true for $J_1=0$ by \eqref{Ineq7}. Hence \eqref{Ineq9} is proved for $t=4$ and the range of $J_1$ is vacuous for $t=2$. Hence assume $t>4$. Hence let us assume \eqref{Ineq9} is true for $ J_1$ such that $ \frac{t}{2}-3 \geq J_1 \geq 0$ and prove it for $J_1+1$. 
	  	Substituting \eqref{Ineq11} for $j=J_1$ in \eqref{Ineq9}, we get:
	  	\bea
	  	m(r+2+\delta_{J_1}) & \geq & 3n + p\left( \frac{1}{2}+\delta_{J_1} - \frac{r}{2}\right)-a_0 \\
	  	& +& \left(\sum_{i=0}^{\frac{t}{2}-2-J_1-1} a_i \right) \delta_{J_1} + \left( \frac{m-\sum_{i=0}^{\frac{t}{2}-2-J_1-1} a_i-p}{ 1+ \sum_{i=1}^{J_1+1} r^{i} } \right) \delta_{J_1}, \notag \\
	  	m \left (r+2+\delta_{J_1}-\frac{\delta_{J_1}}{1+ \sum_{i=1}^{J_1+1} r^{i} } \right) & \geq & 
	  	3n + p\left( \frac{1}{2}+\delta_{J_1} -\frac{\delta_{J_1}}{1+ \sum_{i=1}^{J_1+1} r^{i} }- \frac{r}{2}\right)-a_0  + \notag \\ 
	  	& & \left(\sum_{i=0}^{\frac{t}{2}-2-J_1-1} a_i \right) \left(\delta_{J_1} - \frac{\delta_{J_1}}{1+ \sum_{i=1}^{J_1+1} r^{i} }\right). \label{Ineq141}
	  	\eea
	  	Substituing \eqref{Ineq13} in \eqref{Ineq141}, we obtain
	  	\bea
	  	m(r+2+\delta_{J_1+1}) & \geq & 
	  	3n + p\left( \frac{1}{2}+\delta_{J_1+1}- \frac{r}{2}\right)-a_0  + \left(\sum_{i=0}^{\frac{t}{2}-2-J_1-1} a_i \right) \delta_{J_1+1}. \label{Ineq14}
	  	\eea
	  	Hence \eqref{Ineq9} is proved  for any $0 \leq J_1 \leq \frac{t}{2}-2$ for $t \geq 4$. Hence writing \eqref{Ineq9} for $J_1 = \frac{t}{2}-2$ for $t\geq 4$, we obtain:
	  	\bea
	  	m(r+2+\delta_{\frac{t}{2}-2}) & \geq & 3n + p\left( \frac{1}{2}+\delta_{\frac{t}{2}-2} - \frac{r}{2}\right)-a_0  + (a_0) \delta_{\frac{t}{2}-2}, \notag \\
	  	m(r+2+\delta_{\frac{t}{2}-2}) & \geq & 3n + p\left( \frac{1}{2}+\delta_{\frac{t}{2}-2} - \frac{r}{2}\right)+ a_0 (\delta_{\frac{t}{2}-2}-1). \label{Ineq15} 
	  	\eea
	  	It can be seen that $\delta_{j_1}$ for $r \geq 2$ has a product form as:
	  	\bea
	  	\delta_{j_1} = \frac{r}{2}\left(\frac{r^{j_1+1}-r^{j_1}}{r^{j_1+1}-1}\right). \label{Ineq16}
	  	\eea
	  	Hence for $r \geq 3$, $t \geq 4$:
	  	\bean
	  	\delta_{\frac{t}{2}-2}=\frac{r}{2}\left(\frac{r^{\frac{t}{2}-1}-r^{\frac{t}{2}-2}}{r^{\frac{t}{2}-1}-1}\right) > 1.
	  	\eean
	  	Hence we can substitute \eqref{Ineq11} for $j=\frac{t}{2}-2$ in \eqref{Ineq15} :
	  	\bea
	  	m(r+2+\delta_{\frac{t}{2}-2}) & \geq & 3n + p\left( \frac{1}{2}+\delta_{\frac{t}{2}-2} - \frac{r}{2}\right) \\ 
	  	&+& \left( \frac{m-p}{1+\sum_{i=1}^{\frac{t}{2}-1}r^i} \right) (\delta_{\frac{t}{2}-2}-1), \notag \\
	  	\scalemath{0.7}{m \left(r+2+\delta_{\frac{t}{2}-2}-\frac{\delta_{\frac{t}{2}-2}}{1+\sum_{i=1}^{\frac{t}{2}-1}r^i}+\frac{1}{1+\sum_{i=1}^{\frac{t}{2}-1}r^i} \right) }
	  	& \geq & \scalemath{0.7}{3n +  p\left( \frac{1}{2}+\delta_{\frac{t}{2}-2}-\frac{\delta_{\frac{t}{2}-2}}{1+\sum_{i=1}^{\frac{t}{2}-1}r^i} + \frac{1}{1+\sum_{i=1}^{\frac{t}{2}-1}r^i}- \frac{r}{2}\right).} \label{Ineq17}
	  	\eea
	  	Substituting \eqref{Ineq13} in \eqref{Ineq17}, we obtain:
	  	\bea
	  	m\left(r+2+\delta_{\frac{t}{2}-1}+ \frac{1}{1+\sum_{i=1}^{\frac{t}{2}-1}r^i}\right) & \geq & 3n + p\left( \frac{1}{2}+\delta_{\frac{t}{2}-1}+ \frac{1}{1+\sum_{i=1}^{\frac{t}{2}-1}r^i}- \frac{r}{2}\right). \label{Ineq18}
	  	\eea
	  	Using \eqref{Ineq16}, we obtain:
	  	\bean
	  	\left( \frac{1}{2}+\delta_{\frac{t}{2}-1}+ \frac{1}{1+\sum_{i=1}^{\frac{t}{2}-1}r^i}- \frac{r}{2}\right) > 0.
	  	\eean
	  	Hence \eqref{Ineq18} implies:
	  	\bea
	  	m\left(r+2+\delta_{\frac{t}{2}-1}+ \frac{1}{1+\sum_{i=1}^{\frac{t}{2}-1}r^i}\right) & \geq & 3n.  \label{Ineq19}
	  	\eea
	  	\eqref{Ineq19} after some algebraic manipulations gives the required upper bound on $1-\frac{m}{n}$ and hence gives the required upper bound on $\frac{k}{n}$ as stated in the theorem. Note that although the derivation is valid for $r\geq 3$, $t \geq 4$, the final bound given in the theorem is correct and tight for $t =2$. The upper bound on rate for $t=2$ can be derived specifically by substituting $a_0 \leq m$ in \eqref{Ineq7} and noting that $p \geq 0$. Note that the Appendix \ref{app:linprog} gives another approach for manipulating the inequalities to derive the same upper bound on code rate.

	  \paragraph{\bf Conditions for equality in \eqref{Thm1}}
	  	 Note that for achieving the upper bound on rate given in \eqref{Thm1}, a seq-LRC must have a parity check matrix $H$ (upto a permutation of columns) of the form given in \eqref{Hform} with parameters such that the inequalities given in \eqref{Ineq1},\eqref{Ineq2},\eqref{Ineq3},\eqref{Ineq_star1} become equalities with $p=0$ and $D$ must be an empty matrix i.e., no columns of weight $\geq 3$ (because once all these inequalities become equalities with $p=0$, the sub matrix of $H$ obtained by restricting $H$ to the columns with weights 1,2 will have each row of weight exactly $r+1$ and hence no non-zero entry can occur outside the columns having weights 1,2 for achieving the upper bound on rate). Hence it can be seen that an $(n,k,r,t)_{\text{seq}}$ code achieving the upper bound on rate \eqref{Thm1} must have a parity check matrix (upto a permutation of columns) of the form given in \eqref{eq:Hmatrixteven_ch3_App}.
          \bea
         H = \left[
         \begin{array}{c|c|c|c|c|c|c|c}
         	D_0 & A_1 & 0 & 0 & \hdots & 0 & 0 & 0  \\
         	\cline{1-8}
         	0 & D_1 & A_2 & 0 & \hdots & 0 & 0 & 0 \\
         	\cline{1-8}
         	0 & 0 & D_2 & A_3 & \hdots & 0 & 0 & 0  \\
         	\cline{1-8}
         	0 & 0 & 0 & D_3 & \hdots & 0 & 0 & 0 \\
         	\cline{1-8}
         	\vdots & \vdots & \vdots & \vdots & \hdots & \vdots & \vdots & \vdots \\
         	\cline{1-8}
         	0 & 0 & 0 & 0 & \hdots & A_{s-1} & 0 & 0  \\
         	\cline{1-8}
         	0 & 0 & 0 & 0 & \hdots & D_{s-1} & A_{s} & 0 \\
         	\cline{1-8}
         	0 & 0 & 0 & 0 & \hdots & 0 & D_{s} & C \\
         \end{array} 
         \right],  \label{eq:Hmatrixteven_ch3_App}
         \eea
	\subsubsection{\bf Case (ii) $t$ odd}  
	The following arguments are similar to $t$ even case. The proof for $t$ odd case deviates from $t$ even case starting from Lemma \ref{claim2} discussed in the following a little later. The difference in proof for $t$ odd case compared to $t$ even case is reflected in the subtle difference between the statements of Lemma \ref{claim1} for $t$ even case and  Lemma \ref{claim2} for $t$ odd case. Intuitively this happens because in p-c matrix we have the matrices $D_0,A_1,D_1,\ldots,A_s,D_s$ arranged in staircase form and we can have $2(s+1)$ linearly dependent vectors in p-c matrix in case of $t$ odd as $t=2s+1$ in case of $t$ odd and we cannot have $2(s+1)$ linearly dependent vectors for the case of $t$ even as $t=2s+2$ in case of $t$ even. Hence in $t$ odd case, if we follow the arguments same as in Lemma \ref{claim1}, we do not have the constraint that $D_s$ has each row of weight $1$. Hence in $t$ odd case, we have $D_s$ with constraint only on its column weights. If the reader is comfortable with the treatment of general form p-c matrix given in $t$ even case, the reader may directly start reading from Lemma \ref{claim2} in the following. For the sake of clarity, we repeat the arguments as follows until Lemma \ref{claim2}.
	Recall that the p-c matrix $H$ is of the form $H={[{\underline{c}_1} \  {\underline{c}_2} \ldots {\underline{c}_m}]}^T$ where ${\underline{c}_1,\underline{c}_2,\ldots,\underline{c}_m}$ are $m$ linearly independent codewords with $w_H(\underline{c}_i) \leq r+1$, $\forall i \in [m]$. 	
	  	\paragraph{\bf Step 1: Deducing the general form of p-c matrix}
	  	 Again by permuting rows and columns of $H$, the matrix can be brought into the form:
	  	\bea
	  	H = \left[
	  	\scalemath{1}{
	  		\begin{array}{c|c|c|c|c|c|c|c|c}
	  			D_0 & A_1 & 0 & 0 & \hdots & 0 & 0 & 0 & \\
	  			\cline{1-8}
	  			0 & D_1 & A_2 & 0 & \hdots & 0 & 0 & 0 & \\
	  			\cline{1-8}
	  			0 & 0 & D_2 & A_3 & \hdots & 0 & 0 & 0 & \\
	  			\cline{1-8}
	  			0 & 0 & 0 & D_3 & \hdots & 0 & 0 & 0 &  \\
	  			\cline{1-8}
	  			\vdots & \vdots & \vdots & \vdots & \hdots & \vdots & \vdots & \vdots & D\\
	  			\cline{1-8}
	  			0 & 0 & 0 & 0 & \hdots & A_{s-1} & 0 & 0 & \\
	  			\cline{1-8}
	  			0 & 0 & 0 & 0 & \hdots & D_{s-1} & A_{s} & 0 & \\
	  			\cline{1-8}
	  			0 & 0 & 0 & 0 & \hdots & 0 & D_{s} &  & \\
	  			\cline{1-7}
	  			0 & 0 & 0 & 0 & \hdots & 0 & 0 & C & \\
	  		\end{array} 
	  	}
	  	\right], \ \ \label{Hform2}
	  	\eea
	  	%
	  	where the following conditions hold. 
	  	\begin{enumerate}[(i)]
	  		\item The matrix $D_0$ is a $(\rho_0 \times a_0)$ matrix for some integers $\rho_0,a_0$. In the matrix $D_0$, each column has weight $1$ and each row has weight at least $1$. The first $a_0$ columns of $H$ contains the columns of $D_0$. The set of first $a_0$ columns of $H$ is equal to the set of all those columns of $H$ which has weight $1$,
	  		\item $A_i$ is a $(\rho_{i-1} \times a_i)$ matrix for $1 \leq i \leq s$, $D_i$ is a $(\rho_{i} \times a_i)$ matrix for $0 \leq i \leq s$ for some integers $\{\rho_j\},\{a_j\}$.  An important point to note here is that we do not insist that all the integers $\{\rho_j\},\{a_j\}$ be non-zero.  This needs some explanation.  We interpret a matrix of size $(u \times v)$ where either $u$ or $v$ or both are zero as simply representing a vacuous or empty matrix, i.e., a matrix that is not present.   We present in the figure below an example to illustrate this point. 
	  		\begin{figure}[ht!]
	  			\centering
	  			\begin{minipage}[c]{0.48\textwidth}
	  				\centering
	  				\bean
	  				H = \left[
	  				\begin{array}{c|c|c|c|c|c}
	  					D_0 & A_1 & 0 & 0 & 0 & \\
	  					\cline{1-5}
	  					0 & D_1 & A_2 & 0 & 0 & \\
	  					\cline{1-5}
	  					0 & 0 & D_2 & A_3 & 0 & D \\
	  					\cline{1-5}
	  					0 & 0 & 0 & D_3 &  & \\
	  					\cline{1-4}
	  					0 & 0 & 0 & 0 & C & \\
	  				\end{array} 
	  				\right],  
	  				\eean
	  			\end{minipage}
	  			\hspace{-0.1\textwidth}
	  			\begin{minipage}[c]{0.48\textwidth}
	  				\bean
	  				H = \left[
	  				\begin{array}{c|c|c|c|c}
	  					D_0 & A_1  & 0 & 0 & \\
	  					\cline{1-4}
	  					0 & D_1  & 0 & 0 & \\
	  					\cline{1-4}
	  					0 & 0  & A_3 & 0 & D \\
	  					\cline{1-4}
	  					0 & 0  & D_3 &  & \\
	  					\cline{1-3}
	  					0 & 0 & 0 & C & \\
	  				\end{array} 
	  				\right].
	  				\eean
	  			\end{minipage}
	  			\caption{Giving an example of form of p-c matrix given in \eqref{Hform2} when there are empty matrices among $\{A_i\},\{D_i\}$: We illustrate it for the case $t=7$. In the left is the p-c matrix when all of $a_0,a_1,a_2,a_3,\rho_0,\rho_1,\rho_2,\rho_3$ are non-zero i.e., $D_0,A_1,D_1,A_2,D_2,A_3,D_3$ are non-empty matrices. In the right is the p-c matrix when $a_0,a_1,a_3,\rho_0,\rho_1,\rho_2,\rho_3$ are non-zero but $a_2=0$ i.e., $D_0,A_1,D_1,A_3,D_3$ are non-empty matrices but $A_2,D_2$ are empty matrices. In both left and right, we assume both $C,D$ are non-empty matrices.}
	  			\label{fig:Genformpcmatrix1}
	  		\end{figure}
	  		
	  		\item  For $1\leq i \leq s$, when the matrices $A_i$,$D_i$, are not empty, then they are such that, each column of the concatenated matrix $B_i \triangleq \left[\frac{A_i}{D_i}\right]$ has weight 2, each column of $A_i$ has weight at least $1$ and each row of $D_i$ has weight at least $1$ and each column of $D_i$ has weight at most $1$.  
	  		\item The matrix $C$ is a $((\rho_{s}+p) \times a_{s+1})$ matrix for some integers $p,a_{s+1}$. Here also we do not insist that both $p,a_{s+1}$ are non-zero. When the matrix $C$ is not empty i.e., when $a_{s+1} \neq 0$, it is a matrix with each column having weight $2$. The set of columns of the matrix $D$ is precisely the set of all columns of $H$ having weight $\geq 3$.
	  	\end{enumerate}
	  	\paragraph{\bf Handling the case when any of matrices $A_i,D_i,C$ is empty} We now consider cases when any of the above matrices $\{A_j\},\{D_j\},C$ is empty.  
	  	Define:
	  	\bean
	  	\ell_1 &=& \min{\{\{j: 1 \leq j \leq s, A_j \text{ is an empty matrix}, D_j \text{ is not an empty matrix}\} \cup \{s+1 \} \}}, \\
	  	\ell_2 &=& \min{\{\{j: 1 \leq j \leq s, A_j,D_j \text{ are empty matrices}\} \cup \{s+1 \} \}}, \\
	  	\ell_3 &=& \min{\{\{j: 1 \leq j \leq s, A_j \text{ is not an empty matrix}, D_j \text{ is an empty matrix}\} \cup \{s+1 \} \}}.
	  	\eean
	  	We will now show that the case $A_j$ is an empty matrix and $D_j$ is not an empty matrix or the case $\ell_3 < \ell_2$ cannot occur. If the case $A_j$ is an empty matrix and $D_j$ is an empty matrix occur we will redefine our matrix $H$ by redefining the matrix $C$.
	  	\bit
	  	\item If $A_j$ is an empty matrix for some $j$ then $D_j$ is also an empty matrix because each column of $[\frac{A_j}{D_j}]$ has weight $2$ and $D_j$ has the constraint that each of its column has weight at most $1$. Hence the case $A_j$ an empty matrix and $D_j$ a non-empty matrix cannot occur for any $j$.
	  	\item If $A_j$ is not an empty matrix and $D_j$ is an empty matrix for some $j$ then $\rho_j=0$ and hence $A_{j+1}$ is also an empty matrix and by previous point, $D_{j+1}$ is also an empty matrix. Hence $\ell_2 \leq \ell_3+1$ (This inequality holds true obviously when $\ell_3=s$).
	  	\item We now handle the case when $A_j$ is not an empty matrix and $D_j$ is an empty matrix for some $j$ and $\ell_2 > \ell_3$ i.e., $\ell_2=\ell_3+1$. As $\ell_2 > \ell_3$, the proof of the Lemma \ref{claim2} mentioned following this discussion (since the proof of Lemma \ref{claim2} proceeds by induction starting with the proof of the lemma for $D_0$ first and then proceeding to $A_1$ and then $D_1$ and so on and hence we prove the lemma for $A_{\ell_3}$ first and then proceed to $D_{\ell_3}$ and since $D_0,A_i,D_i$ must be non-empty matrices $\forall 1 \leq i \leq \ell_3-1$) will imply that each column of $A_{\ell_3}$ has weight 1 which will imply that $D_{\ell_3}$ cannot be an empty matrix as each column of $[\frac{A_{\ell_3}}{D_{\ell_3}}]$ has weight $2$. Hence the case $A_{\ell_3}$, a non-empty matrix and $D_{\ell_3}$, an empty matrix cannot occur with $\ell_2 > \ell_3$.
	  	\item We now handle the case when $A_j,D_j$ are empty matrices for some $j$ and $\ell_2 < \ell_3$ or the case when $D_0$ is an empty matrix. Note that we can assume $\ell_2 < \ell_3$ as we have already proved that $\ell_2 > \ell_3$ cannot occur. We set $\ell_2=0$, if $D_0$ is an empty matrix. We now redefine $H$ based on the value of $\ell_2$.
	  	We set $A_i$, $D_i$ to be empty matrices and set $a_i=0, \rho_i=0$, $\forall \ell_2 \leq i \leq s$. 
	  	Let $E_2  \subseteq \{\sum_{i=0}^{\ell_2-1}a_i+1,\ldots,n-1,n\}$ such that $E_2$ is the set of labels of all the 2-weight columns of $H$ apart from those 2-weight columns of $H$ containing the columns of $B_1,B_2,\ldots,B_{\ell_2-1}$. By a $2$-weight column, we refer to a column of weight $2$. Let $E_1 = \{\sum_{i=0}^{\ell_2-1}{\rho_i}+1,\ldots,m-1,m\}$. If $\ell_2=s+1$ then $H$ is defined by \eqref{Hform2} with none of the matrices among $\{A_i\},\{D_i\}$ being empty. If $\ell_2 < s+1$, we redefine $C=H|_{E_1,E_2}$.   If $\ell_2 < s+1$, the matrix $H$ can be written in the form given in \eqref{Hform_11} and hence defined by \eqref{Hform_11}.
	  	\bea
	  	H = \left[
	  	\scalemath{1}{
	  		\begin{array}{c|c|c|c|c|c|c|c|c}
	  			D_0 & A_1 & 0 & 0 & \hdots & 0 & 0 & 0 & \\
	  			\cline{1-8}
	  			0 & D_1 & A_2 & 0 & \hdots & 0 & 0 & 0 & \\
	  			\cline{1-8}
	  			0 & 0 & D_2 & A_3 & \hdots & 0 & 0 & 0 & \\
	  			\cline{1-8}
	  			0 & 0 & 0 & D_3 & \hdots & 0 & 0 & 0 &  \\
	  			\cline{1-8}
	  			\vdots & \vdots & \vdots & \vdots & \hdots & \vdots & \vdots & \vdots & D\\
	  			\cline{1-8}
	  			0 & 0 & 0 & 0 & \hdots & A_{\ell_2-2} & 0 & 0 & \\
	  			\cline{1-8}
	  			0 & 0 & 0 & 0 & \hdots & D_{\ell_2-2} & A_{\ell_2-1} & 0 & \\
	  			\cline{1-8}
	  			0 & 0 & 0 & 0 & \hdots & 0 & D_{\ell_2-1} & 0 & \\
	  			\cline{1-8}
	  			0 & 0 & 0 & 0 & \hdots & 0 & 0 & C & \\
	  		\end{array} 
	  	}
	  	\right]. \ \ \label{Hform_11}
	  	\eea
	  	
	  	Note that none of the matrices $D_0,A_1,D_1,A_2,D_2,\ldots,A_{\ell_2-1},D_{\ell_2-1}$ are empty matrices. Irrespective of the value of $\ell_2$, let the number of columns in $C$ be denoted as $a_{s+1}$ and the number of rows of $C$ be denoted as $\rho_{s}+p$. Note that $\rho_{s}=0$, if $D_{s}$ is an empty matrix because if $A_{s}$ is also an empty matrix then clearly we have set $\rho_{s}=0$ and the case $A_{s}$ is not an empty matrix and $D_{s}$ is an empty matrix cannot occur at this point. If $C$ is an empty matrix then we can clearly set $a_{s+1}=0$.
	  	The entire derivation of upper bound on rate is correct and all the inequalities in the derivation will hold with $a_i=0$, $\rho_i=0$, $\forall \ell_2 \leq i \leq s$ and $a_{s+1}=0$ (if $C$ is an empty matrix).
	  	\eit
	  	Although we have to prove the following Lemma \ref{claim1} for $A_i,D_i$, $\forall 1 \leq i \leq \ell_2-1$, $D_0$, we assume all $D_0,$ $A_i,D_i$, $\forall 1 \leq i \leq s$ to be non-empty matrices and prove the lemma. Since the proof of the lemma is by induction, the induction can be made to stop after proving the lemma for $A_{\ell_2-1},D_{\ell_2-1}$ (The proof of the lemma is by induction and induction starts by proving the lemma for $D_0$ and proceeds to $A_1$ and to $D_1$ and so on as mentioned before) and the proof is unaffected by it.
	  	\paragraph{\bf Deducing the structure of p-c matrix further}
	  	\begin{lem}\label{claim2}
	  		
	  		For $1\leq i \leq s$, $A_i$ is a matrix with each column having weight 1.
	  		For $0 \leq i \leq s-1$, $D_i$ is a matrix with each row and each column having weight 1. $D_{s}$ is a matrix with each column having weight 1.

	  	\end{lem}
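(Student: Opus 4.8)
The plan is to prove Lemma~\ref{claim2} by exactly the inductive scheme used to establish Lemma~\ref{claim1} for $t$ even, the one essential difference being a change of parity: here $t=2s+1$, so the hypothesis $d_{\min}\geq t+1$ now forbids only sets of at most $t=2s+1$ linearly dependent columns of $H$ (rather than $2s+2$). As in the $t$ even case, I would first reduce the statement to proving (a) that for $1\leq i\leq s$ every column of $A_i$ has weight $1$, and (b) that for $0\leq i\leq s-1$ every row of $D_i$ has weight $1$; this suffices because, given (a), the assumption that every column of $B_i=\left[\frac{A_i}{D_i}\right]$ has weight $2$ forces every column of $D_i$ --- including $D_s$ --- to have weight exactly $1$, while $D_0$ has unit column weight by definition. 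The bookkeeping for empty matrices (the quantities $\ell_1,\ell_2,\ell_3$ and the possible redefinition of $C$) is identical to the $t$ even case and lets us assume throughout the proof that $D_0,A_1,D_1,\dots,A_s,D_s$ are all non-empty, the induction being simply truncated once some block becomes empty.

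For the induction I would carry the same auxiliary \textbf{Property $P_i$} as in Lemma~\ref{claim1}: every non-zero $m\times 1$ vector of Hamming weight at most $2$ whose support is contained in the set of row indices of the sub-matrix $[D_i\ A_{i+1}]$, namely $\{\sum_{\ell=0}^{i-1}\rho_\ell+1,\dots,\sum_{\ell=0}^{i-1}\rho_\ell+\rho_i\}$, can be written as a linear combination of at most $2(i+1)$ columns of $H$ drawn from the first $\sum_{\ell=0}^{i}a_\ell$ columns. The base cases ($D_0$ has unit row weight; $A_1$ has unit column weight; $P_0$; $D_1$ has unit row weight; $P_1$) and the induction step from $i$ to $i+1$ go through verbatim as in the proof of Lemma~\ref{claim1}: a row of $D_0$ of weight $\geq 2$ gives two dependent columns; a column of $A_{i+1}$ of weight $2$, fed through $P_i$, gives at most $2(i+1)+1$ dependent columns; a row of $D_{i+1}$ of weight $\geq 2$, fed through $P_i$, gives at most $2(i+1)+2$ dependent columns; and $P_{i+1}$ is re-established by appending two suitable columns of $B_{i+1}$ (which exist because every row and column of $D_{i+1}$ has weight exactly $1$) and invoking $P_i$. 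The only arithmetic to verify is that these dependency counts stay $\leq t=2s+1$: the $A$-argument needs $2(i+1)+1\leq 2s+1$, i.e.\ $i+1\leq s$, whereas the $D$-argument and the re-establishment of $P_{i+1}$ need $2(i+1)+2\leq 2s+1$, i.e.\ $i+1\leq s-1$. Consequently the induction runs cleanly up through $i=s-1$, yielding unit column weight for $A_1,\dots,A_{s-1}$, unit row and column weight for $D_0,\dots,D_{s-1}$, and the validity of $P_0,\dots,P_{s-1}$.

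The final step is where the odd case genuinely diverges from the even case, and it is the point I expect to require the most care. Applying $P_{s-1}$ to a hypothetical weight-$2$ column of $A_s$ (which, as a column of $H$, is supported inside the row-index set of $[D_{s-1}\ A_s]$) produces a set of at most $2s+1=t$ linearly dependent columns of $H$, contradicting $d_{\min}\geq t+1$; hence every column of $A_s$ has weight $1$, giving (a) for $i=s$. However, the parallel attempt to force a row of $D_s$ to have weight $1$ --- cancel a repeated $D_s$-entry between two columns of $B_s$ to obtain a weight-$\leq 2$ vector supported in the $D_{s-1}$ row block, then invoke $P_{s-1}$ --- yields at most $2+2s=2s+2=t+1$ dependent columns, which is \emph{not} excluded by $d_{\min}\geq t+1$. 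This one extra unit is precisely why $D_s$ ends up constrained only through its columns and not its rows, and it is also why no Property $P_s$ is needed: both $A_s$ and the columns of $D_s$ touch only the row block of $[D_{s-1}\ A_s]$, and Lemma~\ref{claim2} makes no claim about the block $C$ lying below $D_s$. Combining (a) for $i=s$ with the weight-$2$ property of the columns of $B_s$ then gives that every column of $D_s$ has weight exactly $1$, completing the proof. The main obstacle is thus not a new idea but the careful tracking of dependency counts at the boundary $i=s$, where the parity of $t$ separates a genuine contradiction ($t$ columns) from a harmless dependency ($t+1$ columns).
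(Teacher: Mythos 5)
Your proposal is correct and follows essentially the same route as the paper: the paper's proof of Lemma~\ref{claim2} is precisely "repeat the induction of Lemma~\ref{claim1} ($D_0$, then $A_1$, $D_1$, \dots) up to and including $A_s$, then stop," with unit column weight of $D_s$ following from the weight-$2$ columns of $B_s$, and the paper likewise attributes the absence of a row-weight claim for $D_s$ to the parity of $t$ allowing a $(t+1)$-column dependency. Your explicit bookkeeping of the dependency counts ($2(i+1)+1\leq t$ for the $A$-step, $2(i+1)+2\leq t$ for the $D$-step, hence the induction halts exactly at $P_{s-1}$ and the $A_s$ argument yields $2s+1=t$ dependent columns) is just a more detailed rendering of what the paper leaves implicit.
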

	  	\begin{proof}
	  		Proof is exactly same as the proof of Lemma \ref{claim1} and proceeds by induction. We use exactly same arguments as the proof of Lemma \ref{claim1} until proving the Lemma \ref{claim2} by induction for $D_0$ and then for $A_1$ and then for $D_1$ and so on until we prove the lemma for $A_s$. Note that the property that $A_s$ has each column of weight $1$ implies $D_s$ has each column of weight $1$ and we stop the proof at this point as the proof of Lemma \ref{claim2} is done. Since it is exact repitition of arguments of Lemma \ref{claim1} (except that we do not prove that $D_s$ has each row of weight $1$ at the end of the proof. The intuitive reason why we cannot do this is explained at the beginning of the proof for the current $t$ odd case), we skip the proof.
	  	\end{proof}
	  	Note that here in $t$ odd case, $D_s$ need not have the property that each of its rows are of weight $1$ but $D_s$ had the property that each of its rows are of weight $1$ in $t$ even case. This is the point where the proof deviates from the $t$ even case.
	  	
	  		\paragraph{\bf Step 2: Forming linear inequalities in variables $n,m,\{a_i\},\{\rho_i\}$}
	  		  	We now form linear inequalities in variables $n,m,\{a_i\},\{\rho_i\}$ based on counting the non-zero entries row-wise and column-wise in various matrices $\{A_i\},\{D_i\},C,H$ using the lemma \ref{claim2} in the process. By Lemma \ref{claim2}, after permutation of columns of $H$ (in \eqref{Hform2} or \eqref{Hform_11} depending on the value of $\ell_2$)  within the columns labeled by the set $\{\sum_{\ell=0}^{j-1}a_{\ell}+1,\ldots,\sum_{\ell=0}^{j-1}a_{\ell}+a_{j}\}$ for $0 \leq j \leq \min(\ell_2-1,s-1)$, the matrix $D_j,0 \leq j \leq \min(\ell_2-1,s-1)$ can be assumed to be a diagonal matrix with non-zero entries along the diagonal and hence $\rho_{i}=a_{i}$, for $0 \leq i \leq s-1$ because $\rho_{i}=a_{i}=0$ for $i > \min(\ell_2-1,s-1)$. 
	  	
	  	Since the sum of the column weights of $A_i, 1 \leq i \leq s$ must equal the sum of the row weights and since each row of $A_i$ for $i \leq \ell_2-1$ can have weight atmost $r$ and not $r+1$ due to weight one rows in $D_{i-1}$, and since for $\ell_2 \leq i \leq s$, $A_i$ is an empty matrix and we have set $a_i=0$, we obtain:
	  	\bea
	  	\text{For } 1 \leq i \leq s : & & \notag \\
	  	\rho_{i-1} r & \geq & a_i, \notag \\
	  	a_{i-1} r & \geq & a_i. \label{Ineq1O}
	  	\eea
	  	We also have that,
	  	\bea
	  	\rho_{s}+\sum_{i=0}^{s-1} \rho_i + p = \rho_{s}+\sum_{i=0}^{s-1} a_i + p =m.  \label{Ineq2O}
	  	\eea 
	  	By equating sum of row weights of $[\frac{D_{s}}{0} | C]$, with sum of column weights of $[\frac{D_{s}}{0} | C]$, we obtain:
	  	\bea
	  	2 a_{s+1}+a_{s} & \leq & (\rho_{s}+p) (r+1). \label{Ineq3O}	    
	  	\eea
	  	If $\ell_2 \leq s$ then the number of rows in $C$ is $p$ with each column of $C$ having weight 2 and $a_{s}=\rho_{s}=0$ (and $a_{s+1}=0$ if $C$ is also an empty matrix), hence the inequality \eqref{Ineq3O} is true. If $\ell_2=s+1$ and $C$ is an empty matrix then also the inequality \eqref{Ineq3O} is true as we would have set $a_{s+1}=0$.
	  	
	  	Substituting \eqref{Ineq2O} in \eqref{Ineq3O}:
	  	\bea
	  	2 a_{s+1} & \leq & (m-\sum_{i=0}^{s-1} a_i) (r+1)-a_{s}. \label{Ineq4O}
	  	\eea
	  	By equating sum of row weights of $D_{s}$, with sum of column weights of $D_{s}$, we obtain (Note that if $D_{s}$ is an empty matrix then also the following inequality is true as we would have set $a_{s}=0$):
	  	\bea
	  	a_{s} \leq \rho_{s}(r+1). \label{OddSpe}
	  	\eea
	  	By equating sum of row weights of $H$, with sum of column weights of $H$, we obtain
	  	\bea
	  	m(r+1) & \geq & a_0 + 2(\sum_{i=1}^{s+1} a_i) + 3 (n-\sum_{i=0}^{s+1} a_i), \label{Ineq_star2}
	  	\eea 
	  	If $\ell_2 \leq s$ then $a_i=0$ $\forall \ell_2 \leq i \leq s$. If $C$ is an empty matrix then $a_{s+1}=0$. Hence the inequality \eqref{Ineq_star2} is true irrespective of whether $\ell_2=s+1$ or $\ell_2 \leq s$ (even if $C$ is an empty matrix).
	  	\bea
	  	m(r+1) & \geq & 3n-2a_0 -(\sum_{i=1}^{s+1} a_i). \label{Ineq5O} 
	  	\eea
	  	
	  	Our basic inequalities are \eqref{Ineq1O},\eqref{Ineq2O},\eqref{Ineq3O},\eqref{OddSpe},\eqref{Ineq_star2}. We manipulate these 5 inequalities to derive the bound on rate.
	  		\paragraph{\bf Step 3: Algebraic manipulation of the 5 linear inequalities to derive the bound on rate}
	  	Substituting \eqref{Ineq4O} in \eqref{Ineq5O}:
	  	\bea
	  	m(r+1) & \geq & 3n-2a_0 -(\sum_{i=1}^{s} a_i) - \left(\frac{(m-\sum_{i=0}^{s-1} a_i) (r+1)-a_{s}}{2} \right). \label{Ineq6O}
	  	\eea
	  	For $s=0$, \eqref{Ineq6O} becomes:
	  	\bea
	  	m(r+1) & \geq & 3n-2a_0 - \left(\frac{m (r+1)-a_{0}}{2} \right), \notag \\        m\frac{3(r+1)}{2} & \geq & 3n-\frac{3}{2}a_0. \label{Ineq7O} 
	  	\eea
	  	Substituting  \eqref{OddSpe} in \eqref{Ineq7O}:
	  	\bea
	  	m\frac{3(r+1)}{2} & \geq & 3n-\frac{3}{2}\rho_0 (r+1), \label{Ineq8O} 
	  	\eea
	  	Substituting  \eqref{Ineq2O} in \eqref{Ineq8O}:
	  	\bea
	  	m\frac{3(r+1)}{2} & \geq & 3n-\frac{3}{2}(m-p) (r+1), \notag \\
	  	\text{Since $p \geq 0$, } 3m(r+1) & \geq & 3n. \label{Ineq9O}
	  	\eea
	  	\eqref{Ineq9O} implies,
	  	\bea
	  	\frac{k}{n} \leq \frac{r}{r+1}. \label{Ineq10O}
	  	\eea
	  	\eqref{Ineq10O} proves the bound \eqref{Thm2} for $s=0$. Hence from now on we assume $s \geq 1$.\\
	  	For $s \geq 1$, \eqref{Ineq6O} implies:
	  	\bea
	  	m \frac{3(r+1)}{2} & \geq & 3n+ a_0 \left( \frac{r+1}{2}-2 \right) +(\sum_{i=1}^{s-1} a_i) \left( \frac{r+1}{2}-1 \right) - \frac{a_{s}}{2}. \label{Ineq11O} 
	  	\eea
	  	Substituting  \eqref{Ineq1O} in \eqref{Ineq11O} and since $r \geq 3$:    
	  	\bea
	  	m \frac{3(r+1)}{2} & \geq & 3n+ \frac{a_{s}}{r^{s}} \left( \frac{r+1}{2}-2 \right) +(\sum_{i=1}^{s-1} \frac{a_{s}}{r^{s-i}}) \left( \frac{r+1}{2}-1 \right) - \frac{a_{s}}{2},  \notag \\
	  	m \frac{3(r+1)}{2} & \geq & 3n+ a_{s} \left( \left(\sum_{i=1}^{s} \frac{1}{r^i} \right) \left(\frac{r+1}{2}-1\right) -\frac{1}{r^{s}}-\frac{1}{2} \right),  \notag \\
	  	m \frac{3(r+1)}{2} & \geq & 3n- a_{s} \left(\frac{3}{2 r^{s}}\right). \label{Ineq12O}
	  	\eea
	  	Rewriting \eqref{Ineq2O}:
	  	\bea
	  	\rho_{s}+\sum_{i=0}^{s-1} a_i + p =m.  \label{Ineq13O}
	  	\eea
	  	Substituting \eqref{OddSpe},\eqref{Ineq1O} in \eqref{Ineq13O}:
	  	\bea
	  	\rho_{s}+\sum_{i=0}^{s-1} a_i + p &=& m, \notag \\
	  	\frac{a_{s}}{r+1}+\sum_{i=0}^{s-1} \frac{a_{s}}{r^{s-i}} & \leq & m-p, \notag \\
	  	a_{s} \leq \frac{m-p}{\frac{1}{r+1}+\sum_{i=1}^{s} \frac{1}{r^i}}, \notag \\
	  	a_{s} \leq \frac{(m-p)(r+1)}{1+\frac{(r^{s}-1)(r+1)}{(r^{s})(r-1)} }. \label{Ineq14O}
	  	\eea
	  	Substituting \eqref{Ineq14O} in \eqref{Ineq12O}:
	  	\bea
	  	m \frac{3(r+1)}{2} & \geq & 3n- \frac{(m-p)(r+1)}{1+\frac{(r^{s}-1)(r+1)}{(r^{s})(r-1)} } \left(\frac{3}{2 r^{s}}\right), \notag \\
	  	\text{Since $p \geq 0$, } m \frac{3(r+1)}{2} \left( 1+\frac{1}{r^{s}+\frac{(r^{s}-1)(r+1)}{(r-1)}} \right) & \geq & 3n. \label{Ineq15O} 
	  	\eea
	  	\eqref{Ineq15O} after some algebraic manipulations gives the required upper bound on $1-\frac{m}{n}$ and hence gives the required upper bound on $\frac{k}{n}$ as stated in the theorem.
	  	\paragraph{\bf Conditions for equality in \eqref{Thm2}}
	  	Note that for achieving the upper bound on rate given in \eqref{Thm2}, a seq-LRC must have a parity check matrix $H$ (upto a permutation of columns) of the form given in \eqref{Hform2} with parameters such that the inequalities given in \eqref{Ineq1O},\eqref{Ineq2O},\eqref{Ineq3O},\eqref{OddSpe},\eqref{Ineq_star2} become equalities with $p=0$ and $D$ must be an empty matrix i.e., no columns of weight $\geq 3$ (because once all these inequalities become equalities with $p=0$, the sub matrix of $H$ obtained by restricting $H$ to the columns with weights $1,2$ will have each row of weight exactly $r+1$ and hence no non-zero entry can occur outside the columns having weights $1,2$ for achieving the upper bound on rate). Note that for achieving the upper bound on rate, $C$ must also be an empty matrix. This is because inequality \eqref{OddSpe} must become an equality which implies that $D_s$ is a matrix with each row of weight $r+1$ and we also saw that $p=0$. Hence $C$ must be a zero matrix which implies $C$ is an empty matrix. Hence it can be seen that an $(n,k,r,t)_{\text{seq}}$ code achieving the upper bound on rate \eqref{Thm2} must have a parity check matrix (upto a permutation of columns) of the form given in \eqref{eq:Hmatrixtodd_ch3_App}.
	  	
	  	\bea
	  	H & = & \left[
	  	\begin{array}{c|c|c|c|c|c|c}
	  		D_0 & A_1 & 0 & 0 & \hdots & 0 & 0   \\
	  		\cline{1-7}
	  		0 & D_1 & A_2 & 0 & \hdots & 0 & 0  \\
	  		\cline{1-7}
	  		0 & 0 & D_2 & A_3 & \hdots & 0 & 0   \\
	  		\cline{1-7}
	  		0 & 0 & 0 & D_3 & \hdots & 0 & 0   \\
	  		\cline{1-7}
	  		\vdots & \vdots & \vdots & \vdots & \ddots & \vdots & \vdots  \\
	  		\cline{1-7}
	  		0 & 0 & 0 & 0 & \hdots & A_{s-1} & 0   \\
	  		\cline{1-7}
	  		0 & 0 & 0 & 0 & \hdots & D_{s-1} & A_{s}   \\
	  		\cline{1-7}
	  		0 & 0 & 0 & 0 & \hdots & 0 & D_{s} 
	  	\end{array} \right] \label{eq:Hmatrixtodd_ch3_App},
	  	\eea
	  	
		\section{An Alternative, Linear-Programming-Based Derivation of Theorem \ref{thm:rate_both}} \label{app:linprog}

	  An alternative proof of Theorem \ref{thm:rate_both} using linear programming is given in this appendix. This approach based on linear programming uses the inequalities on $a_0, a_1,\ldots, p$ (given in Appendix \ref{Appendix_rate_bounds}) to calculate an upper bound on dimension of a seq-LRC for a given $n,r,t$. Hence for the cases when the upper bound on dimension is achievable, solving the linear programming problem under integer variable constraint gives the parameters $a_0, a_1,\ldots, p$ which can be used to construct dimension-optimal codes.
	  \subsubsection{Case (i) $t$ even}
	  Our basic inequalities are \eqref{Ineq1},\eqref{Ineq2},\eqref{Ineq3},\eqref{Ineq5}. The inequalities \eqref{Ineq1},\eqref{Ineq3} and \eqref{Ineq5}, are linear inequalities and are written in matrix form, after substituting \eqref{Ineq2} in \eqref{Ineq5}, as\footnote{Suppose $\mathbf{\underline{x}} = [x_1  x_2 \ldots x_n]^T $ and $\mathbf{\underline{y}} = [y_1  y_2 \ldots  y_n]^T $, then $\mathbf{\underline{x}} \geq \mathbf{\underline{y}}$ denotes that $x_i \geq y_i$, $\forall$ $1 \leq i \leq n$}:
	  \begin{equation*}
	  A\mathbf{\underline{x}} \geq \mathbf{\underline{b}}
	  \end{equation*}
	  where
	  \bea
	  A & = & \left[
	  \begin{array}{c c c c c c c c}
	  	r & -1 & 0 & \hdots & 0 & 0 & 0 & 0 \\
	  	0 & r & -1 & \hdots & 0 & 0 & 0 & 0 \\
	  	\vdots & \vdots & \vdots & \ddots & \vdots & \vdots & \vdots & \vdots \\
	  	0 & 0 & 0 & \hdots & r & -1 & 0 & 0 \\
	  	0 & 0 & 0 & \hdots & 0 & r & -2 & (r+1) \\
	  	(r+3) & (r+2) & (r+2) & \hdots & (r+2) & (r+2) & 1 & (r+1)
	  \end{array} \right] \label{Amat}
	  \eea
	  which is a $(\frac{t}{2} + 1) \times (\frac{t}{2} + 2) $ matrix and
	  \bea 
	  \mathbf{\underline{x}} = \left[ \begin{array}{c c c c c}
	  	a_0 & a_1 & \hdots & a_{\frac{t}{2}} & p
	  \end{array} \right]^T, \mathbf{\underline{b}} = \left[ \begin{array}{c c c c c}
	  	0 &	0 &	\hdots & 0 & 3n
	  \end{array}\right]^T
	  \eea
	  where $\mathbf{\underline{x}}$ is a $(\frac{t}{2} + 2) \times 1$ matrix and $\mathbf{\underline{b}}$ is a $(\frac{t}{2} + 1) \times 1$ matrix.
	  The problem of finding an upper bound on rate of the code now becomes one of minimizing $m = \mathbf{\underline{c}}^T\mathbf{\underline{x}}$, which is a linear objective function where $\mathbf{\underline{c}} = \left[ \begin{array}{c c c c c c}
	  1 & 1 & \hdots & 1 & 0 & 1
	  \end{array}\right]^T$ is a $(\frac{t}{2} + 2) \times 1$ matrix. Also by definition of $\mathbf{\underline{x}}$, $\mathbf{\underline{x}} \geq \mathbf{\underline{0}}$ (the all-zero vector). 	
	  Hereon we follow the terminology in the book \cite{ChongZak}. We restate certain relevant terms and results from the book here:\\
	  
	  Consider a linear programming problem of the form:
	  \begin{align*}
	  \text{minimize } & \mathbf{\underline{c}}^T\mathbf{\underline{x}}\\
	  \text{s.t. } A\mathbf{\underline{x}} & = \mathbf{\underline{b}}, \mathbf{\underline{x}} \geq \mathbf{\underline{0}}
	  \end{align*}
	  \textbf{Basic solution:} Consider the equalities ${A\underline{\mathbf{x}}=\underline{\mathbf{b}}}, A\in \mathbb{R}^{m_0\times n_0}$. Let $m_0$ be the rank of $A$. Let $B$ be a non-singular submatrix of $A$ such that it has the same number of rows as $A$. The system $B\underline{\mathbf{x}_B}=\underline{\mathbf{b}}$ can be solved as $\underline{\mathbf{x}_B}=B^{-1}\underline{\mathbf{b}}$. Let $\underline{\mathbf{x}}=[{\underline{\mathbf{x}_B}}^T \underline{\mathbf{0}}^T]^T$, then $\underline{\mathbf{x}}$ is a solution to ${A\underline{\mathbf{x}}=\underline{\mathbf{b}}}$. Then the vector $[{\underline{\mathbf{x}_B}}^T \underline{\mathbf{0}}^T]^T$ is called a \textit{basic solution} to $A\underline{\mathbf{x}}=\underline{\mathbf{b}}$ with respect to the basis $B$. The components of the vector $\underline{\mathbf{x}_B}$ are called \textit{basic variables} and the columns of $B$ are called \textit{basic columns}.\\
	  \textbf{Feasible solution:} A vector $\underline{\mathbf{x}}$ satisfying $A\underline{\mathbf{x}}=\underline{\mathbf{b}}$ and $\underline{\mathbf{x}}\geq \underline{\mathbf{0}}$ is called a \textit{feasible solution}.\\
	  \textbf{Basic feasible solution:} A feasible solution that is also basic is called a \textit{basic feasible solution}.\\
	  \textbf{Optimal feasible solution:} Any vector $\underline{\mathbf{x}}$ that yields the minimum value of the objective function $\mathbf{\underline{c}}^T\mathbf{\underline{x}}$ over the set of vectors satisfying the constraints $A\mathbf{\underline{x}} = \mathbf{\underline{b}}$ and  $\mathbf{\underline{x}} \geq \mathbf{\underline{0}}$ is said to be an \textit{optimal feasible solution}.\\
	  \textbf{Optimal basic feasible solution:} An optimal feasible solution that is basic is called an \textit{optimal basic feasible solution}.\\
	  \begin{thm}{\cite{ChongZak}}{\textbf{Fundamental theorem of linear programming}\\}
	  	Consider a linear program in the form
	  	\begin{align*}
	  	\text{minimize } & \mathbf{\underline{c}}^T\mathbf{\underline{x}}\\
	  	\text{s.t. } A\mathbf{\underline{x}} & = \mathbf{\underline{b}}, \mathbf{\underline{x}} \geq \mathbf{\underline{0}}
	  	\end{align*}
	  	\begin{enumerate}
	  		\item If there exists a feasible solution, then there exists a basic feasible solution.
	  		\item If there exists an optimal feasible solution, then there exists an optimal basic feasible solution.
	  	\end{enumerate}
  	  \label{thm:fundamental thm linprog}
	  \end{thm}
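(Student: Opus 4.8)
The plan is to prove both statements simultaneously by a single ``support-shrinking'' argument: starting from a feasible point, repeatedly move along directions in the null space of $A$ so as to kill one positive coordinate at a time, without leaving the feasible region; for the second statement, one arranges in addition that these moves leave the objective value unchanged.

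For part~1, I would begin with an arbitrary feasible $\underline{\mathbf{x}}$ and, after renumbering coordinates, assume its positive components are $x_1,\dots,x_p>0$ with $x_{p+1}=\cdots=x_{n_0}=0$; let $\underline{a}_1,\dots,\underline{a}_p$ denote the corresponding columns of $A$. If $\underline{a}_1,\dots,\underline{a}_p$ are linearly independent then $p\le m_0$, and since $A$ has full row rank $m_0$ these columns extend to $m_0$ linearly independent columns forming a nonsingular submatrix $B$; because $\underline{\mathbf{x}}$ vanishes on every column outside $B$, it is already a basic feasible solution. Otherwise pick a nonzero $\underline{\mathbf{y}}=(y_1,\dots,y_p,0,\dots,0)^T$ with $A\underline{\mathbf{y}}=\underline{\mathbf{0}}$; then $A(\underline{\mathbf{x}}-\varepsilon\underline{\mathbf{y}})=\underline{\mathbf{b}}$ for all $\varepsilon$, and since $x_1,\dots,x_p>0$ one can choose the sign of $\varepsilon$ according to the sign pattern of $\underline{\mathbf{y}}$ and then a smallest-magnitude $\varepsilon^\ast$ at which $\underline{\mathbf{x}}-\varepsilon^\ast\underline{\mathbf{y}}\ge\underline{\mathbf{0}}$ still holds while at least one of the first $p$ coordinates has become zero. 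This yields a feasible point with strictly fewer positive coordinates, so iterating the step at most $p$ times produces a basic feasible solution.

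For part~2, I would run the very same reduction starting from an optimal feasible solution $\underline{\mathbf{x}}^\ast$, adding the observation that the chosen direction $\underline{\mathbf{y}}$ must satisfy $\underline{\mathbf{c}}^T\underline{\mathbf{y}}=0$: since $\underline{\mathbf{x}}^\ast-\varepsilon\underline{\mathbf{y}}$ remains feasible for all $\varepsilon$ of sufficiently small magnitude of \emph{either} sign and has objective value $\underline{\mathbf{c}}^T\underline{\mathbf{x}}^\ast-\varepsilon\,\underline{\mathbf{c}}^T\underline{\mathbf{y}}$, a nonzero value of $\underline{\mathbf{c}}^T\underline{\mathbf{y}}$ would permit a strict decrease of the objective, contradicting optimality. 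Hence every reduction step preserves both feasibility and optimality, and after finitely many steps we land on a basic feasible solution that is still optimal.

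I do not expect a genuine obstacle here, as this is the classical proof of the fundamental theorem; the only points requiring a little care are the bookkeeping in the reduction step---namely that the support of $\underline{\mathbf{y}}$ is contained in the support of $\underline{\mathbf{x}}$, which is exactly what lets us pick $\varepsilon^\ast$ with nonnegativity preserved---and the terminating step, where linear independence of the active columns is upgraded to an honest basis using the full row rank of $A$.
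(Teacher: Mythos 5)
Your argument is correct, but note that the paper does not prove this theorem at all: it is quoted verbatim from the cited reference \cite{ChongZak} and used as a black box in the linear-programming appendix. What you give is the classical support-shrinking proof (move along a null-space direction supported on the positive coordinates to zero out a component, and for optimality observe that $\underline{\mathbf{c}}^T\underline{\mathbf{y}}=0$ since perturbations of either sign stay feasible), which is precisely the standard textbook argument and is sound, including your use of the full-row-rank assumption that the paper builds into its definition of a basic solution.
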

	  Due to above theorem, the task of solving a linear programming problem is reduced to searching over a finite set of basic solutions.\\
	  \begin{thm}{\cite{ChongZak}}
	  	A basic feasible solution is optimal if and only if the corresponding reduced cost coefficients are all non-negative.
	  	\label{thm:optimal basic feasible solution}
	  \end{thm}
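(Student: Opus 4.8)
The statement is the classical optimality criterion underlying the simplex method, and the plan is to reproduce the short, self-contained argument (as in \cite{ChongZak}) rather than to invoke duality. First I would reorder the columns of $A$ so that the $m_0$ basic columns come first and write $A = [\,B \mid N\,]$, with the conformal partitions $\underline{\mathbf{x}} = [\,\underline{\mathbf{x}}_B^T \ \underline{\mathbf{x}}_N^T\,]^T$ and $\underline{\mathbf{c}} = [\,\underline{\mathbf{c}}_B^T \ \underline{\mathbf{c}}_N^T\,]^T$. The basic feasible solution under consideration is $\underline{\mathbf{x}}^\star = [\,(B^{-1}\underline{\mathbf{b}})^T \ \underline{\mathbf{0}}^T\,]^T$, where $B^{-1}\underline{\mathbf{b}} \geq \underline{\mathbf{0}}$ by feasibility. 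I would then define the reduced-cost row vector $\underline{\mathbf{r}}^T = \underline{\mathbf{c}}^T - \underline{\mathbf{c}}_B^T B^{-1} A$; its entries indexed by basic columns vanish identically, and its nonbasic part is $\underline{\mathbf{r}}_N^T = \underline{\mathbf{c}}_N^T - \underline{\mathbf{c}}_B^T B^{-1} N$. The ``reduced cost coefficients'' in the theorem are the entries of $\underline{\mathbf{r}}$.

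The next step is the key algebraic identity. For an arbitrary feasible $\underline{\mathbf{x}}$ one has $B\underline{\mathbf{x}}_B + N\underline{\mathbf{x}}_N = \underline{\mathbf{b}}$, hence $\underline{\mathbf{x}}_B = B^{-1}\underline{\mathbf{b}} - B^{-1}N\underline{\mathbf{x}}_N$; substituting into $\underline{\mathbf{c}}^T\underline{\mathbf{x}} = \underline{\mathbf{c}}_B^T\underline{\mathbf{x}}_B + \underline{\mathbf{c}}_N^T\underline{\mathbf{x}}_N$ gives
\[
\underline{\mathbf{c}}^T\underline{\mathbf{x}} \;=\; \underline{\mathbf{c}}_B^T B^{-1}\underline{\mathbf{b}} + \underline{\mathbf{r}}_N^T\underline{\mathbf{x}}_N \;=\; \underline{\mathbf{c}}^T\underline{\mathbf{x}}^\star + \underline{\mathbf{r}}_N^T\underline{\mathbf{x}}_N .
\]
For the ``if'' direction this immediately finishes the argument: if every reduced cost coefficient is nonnegative then, since any feasible point has $\underline{\mathbf{x}}_N \geq \underline{\mathbf{0}}$, the correction term $\underline{\mathbf{r}}_N^T\underline{\mathbf{x}}_N$ is nonnegative, so $\underline{\mathbf{c}}^T\underline{\mathbf{x}} \geq \underline{\mathbf{c}}^T\underline{\mathbf{x}}^\star$ for all feasible $\underline{\mathbf{x}}$, i.e.\ $\underline{\mathbf{x}}^\star$ is an optimal feasible solution.

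For the ``only if'' direction I would argue by contraposition. Assuming some nonbasic reduced cost satisfies $r_j < 0$, I would perturb $\underline{\mathbf{x}}^\star$ by increasing only the $j$-th nonbasic coordinate to a parameter $\theta > 0$ and setting $\underline{\mathbf{x}}_B(\theta) = B^{-1}\underline{\mathbf{b}} - \theta\,B^{-1}\underline{\mathbf{a}}_j$, where $\underline{\mathbf{a}}_j$ is the $j$-th column of $A$; by the identity above the objective at this point equals $\underline{\mathbf{c}}^T\underline{\mathbf{x}}^\star + \theta r_j < \underline{\mathbf{c}}^T\underline{\mathbf{x}}^\star$. What remains is to keep the point feasible, $\underline{\mathbf{x}}_B(\theta) \geq \underline{\mathbf{0}}$ for some $\theta>0$: under the nondegeneracy hypothesis maintained in \cite{ChongZak} for this criterion (all entries of $B^{-1}\underline{\mathbf{b}}$ strictly positive) all basic coordinates remain nonnegative for every sufficiently small $\theta>0$, producing a strictly better feasible solution and contradicting optimality of $\underline{\mathbf{x}}^\star$.

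The main obstacle, and the reason the clean biconditional needs care, is exactly degeneracy in this last half: if some basic coordinate is zero, the increase of $x_j$ may be blocked at $\theta = 0$, so a degenerate optimal basic feasible solution can still carry a negative reduced cost (a one-constraint instance with a single feasible point already exhibits this). The honest write-up therefore either adopts the nondegeneracy assumption under which the ``if and only if'' is stated in \cite{ChongZak}, or replaces the ``only if'' half by the always-valid assertion that a negative reduced cost yields a feasible descent direction, which, together with Theorem~\ref{thm:fundamental thm linprog}, is all that is needed to justify the simplex iterations invoked later.
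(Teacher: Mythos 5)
The paper itself offers no proof of this statement: it is quoted directly from the textbook reference \cite{ChongZak} and used as a black box, so there is no internal argument to compare yours against. Your write-up is the standard proof and is essentially correct. The partition $A=[\,B\mid N\,]$, the identity $\underline{c}^T\underline{x}=\underline{c}^T\underline{x}^{\star}+\underline{r}_N^T\underline{x}_N$ valid for every feasible $\underline{x}$, and the observation that nonnegative reduced costs make the correction term nonnegative establish the ``if'' direction unconditionally --- and that is the only direction the paper actually uses, namely in Appendix B to certify that the exhibited basic feasible solution of the dual, whose reduced cost coefficients are shown to satisfy $r_{\alpha_i}\geq 0$, is optimal, after which strong duality gives the bound on $m$. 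Your handling of the ``only if'' half is also the honest one: with respect to a fixed basis the converse genuinely requires nondegeneracy, since a degenerate optimal basic feasible solution can carry a negative reduced cost (your one-constraint example is exactly the standard counterexample), so one must either carry the nondegeneracy hypothesis under which the biconditional is stated in \cite{ChongZak} or weaken the converse to the assertion that a negative reduced cost produces a feasible descent direction. Because the paper invokes only the sufficiency direction, this caveat has no effect on anything downstream in the rate-bound derivation.
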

	  The quantity \textit{reduced cost coefficient} will be defined later.\\
	  \textbf{Primal problem and Dual problem:} Consider a linear programming problem of the form
	  \begin{align*}
	  \text{minimize } & \mathbf{\underline{c}}^T\mathbf{\underline{x}}\\
	  \text{s.t. } A\mathbf{\underline{x}} & \geq \mathbf{\underline{b}}, \mathbf{\underline{x}} \geq \mathbf{\underline{0}}
	  \end{align*}
	  The above problem is referred to as the \textit{primal problem}. The \textit{dual problem} is defined as below:
	  \begin{align*}
	  \text{maximize } & \mathbf{\underline{b}}^T\mathbf{\underline{\lambda}}\\
	  \text{s.t. } A^T\mathbf{\underline{\lambda}} & \leq \mathbf{\underline{c}}, \mathbf{\underline{\lambda}} \geq \mathbf{\underline{0}}
	  \end{align*}
	  \begin{thm}{\cite{ChongZak}}{\textbf{Strong Duality}\\}
	  	If the primal problem has an optimal solution, then so does the dual, and the optimal values of their respective objective functions are equal.
	  	\label{thm:strong duality}
	  \end{thm}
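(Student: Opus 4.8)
The plan is to deduce Strong Duality from two ingredients that are already on the table: the elementary weak-duality inequality, and the characterization of optimal basic feasible solutions via their reduced cost coefficients (Theorems~\ref{thm:fundamental thm linprog} and \ref{thm:optimal basic feasible solution}).

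First I would record weak duality. For any $\underline{x}$ feasible for the primal ($A\underline{x} \geq \underline{b}$, $\underline{x} \geq \underline{0}$) and any $\underline{\lambda}$ feasible for the dual ($A^T\underline{\lambda} \leq \underline{c}$, $\underline{\lambda} \geq \underline{0}$),
\[
\underline{b}^T\underline{\lambda} \ \leq \ (A\underline{x})^T\underline{\lambda} \ = \ \underline{x}^T(A^T\underline{\lambda}) \ \leq \ \underline{x}^T\underline{c} \ = \ \underline{c}^T\underline{x},
\]
the first step using $A\underline{x} \geq \underline{b}$ together with $\underline{\lambda} \geq \underline{0}$, and the last using $A^T\underline{\lambda} \leq \underline{c}$ together with $\underline{x} \geq \underline{0}$. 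In particular, any dual-feasible $\underline{\lambda}$ whose objective value equals the primal optimum is automatically dual-optimal, so it suffices to exhibit one such $\underline{\lambda}$.

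Next I would pass to standard form by introducing surplus variables $\underline{z} \geq \underline{0}$, so that $A\underline{x} \geq \underline{b}$ becomes $A\underline{x} - \underline{z} = \underline{b}$; writing $\underline{y} = [\underline{x}^T\ \underline{z}^T]^T$, $\tilde A = [\,A \mid -I\,]$ and $\tilde{\underline{c}} = [\underline{c}^T\ \underline{0}^T]^T$, the primal reads: minimize $\tilde{\underline{c}}^T\underline{y}$ subject to $\tilde A\underline{y} = \underline{b}$, $\underline{y} \geq \underline{0}$. Since the primal has an optimal solution, so does this reformulation, and by the Fundamental Theorem of Linear Programming (Theorem~\ref{thm:fundamental thm linprog}) there is an optimal basic feasible solution $\underline{y}^\star$ with some basis matrix $B$ (a nonsingular submatrix of $\tilde A$). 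By Theorem~\ref{thm:optimal basic feasible solution} the associated reduced cost coefficients satisfy $\tilde{\underline{c}}^T - \underline{c}_B^T B^{-1}\tilde A \geq \underline{0}^T$ componentwise, where $\underline{c}_B$ is the subvector of $\tilde{\underline{c}}$ indexed by the basic columns.

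Finally I would define $\underline{\lambda}^\star := (B^{-1})^T\underline{c}_B$, so that ${\underline{\lambda}^\star}^T = \underline{c}_B^T B^{-1}$, and verify dual feasibility by splitting the reduced-cost inequality: its restriction to the columns of $\tilde A$ coming from $A$ yields $A^T\underline{\lambda}^\star \leq \underline{c}$, while its restriction to the columns coming from $-I$ (which carry zero cost) yields $\underline{\lambda}^\star \geq \underline{0}$. The dual objective at this point is $\underline{b}^T\underline{\lambda}^\star = \underline{c}_B^T B^{-1}\underline{b} = \underline{c}_B^T\underline{y}^\star_B = \tilde{\underline{c}}^T\underline{y}^\star$, which is exactly the primal optimal value; weak duality then forces $\underline{\lambda}^\star$ to be dual-optimal, completing the argument. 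The one place demanding care — and the step I expect to be the main obstacle — is the index bookkeeping in this last paragraph: identifying the basic columns of $\tilde A$, correctly partitioning the reduced-cost vector into an ``$A$-block'' and a ``$(-I)$-block'', and checking that nonnegativity on the $(-I)$-block is literally the inequality $\underline{\lambda}^\star \geq \underline{0}$. Everything else is either weak duality or a verbatim invocation of the two cited linear-programming theorems.
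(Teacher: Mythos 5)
The paper does not prove this theorem; it is quoted as a known result from \cite{ChongZak}, so there is no internal argument to compare against, and your proof is correct: weak duality, passage to standard form, an optimal basic feasible solution via Theorem \ref{thm:fundamental thm linprog}, nonnegative reduced costs via Theorem \ref{thm:optimal basic feasible solution}, and the dual certificate $\underline{\lambda}^\star = (B^{-1})^T\underline{c}_B$, with the block bookkeeping on the $[\,A \mid -I\,]$ columns working out exactly as you describe --- this is essentially the argument in the cited textbook. The one caveat worth recording is that the ``optimal $\Rightarrow$ nonnegative reduced costs'' direction of Theorem \ref{thm:optimal basic feasible solution} can fail for a particular basis at a degenerate optimum, so a fully self-contained treatment would instead extract a basis with nonnegative reduced costs from termination of the simplex method under an anti-cycling rule; taking that theorem as stated in the paper, however, your argument is complete.
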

  	  \begin{cor}
  	  	The primal problem has an optimal solution is and only if the dual problem has an optimal solution
  	  \end{cor}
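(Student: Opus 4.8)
The plan is to derive the corollary directly from the Strong Duality theorem (Theorem~\ref{thm:strong duality}) together with the elementary observation that the dual of the dual is again the primal. The forward implication requires nothing new: if the primal problem $\min\, \underline{\mathbf{c}}^T\underline{\mathbf{x}}$ subject to $A\underline{\mathbf{x}} \geq \underline{\mathbf{b}}$, $\underline{\mathbf{x}} \geq \underline{\mathbf{0}}$ has an optimal solution, then by Theorem~\ref{thm:strong duality} so does the dual $\max\, \underline{\mathbf{b}}^T\underline{\mathbf{\lambda}}$ subject to $A^T\underline{\mathbf{\lambda}} \leq \underline{\mathbf{c}}$, $\underline{\mathbf{\lambda}} \geq \underline{\mathbf{0}}$ (indeed with equal optimal value). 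So the work is entirely in the reverse implication.

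For the reverse implication, the first step is to rewrite the dual, which is stated as a maximization, as a minimization in the standard form to which Theorem~\ref{thm:strong duality} applies. Negating objective and constraint, the dual is equivalent to $\min\, (-\underline{\mathbf{b}})^T\underline{\mathbf{\lambda}}$ subject to $(-A^T)\underline{\mathbf{\lambda}} \geq -\underline{\mathbf{c}}$, $\underline{\mathbf{\lambda}} \geq \underline{\mathbf{0}}$. Regarding this as a ``primal'' problem with data $(\tilde A,\tilde{\underline{\mathbf{b}}},\tilde{\underline{\mathbf{c}}}) = (-A^T,-\underline{\mathbf{c}},-\underline{\mathbf{b}})$, its dual, by the definition stated just before the corollary, is $\max\, \tilde{\underline{\mathbf{b}}}^T\underline{\mathbf{\mu}} = \max\,(-\underline{\mathbf{c}})^T\underline{\mathbf{\mu}}$ subject to $\tilde A^T\underline{\mathbf{\mu}} = -A\underline{\mathbf{\mu}} \leq -\underline{\mathbf{b}}$, $\underline{\mathbf{\mu}} \geq \underline{\mathbf{0}}$, which is the same as $\min\, \underline{\mathbf{c}}^T\underline{\mathbf{\mu}}$ subject to $A\underline{\mathbf{\mu}} \geq \underline{\mathbf{b}}$, $\underline{\mathbf{\mu}} \geq \underline{\mathbf{0}}$; that is, precisely the original primal problem. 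Hence the dual of the dual coincides with the primal.

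The second step is then immediate: apply Theorem~\ref{thm:strong duality} to the dual problem, now viewed as a primal in standard form. Since by hypothesis the dual has an optimal solution, Theorem~\ref{thm:strong duality} gives that \emph{its} dual, namely the original primal, has an optimal solution, which is what we wanted. The one place requiring care, and the main obstacle, is the sign bookkeeping in the first step: one must track how negating an objective and reversing inequality directions interact with transposition when forming the dual of the transformed problem, so as to check that the double dual collapses exactly to $\min\, \underline{\mathbf{c}}^T\underline{\mathbf{x}}$ with $A\underline{\mathbf{x}} \geq \underline{\mathbf{b}}$, $\underline{\mathbf{x}} \geq \underline{\mathbf{0}}$ and not to some perturbed problem. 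As a consistency check one may note that weak duality gives $\underline{\mathbf{c}}^T\underline{\mathbf{x}} \geq \underline{\mathbf{b}}^T\underline{\mathbf{\lambda}}$ for all primal-feasible $\underline{\mathbf{x}}$ and dual-feasible $\underline{\mathbf{\lambda}}$, so an optimal dual solution $\underline{\mathbf{\lambda}}^*$ bounds the primal objective below by $\underline{\mathbf{b}}^T\underline{\mathbf{\lambda}}^*$; together with the Fundamental Theorem (Theorem~\ref{thm:fundamental thm linprog}, after conversion to equality form via slack variables) this re-derives the existence of an optimal primal solution once primal feasibility is granted.
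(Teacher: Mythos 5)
Your proof is correct and follows essentially the same route as the paper: the paper's proof also observes that the dual of the dual is the primal and then invokes Theorem~\ref{thm:strong duality}. The only difference is that you carry out explicitly the sign bookkeeping that the paper summarizes as ``it can be verified,'' and your verification of the double-dual identity is accurate.
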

      \begin{proof}
      	It can be verified that the dual of the dual problem is the primal problem itself. Then the proof follows from Theorem \ref{thm:strong duality}
      \end{proof}
	  The above fact is made use of in solving the rate-bound as a linear program.\\
	  The problem at hand is now in a standard form of a linear program formulation as:
	  \begin{align*}
	  \text{minimize } & \mathbf{\underline{c}}^T\mathbf{\underline{x}}\\
	  \text{s.t. } A\mathbf{\underline{x}} & \geq \mathbf{\underline{b}}, \mathbf{\underline{x}} \geq \mathbf{\underline{0}}
	  \end{align*}
	  where
	  \bea
	  A & = & \left[
	  \begin{array}{c c c c c c c c}
	  	r & -1 & 0 & \hdots & 0 & 0 & 0 & 0 \\
	  	0 & r & -1 & \hdots & 0 & 0 & 0 & 0 \\
	  	\vdots & \vdots & \vdots & \ddots & \vdots & \vdots & \vdots & \vdots \\
	  	0 & 0 & 0 & \hdots & r & -1 & 0 & 0 \\
	  	0 & 0 & 0 & \hdots & 0 & r & -2 & (r+1) \\
	  	(r+3) & (r+2) & (r+2) & \hdots & (r+2) & (r+2) & 1 & (r+1)
	  \end{array} \right] \label{Amat}
	  \eea
	  which is a $(\frac{t}{2} + 1) \times (\frac{t}{2} + 2) $ matrix and
	  \bea 
	  \mathbf{\underline{x}} = \left[ \begin{array}{c c c c c}
	  	a_0 & a_1 & \hdots & a_{\frac{t}{2}} & p
	  \end{array} \right]^T, \mathbf{\underline{b}} = \left[ \begin{array}{c c c c c}
	  	0 &	0 &	\hdots & 0 & 3n
	  \end{array}\right]^T
	  \eea
	  where $\mathbf{\underline{x}}$ is a $(\frac{t}{2} + 2) \times 1$ matrix and $\mathbf{\underline{b}}$ is a $(\frac{t}{2} + 1) \times 1$ matrix, and $\mathbf{\underline{c}} = \left[ \begin{array}{c c c c c c}
	  1 & 1 & \hdots & 1 & 0 & 1
	  \end{array}\right]^T$ is a $(\frac{t}{2} + 2) \times 1$ matrix.
	  
	  The dual problem of the above is
	  \begin{align*}
	  \text{maximize } & \mathbf{\underline{b}}^T\mathbf{\underline{\lambda}}\\
	  \text{s.t. } A^T\mathbf{\underline{\lambda}} & \leq \mathbf{\underline{c}}, \mathbf{\underline{\lambda}} \geq \mathbf{\underline{0}}
	  \end{align*}
	  We will solve the dual problem by writing it in standard ``$\text{minimize}$ $-\mathbf{\underline{b}}^T\underline{\lambda}$" form. We solve the dual problem since the steps involved in deriving a closed form solution were simpler for the dual compared to the primal problem. For example, notice that, the vector $\underline{\mathbf{b}}$ in the objective function of the dual problem has only one non-zero entry. Let us introduce \textit{slack variables} $s_1,\ldots,s_{\frac{t}{2}+2}$ and re-write the constraints as
	  \begin{align*}
	  B\mathbf{\underline{v}} = \mathbf{\underline{c}}, \text{ }\mathbf{\underline{v}} \geq \mathbf{\underline{0}},
	  \end{align*}
	  where
	  \bea
	  B & = & \left[
	  \begin{array}{c c c c c c c c c c c c c}
	  	r & 0 & 0 & \hdots & 0 & (r+3) & 0 & 1 & 0 & 0 & \hdots & 0 & 0 \\
	  	-1 & r & 0 & \hdots & 0 & (r+2) & 0 & 0 & 1 & 0 &\hdots & 0 & 0\\
	  	0 & -1 & r & \hdots & 0 & (r+2) & 0 & 0 & 0 & 1 &\hdots & 0 & 0\\
	  	\vdots & \vdots & \ddots & \ddots & \vdots & \vdots & \vdots & \vdots & \vdots & \vdots & \ddots & \vdots & \vdots\\
	  	0 & 0 & \hdots & -1 & r & (r+2) & 0 & 0 & 0 & 0 & \hdots & 1 & 0 \\
	  	0 & 0 & 0 & \hdots & -2 & 1 & 0 & 0 & 0 & 0 & \hdots & 0 & 1 \\
	  	0 & 0 & 0 & \hdots & (r+1) & (r+1) & 1 & 0 & 0 & 0 & \hdots & 0 & 0 \\ 
	  \end{array} \right] \label{Amat},
	  \eea
	  and 
	  \bea 
	  \mathbf{\underline{v}} = \left[ \begin{array}{c c c c c c c}
	  	\lambda_1 & \lambda_2 & \hdots & \lambda_{\frac{t}{2}+1} & s_1 & \hdots & s_{\frac{t}{2}+2}
	  \end{array} \right]^T.
	  \eea
	  With this, the objective function now is $\mathbf{\underline{d}}^T\mathbf{\underline{v}}$, where $\mathbf{\underline{d}} = \left[\begin{array}{c c c c c}
	  -\mathbf{\underline{b}}^T & 0 & 0 & \hdots & 0
	  \end{array} \right]^T$ which is a $(t+3) \times 1$ matrix.
	  Define $\mathbf{\underline{v}}_{BV}=[\lambda_1,...,\lambda_{\frac{t}{2}+1},s_1]^T$.
	  
	  We pick the variables $\beta_1=\lambda_1$,...,$\beta_{\frac{t}{2}+1}=\lambda_{\frac{t}{2}+1}$, $\beta_{\frac{t}{2}+2}=s_1$ as ``basic variables" and the rest, called ``non-basic variables" $\alpha_1=s_2,...,\alpha_{\frac{t}{2}+1}=s_{\frac{t}{2}+2}$ will be set to $0$. The set of basic variables is chosen such that the matrix formed by the columns of $B$ corresponding to these basic variables is a full-rank square matrix $B_{BV}$. This fact for the chosen basic variables can be verified. The remaining columns of $B$ will give a matrix $B_{NBV}$. The system of equations is now in the following form:
	  \begin{align*}
	  \left[\begin{array}{c c}
	  B_{BV} & B_{NBV}
	  \end{array}\right]
	  \left[\begin{array}{c}
	  \mathbf{\underline{v}}_{BV} \\
	  \hline
	  \mathbf{\underline{0}}
	  \end{array}
	  \right] = \mathbf{\underline{c}}
	  \end{align*}
	  Therefore we will equivalently solve
	  \begin{align*}
	  B_{BV}\mathbf{\underline{v}}_{BV} = \mathbf{\underline{c}}
	  \end{align*}
	  The above system of equations can be solved in closed form to get the following:
	  \begin{align}
	  \lambda_{\frac{t}{2}+1} & = \frac{2\sum_{i = 0}^{\frac{t}{2}-1}r^i}{3(r^{\frac{t}{2}}+2\sum_{i = 0}^{\frac{t}{2}-1}r^i)}, \label{LP1} \\
	  s_1 & = \frac{1}{r^{\frac{t}{2}}+2\sum_{i = 0}^{\frac{t}{2}-1}r^i}, \label{LP2}\\
	  \lambda_{j+1} & = \frac{r^{\frac{t}{2}} - 3r^{\frac{t}{2}-(j+1)} + 2}{3(r-1)(r^{\frac{t}{2}}+2\sum_{i = 0}^{\frac{t}{2}-1}r^i)},\text{   for }0 \leq j \leq \frac{t}{2}-1 \label{LP3}
	  \end{align}
	  which are non-negative if $r \geq 3$. Hence the solution given by \eqref{LP1},\eqref{LP2},\eqref{LP3} is a basic (with respect to the basis formed by the columns of matrix $B$ corresponding to the basic variables) feasible solution. Let the elements of the vector $\mathbf{\underline{d}}$ be indexed by the elements of the vector $\mathbf{\underline{v}}$ i.e., $i^{th}$ component of $\mathbf{\underline{d}}$ is indexed by $i^{th}$ component of $\mathbf{\underline{v}}$. To check for optimality of the basic solution, as per Theorem \ref{thm:optimal basic feasible solution}, we check if the ``reduced cost coefficients" $r_{\alpha_i} = d_{\alpha_i} - z_{\alpha_i}$ are non-negative, for every non-basic variable $\alpha_i,1\leq i \leq \frac{t}{2}+1$. We note that for the above made choice of basic and non-basic variables, in the vector $\mathbf{\underline{d}}$ only $d_{\beta_{\frac{t}{2}+1}} = -3n$ is non-zero. The quantity $z_{\alpha_i}$ is defined as follows:
	  \begin{equation*}
	  z_{\alpha_i} = \sum_{j = 1}^{{\frac{t}{2}+2}}d_{\beta_j}y_{(j,\alpha_i)} = d_{\beta_{\frac{t}{2}+1}}y_{(\frac{t}{2}+1,\alpha_i)} = -3ny_{(\frac{t}{2}+1,\alpha_i)}
	  \end{equation*}
	  where $y_{(\frac{t}{2}+1,\alpha_i)}$ are as shown in the row reduced echelon form of matrix B below:
	  \begin{align*}
	  B_{rref} = \left[\scalemath{0.8}{\begin{array}{c c c c c c c c c c c c c}
	  	1 & 0 & 0 & \hdots & 0 & 0 & 0 & y_{(1,\alpha_1)} & y_{(1,\alpha_2)} & y_{(1,\alpha_3)} & \hdots & y_{(1,\alpha_{\frac{t}{2}})} & y_{(1,\alpha_{\frac{t}{2}+1})} \\
	  	0 & 1 & 0 & \hdots & 0 & 0 & 0 & y_{(2,\alpha_1)} & y_{(2,\alpha_2)} & y_{(2,\alpha_3)} & \hdots & y_{(2,\alpha_{\frac{t}{2}})} & y_{(2,\alpha_{\frac{t}{2}+1})}\\
	  	0 & 0 & 1 & \hdots & 0 & 0 & 0 & y_{(3,\alpha_1)} & y_{(3,\alpha_2)} & y_{(3,\alpha_3)} & \hdots & y_{(3,\alpha_{\frac{t}{2}})} & y_{(3,\alpha_{\frac{t}{2}+1})}\\
	  	\vdots & \vdots & \vdots & \ddots & \vdots & \vdots & \vdots & \vdots & \vdots & \vdots & \ddots & \vdots & \vdots\\
	  	0 & 0 & 0 & \hdots & 1 & 0 & 0 & y_{(\frac{t}{2},\alpha_1)} & y_{(\frac{t}{2},\alpha_2)} & y_{(\frac{t}{2},\alpha_3)} & \hdots & y_{(\frac{t}{2},\alpha_{\frac{t}{2}})} & y_{(\frac{t}{2},\alpha_{\frac{t}{2}+1})}\\
	  	0 & 0 & 0 & \hdots & 0 & 1 & 0 & y_{(\frac{t}{2}+1,\alpha_1)} & y_{(\frac{t}{2}+1,\alpha_2)} & y_{(\frac{t}{2}+1,\alpha_3)} & \hdots & y_{(\frac{t}{2}+1,\alpha_{\frac{t}{2}})} & y_{(\frac{t}{2}+1,\alpha_{\frac{t}{2}+1})}\\
	  	0 & 0 & 0 & \hdots & 0 & 0 & 1 & y_{(\frac{t}{2}+2,\alpha_1)} & y_{(\frac{t}{2}+2,\alpha_2)} & y_{(\frac{t}{2}+2,\alpha_3)} & \hdots & y_{(\frac{t}{2}+2,\alpha_{\frac{t}{2}})} & y_{(\frac{t}{2}+2,\alpha_{\frac{t}{2}+1})}\\ 
	  	\end{array}}
	  \right]
	  \end{align*}
	  It can be observed that, in going from $B$ to $B_{rref}$, to row-$(\frac{t}{2}+1)$ of $B$, only non-negative linear combinations of the rows above it in $B$ are added, entries of which are either $0$ or $1$. Hence $y_{(\frac{t}{2}+1,\alpha_1)},...,y_{(\frac{t}{2}+1,\alpha_{\frac{t}{2}+1})} \geq 0$. Therefore $r_{\alpha_i} \geq 0$ for $\alpha_i$ all non-basic variables. Hence, by Theorem \ref{thm:optimal basic feasible solution} the basic solution given by \eqref{LP1}, \eqref{LP2} and \eqref{LP3} is an ``optimal basic feasible" solution.\\
	  By the theorem of strong duality (Theorem \ref{thm:strong duality}) the optimal values of objective functions of the primal problem and the dual problem are equal. Therefore the minimum value of $m$ is 
	  \begin{equation}
	  m \geq 3n\lambda_{\frac{t}{2}+1} = \frac{n2\sum_{i = 0}^{\frac{t}{2}-1}r^i}{(r^{\frac{t}{2}}+2\sum_{i = 0}^{\frac{t}{2}-1}r^i)}\label{m_min_value}
	  \end{equation}.
	  Hence we get the upper bound on the rate:
	  \begin{equation*}
	  \frac{k}{n} = 1 - \frac{m}{n} \leq \frac{r^{\frac{t}{2}}}{r^{\frac{t}{2}} + 2 \sum\limits_{i=0}^{\frac{t}{2}-1} r^i}
	  \end{equation*}
	  We now pick a solution for the primal problem and show that it is feasible and gives the optimal objective function value.
	  \begin{align}
	  a_i & = \frac{2nr^i}{r^{\frac{t}{2}}+2\sum_{i = 0}^{\frac{t}{2}-1}r^i},\text{ for }0 \leq i \leq \frac{t}{2}-1\label{optimalai}\\
	  a_{\frac{t}{2}} & = \frac{nr^{\frac{t}{2}}}{r^{\frac{t}{2}}+2\sum_{i = 0}^{\frac{t}{2}-1}r^i},\text{   }
	  p = 0.\label{optimalaip}
	  \end{align}
	  It is easy to check that this solution satisfies the constraints of the primal problem with equality. Therefore the chosen solution is a feasible solution. It is also easy to check that the solution gives the optimal value of the objective function. Hence it is an optimal feasible solution. We thus conclude that a code having the above chosen values will have the optimal rate.
\end{proof}
\begin{note}
A proof along similar lines exists for the case of odd $t$. It is skipped here. 
\end{note}

	  		\section{Coloring \gbase\ with $(r+1)$ Colours} \label{Gbasecolouring_ch4}
	  		
	  		In this section of the appendix, we show how to construct the base graph \gbase\ in such a way that its edges can be colored using $(r+1)$ colors. In our construction, for $t$ odd, we set $a_0=(r+1)$ which is the smallest possible value of $a_0$ by Theorem \ref{thm:t_odd_azero}.  For the case of $t$ even, we set $a_0=4$.   We begin with a key ingredient that we make use of in the construction, namely, that the edges of a regular bipartite graph of degree $d$ can be colored with $d$ colors. 
	  		
	  		\begin{thm} \label{bipartiteColouring}
	  			The edges of a regular bipartite graph $G$ of degree $d$ can be coloured with exactly $d$ colours such that each edge is associated with a colour and adjacent edges of the graph does not have the same colour. 
	  			Let ${\cal E}_i$ be the set of edges of $G$ of color $i$, then ${\cal E}_1,\ldots,{\cal E}_d$ is a collection of $d$ pairwise disjoint set of edges each of which is a perfect matchings of the graph $G$.
	  		\end{thm}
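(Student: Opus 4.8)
The statement is König's edge-colouring theorem specialised to the regular case, and the plan is to prove it constructively by exhibiting a decomposition of the edge set into $d$ pairwise disjoint perfect matchings and then reading off the colouring. First I would fix notation: write $G=(L\cup R,E)$ for the given $d$-regular bipartite graph, with $L,R$ the two sides. Double-counting the edges gives $d|L|=|E|=d|R|$, hence $|L|=|R|$; I will use this equality repeatedly.

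The key step is to show that $G$ contains a perfect matching, for which I would apply Hall's marriage theorem to the bipartition $(L,R)$. Given any subset $S\subseteq L$, all $d|S|$ edges incident to $S$ have their other endpoint in $N(S)\subseteq R$; since every vertex of $N(S)$ has degree exactly $d$, it is incident to at most $d$ of those edges, so $d|S|\le d|N(S)|$, i.e. $|N(S)|\ge |S|$. Thus Hall's condition holds and there is a matching ${\cal E}_d$ saturating every vertex of $L$; as $|L|=|R|$, this matching is in fact perfect. Next I would induct on $d$. The base case $d=1$ is immediate, since a $1$-regular bipartite graph is itself a perfect matching coloured with a single colour. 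For the inductive step, delete the edges of ${\cal E}_d$ from $G$; the remaining graph $G'$ is $(d-1)$-regular bipartite on the same vertex set, so by the induction hypothesis its edge set partitions into $d-1$ pairwise disjoint perfect matchings ${\cal E}_1,\dots,{\cal E}_{d-1}$. Together with ${\cal E}_d$ these form $d$ pairwise disjoint perfect matchings of $G$ whose union is $E$.

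Finally I would assign colour $i$ to every edge in ${\cal E}_i$. Because the ${\cal E}_i$ partition $E$, every edge receives exactly one colour; because each ${\cal E}_i$ is a matching, no two edges sharing a vertex receive the same colour; and because each ${\cal E}_i$ is a perfect matching of the (non-empty, since $d\ge 1$) graph $G$, it is non-empty, so exactly $d$ colours are used. This simultaneously gives the required $d$-edge-colouring and identifies ${\cal E}_1,\dots,{\cal E}_d$ as the asserted collection of $d$ pairwise disjoint perfect matchings. The only delicate point is the verification of Hall's condition; everything else is routine bookkeeping, so I do not anticipate a real obstacle. (Alternatively, one could simply invoke König's theorem, but the inductive argument above is self-contained and is the version I would include.)
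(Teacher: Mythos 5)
Your proposal is correct and follows essentially the same route as the paper: obtain a perfect matching via Hall's theorem, peel it off to leave a $(d-1)$-regular bipartite graph, iterate (your induction is the same peeling argument), and then colour the $i$th matching with colour $i$. The only difference is that you explicitly verify Hall's condition and the equality $|L|=|R|$, details the paper leaves to the citation of Hall's theorem.
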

	  		\begin{proof}
	  			By Hall's Theorem (1935) \cite{dieRich}, there exists a perfect matching in the $d$-regular bipartite graph $G$. We first identify this perfect matching and then remove the edges corresponding to this perfect matching from $G$ to form a $(d-1)$-regular bipartite graph to which we can once again apply Hall's Theorem and so on until we have partitioned the edges of the bipartite graph $G$ into the disjoint union of $d$ perfect matchings.  
	  			To conclude the proof, we simply choose $d$ different colors say $\{1,2,\ldots,d\}$ and color each edge in $i$th perfect matching with color $i$, $\forall 1 \leq i \leq d$. 
	  		\end{proof}
	  		
	  		\subsection{The Construction of \gbase\ for $t$ Odd}
	  		
	  		The aim here is to show that we can construct the base graph \gbase\ (recall that \gbase\ has the same structure as \gzero) and color the edges of it using exactly $(r+1)$ colors such that adjacent edges does not have the same color.  
	  		
	  		In the case of $t$ odd, we set $a_0=r+1$.  From Remark~\ref{note:bipartite_count}, it follows that if we set $a_0=(r+1)$, then 
	  		\bean
	  		|V_{s-1}|r & = & |V_{s}|(r+1),
	  		\eean
	  		and hence from \eqref{eq:Vsminus1_ch4}, \eqref{eq:Vs_ch4}, it is possible to connect nodes in $V_{s-1}$ to nodes in $V_s$ so that \gsmone\ is a bipartite graph with the two sets of nodes in \gsmone\  being equal to $V_{s-1},V_s$ respectively where each node in $V_{s-1}$ is of degree $r$ and each node in $V_{s}$ is of degree $(r+1)$. The argument for the construction of such \gsmone\ is straight forward and we skip the description. Since the graph \gbase\ is completely specified once \gsmone\ and $a_0$ are specified as mentioned in Section \ref{tOdd_Graph_Desc}, \gbase\ can be constructed with $a_0=r+1$. Let us recall that if we add a node $V_{\infty}$ to \gbase\ and connect $V_{\infty}$ to all the nodes in $V_0$, we will recover the graph \ginf.  It is easily seen that \ginf\ is an $(r+1)$-regular graph.  By grouping together nodes in alternate layers in \ginf\ i.e., by letting $U_1 = \{V_{\infty}\} \cup V_1 \cup V_3 \cup \ldots$ and $U_2 = V_{0} \cup V_2 \cup V_4 \cup \ldots$, it can be verified that \ginf\ is in fact, an $(r+1)$-regular {\em bipartite} graph with node-set $U_1$ on the left and node-set $U_2$ to the right.  Hence by Theorem \ref{bipartiteColouring}, the edges of the graph \ginf\ and hence the edges of the graph \gbase\ can be colored with exactly $r+1$ colors.
	  		
	  		\subsection{The Construction of \gbase\ for $t$ Even}
	  		Pick an arbitrary graph \gbase\ (recall that \gbase\ has the same structure as \gzero) with $a_0=4$. Note that $s \geq 2$, $r \geq 3$. Since the only freedom lies in edge set $E_{s+1}$, which connects nodes in $V_s$ according to a regular graph of degree $r$, it can be seen that a graph  \gbase\ with $a_0=4$ can be constructed because we can construct a regular graph with degree $r$ with $|V_s|$ nodes as $|V_s|$ is even and $|V_s| \geq r+1$. Since we do not care about girth at this point, the argument for the construction of such regular graph is straightforward and we skip the description. Now we will reconstruct the edge set $E_{s+1}$ so that \gbase\ can be colored with $r+1$ colors. But before that we color the edges other than $E_{s+1}$.
	  		Let $T_v$ be the tree (a subgraph of \gbase) with root node $v \in V_0$ formed by all the paths of length at most $s$ starting from $v$ i.e., $T_v$ is the subgraph of \gbase\ induced by the vertices which are at distance atmost $s$ from $v$ where we remove in this induced subgraph all the edges which are in $E_{s+1}$. 
	  		Let $V(T_v)$ be the vertex set of $T_v$. The nodes $V(T_v) \cap V_i$ are at depth $i$ in $T_v$ with the root node $v$ at depth 0.  We now color the edges of the tree $T_v$ with the $r+1$ colors $\{1,\ldots,r+1\}$. It is clear that such a coloring of edges of $T_v$ can be done since $T_v$ is an $r$-ary tree. Since $T_v$ is an $r$-ary tree, the color of any edge of $T_v$ is frozen to a fixed color once the colors of edges incident on $v$ is frozen to some fixed colors. There are $r$ edges $\{e_1,\ldots,e_r\}$ incident on $v$ in $T_v$. Let the color of $e_i$ be $i$, $\forall i \in [r]$. Hence there is no edge of color $r+1$ incident on $v$. 
	  		\ben
	  		\item Let $j$ be one of the $r$ colors, $1 \leq j \leq r$. Let $X^j_i$ be the largest subset of $V(T_v) \cap V_i$ (nodes at depth $i$) such that each node in $X^j_i$ is connected by an edge of color $j$ to a distinct node at depth $(i-1)$ i.e., a node in $V(T_v) \cap V_{i-1}$. Let $|X^j_i| = x^j_i$. Let $Y_i$ be the largest subset of $V(T_v) \cap V_i$ (nodes at depth $i$) such that each node in $Y_i$ is connected by an edge of color $(r+1)$ to a distinct node in $V(T_v) \cap V_{i-1}$ (nodes at depth $i-1$). Let $|Y_i|=y_i$. It is clear that $x^1_i = x^2_i = \ldots = x^r_i$. We set $x^j_i = x_i$. It can be verified that:
	  		\bean
	  		y_i & = & \left\{ \begin{array}{rl} x_i-1 & \text{if $i$ is odd}, \\  x_i+1 & \text{if $i$ is even} \end{array} \right. ,
	  		\eean
	  		because it can be seen that $x_{i+1}=(r-1)x_i+y_i$ and $y_{i+1} = rx_i$.
	  		Hence $|V(T_v) \cap V_i| = r^i = x_i r + y_i$. It follows that 
	  		\bean
	  		x_i & = & \left\{ \begin{array}{rl} \frac{r^i+1}{r+1} & \text{if $i$ is odd}, \\  \frac{r^i-1}{r+1} & \text{if $i$ is even} \end{array} \right. .
	  		\eean
	  		Since the set $X^j_s$ depends on $v$, from now on we denote it as $X^j_{s,v}$. Since the set $Y_s$ depends on $v$, from now on we denote it as $Y_{s,v}$.    
	  		\item Next, since $a_0=4$, let $V_0=\{v_1,v_2,v_3,v_4\}$. Hence there are $4$ such trees $T_{v_1},T_{v_2},T_{v_3},T_{v_4}$. Color the edges of the tree $T_{v_{\ell}}$ with $r+1$ colors as before such that there is no edge of color $r+1$ incident on $v_{\ell}$, $\forall \ell \in [4]$. So far we have colored all edges except edges in $E_{s+1}$. We now remove all the edges in $E_{s+1}$ and construct another edge set $E_{s+1}$ so that it can be colored such that \gbase\ can be colored with $r+1$ colors. An edge in $E_{s+1}$ which is incident on any node in $X^j_{s,v_{\ell}} (\subseteq V_s)$ cannot be colored with color $j$ but every color from $[r+1] \setminus \{ j\}$ can be used to color it, $\forall \ell \in [4]$, $\forall j \in [r]$.  Similarly an edge in $E_{s+1}$ which is incident on any node in $Y_{s,v_{\ell}} (\subseteq V_s)$ cannot be colored with color $(r+1)$, $\forall \ell \in [4]$. We can connect the set of nodes $\cup_{\ell=1}^{4}X^j_{s,v_{\ell}}$ of size $4x_s$ (or the set of nodes $\cup_{\ell=1}^{4}Y_{s,v_{\ell}}$ nodes of size $4y_s$) to form bipartite graph of degree $r$.  This is possible because we can first construct a regular graph of degree $r$ with $2x_s$ or $2y_s$ nodes (such a regular graph can be constructed as $\min\{2x_s,2y_s\} \geq (r+1)$ because $s \geq 2$, $r \geq 3$. Since we do not care about girth at this point, the argument for the construction of such regular graph is straightforward and we skip the description.), and then create two copies of the vertex set of this regular graph to create the left and right nodes of a bipartite graph $G$.  The edges of the bipartite graph can then be formed by connecting vertices in accordance with the regular graph, i.e., if nodes $w_1$ and $w_2$ were connected in the regular graph, then node $w_1$ on the left is connected to node $w_2$ on the right in the bipartite graph and vice versa. Hence a bipartite graph $G$ of degree $r$ with $4x_s$ or $4y_s$ nodes can be constructed. Now connect the $4x_s$ nodes in $\cup_{\ell=1}^{4}X^j_{s,v_{\ell}}$ in accordance with this bipartite graph $G$ which also has $4x_s$ nodes and color the edges of this graph by colors from $[r+1]\setminus \{j\}$, $\forall j \in [r]$. This is possible by Theorem \ref{bipartiteColouring}. Similarly connect the $4y_s$ nodes in $\cup_{\ell=1}^{4}Y_{s,v_{\ell}}$ in accordance with this bipartite graph $G$ which also has $4y_s$ nodes and color the edges of this graph by colors from $[r+1]\setminus \{r+1\}$. This is again possible by Theorem \ref{bipartiteColouring}. The edges of these $(r+1)$ bipartite graphs form the edge set $E_{s+1}$. The construction is complete as we have constructed \gbase\ and connected the nodes in $V_s = (\cup_{j=1}^{r} \cup_{\ell=1}^{4}X^j_{s,v_{\ell}}) \cup (\cup_{\ell=1}^{4}Y_{s,v_{\ell}})$ according to an edge set $E_{s+1}$ and colored the edges of \gbase\ using $(r+1)$ colors.      
	  		\een

	\bibliographystyle{IEEEtran}
	\bibliography{bib_file}

\end{document}